\definecolor{darkgreen}{rgb}{0,0.5,0}
\theoremstyle{plain}
\newtheorem{theorem}{Theorem}
\numberwithin{theorem}{section}
\newtheorem{lemma}[theorem]{Lemma}
\newtheorem{corollary}[theorem]{Corollary}
\newtheorem{claim}[theorem]{Claim}
\theoremstyle{definition}
\newtheorem{definition}[theorem]{Definition}
\newtheorem*{remark}{Remark}
\newtheorem{remarknumbered}[theorem]{Remark}
\newcommand{\eqn}[1]{\hyperref[eq:#1]{Eq.~\ref*{eq:#1}}}
\newcommand{\rem}[1]{\hyperref[rem:#1]{Remark~\ref*{rem:#1}}}
\newcommand{\clm}[1]{\hyperref[clm:#1]{Claim~\ref*{clm:#1}}}
\newcommand{\thm}[1]{\hyperref[thm:#1]{Theorem~\ref*{thm:#1}}}
\newcommand{\cor}[1]{\hyperref[cor:#1]{Corollary~\ref*{cor:#1}}}
\newcommand{\defn}[1]{\hyperref[def:#1]{Definition~\ref*{def:#1}}}
\newcommand{\assump}[1]{\hyperref[assump:#1]{Assumption~\ref*{assump:#1}}}
\newcommand{\lem}[1]{\hyperref[lem:#1]{Lemma~\ref*{lem:#1}}}
\newcommand{\prop}[1]{\hyperref[prop:#1]{Proposition~\ref*{prop:#1}}}
\newcommand{\fig}[1]{\hyperref[fig:#1]{Figure~\ref*{fig:#1}}}
\newcommand{\tab}[1]{\hyperref[tab:#1]{Table~\ref*{tab:#1}}}
\newcommand{\algo}[1]{\hyperref[algo:#1]{Algorithm~\ref*{algo:#1}}}
\renewcommand{\sec}[1]{\hyperref[sec:#1]{Section~\ref*{sec:#1}}}
\newcommand{\append}[1]{\hyperref[append:#1]{Appendix~\ref*{append:#1}}}
\newcommand{\fac}[1]{\hyperref[fac:#1]{Fact~\ref*{fac:#1}}}
\newcommand{\lin}[1]{\hyperref[lin:#1]{Line~\ref*{lin:#1}}}
\newcommand{\fnote}[1]{\hyperref[fnote:#1]{Footnote~\ref*{fnote:#1}}}
\def\>{\rangle}
\def\<{\langle}
\newcommand{\vect}[1]{\ensuremath{\boldsymbol{#1}}}
\renewcommand{\S}{\mathcal{S}}
\renewcommand{\d}{\mathrm{d}}
\renewcommand{\epsilon}{\varepsilon}
\newcommand{\bigO}[1]{
    \mathcal{O}\!\left({#1}\right)
}
\newcommand{\bigOdep}[2]{
    \mathcal{O}_{#1}\!\left({#2}\right)
}
\newcommand{\bigOmega}[1]{
    {\Omega}\!\left({#1}\right)
}
\newcommand{\sdg}{Schr\"{o}dinger}
\newcommand{\stoqma}{\normalfont \textsf{StoqMA}}
\newcommand{\bqp}{\normalfont \textsf{BQP}}
\newcommand{\bpp}{\normalfont \textsf{BPP}}
\newcommand{\pea}{\normalfont \textsf{P}}
\newcommand{\ma}{\normalfont \textsf{MA}}
\newcommand{\qma}{\normalfont \textsf{QMA}}
\newcommand{\np}{\normalfont \textsf{NP}}
\renewcommand\bra[1]{{\langle{#1}|}}
\renewcommand\ket[1]{
  \@ifnextchar\bra{\k@t{#1}\!}{\k@t{#1}}
}
\newcommand\k@t[1]{{|{#1}\rangle}}
\numberwithin{equation}{section}
\newcommand{\agmonsymbol}{d}
\newcommand{\agmonsymbolandparams}[2]{
    \agmonsymbol_{#1,#2}
}
\newcommand{\agmonfull}[4]{
    \agmonsymbolandparams{#1}{#2}(#3,#4)
}
\newcommand{\agmon}[2]{
    \agmonsymbol(#1,#2)
}
\newcommand{\ball}[3]{
    B_{#1}(#2,#3)
}
\newcommand{\ballopen}[3]{
    \mathring{B}_{#1}(#2,#3)
}
\newcommand{\dee}{
    \mathrm{d}
}
\newcommand{\logiczero}{
    \widehat{s_0}
}
\newcommand{\logicone}{
    \widehat{s_1}
}
\newcommand{\logicvar}[1]{
    \widehat{s_{#1}}
}
\newcommand{\mineigen}{
    \mu_{\rm min}
}
\newcommand{\maxeigen}{
    \mu_{\rm max}
}
\newcommand{\dwfunc}{
    f_{\rm dw}
}
\newcommand{\eigen}{
    \lambda
}
\newcommand{\sdgsim}{{\hypersetup{pdfborder={0 0 1}, linkcolor=.}\hyperref[def:sdgsim]{\normalfont\textsc{\sdg Sim}}}}
\DeclareMathOperator{\poly}{poly}
\DeclareMathOperator{\polylog}{polylog}
\DeclareMathOperator{\spn}{span}
\DeclareMathOperator{\dom}{dom}
\DeclareMathOperator{\diag}{diag}
\DeclareMathOperator{\sgn}{sgn}
\begin{document}

\title{On the Computational Complexity of \sdg\ Operators}
\author[1,2] {Yufan Zheng}
\author[2,3,4] {Jiaqi Leng}
\author[5,6] {Yizhou Liu}
\author[1,2,$\dagger$] {Xiaodi Wu}

\affil[1]{Department of Computer Science, University of Maryland}
\affil[2]{Joint Center for Quantum Information and Computer Science, University of Maryland}
\affil[3]{Department of Mathematics, University of Maryland}
\affil[4]{Simons Institute for the Theory of Computing and Department of Mathematics, UC Berkeley}
\affil[5]{Physics of Living Systems, Department of Physics, MIT}
\affil[6]{Department of Mechanical Engineering, MIT}
\affil[$\dagger$]{\href{mailto:xiaodiwu@umd.edu}{xiaodiwu@umd.edu}}

\date{}

\maketitle

\begin{abstract}
    We study computational problems related to the \sdg\ operator $H = -\Delta + V$ in the real space under the condition that (i) the potential function $V$ is smooth and has its value and derivative bounded within some polynomial of $n$ and (ii) $V$ only consists of $\bigO{1}$-body interactions.
    We prove that (i) simulating the dynamics generated by the \sdg\ operator implements universal quantum computation, i.e., it is $\bqp$-hard, and (ii) estimating the ground energy of the \sdg\ operator is as hard as estimating that of local Hamiltonians with no sign problem (a.k.a. stoquastic Hamiltonians), i.e., it is $\stoqma$-complete.
    This result is particularly intriguing because the ground energy problem for general bosonic Hamiltonians is known to be $\qma$-hard and it is widely believed that $\stoqma \varsubsetneq \qma$.
\end{abstract}

\section{Introduction}

Simulating quantum physical systems is one of the primary applications of quantum computers~\cite{feynman1982simulating,georgescu2014quantum,bauer2020quantum}.
The well-known \sdg\ equation describes the evolution of a quantum system,
$$
    i \frac{\dee}{\dee t} \ket{\Psi(t)} = H \ket{\Psi(t)},
$$
where $\ket{\Psi(t)}$ is the quantum wave function and $H$ is the system Hamiltonian.
For many quantum-mechanical systems, such as those of atoms, molecules, and solids~\cite{girvin2019modern,lewars2011computational}, $H$ takes the form of a \emph{\sdg\ operator}, 
$$
    H = - \Delta + V,
$$
where $\Delta$ is the Laplacian and $V\colon \mathbb{R}^n \to \mathbb{R}$ is a potential function with $n$ being the degree of freedom of the system. The real-space approach also appears in the investigation of quantum field theories~\cite{jordan2012quantum,preskill2019simulating}.

Due to their central role in quantum physics, the properties of \sdg\ operators have been extensively studied by the mathematics community since the 1950s~\cite{gel1951determination,reed1972methods,avron1978schrodinger,simon1982schrodinger,andrews2011proof,kato2013perturbation}. Recently, the theory of \sdg\ operators has found a wide range of applications in computer science, offering insights into classical and quantum optimization algorithms. Shi, Su, and Jordan~\cite{shi2023learning} highlighted the intriguing connections between \sdg\ operators and the stochastic gradient descent algorithm. \sdg\ operators have also emerged in the quantum computing literature as a powerful tool for designing novel quantum optimization algorithms~\cite{zhang2021quantum,liu2023quantum,leng2023qhd,leng2023quantum,chen2023quantum} that are capable to leverage the continuous structure of the objective function in the real space in the most natural way. 
This is in sharp contrast to traditional approaches like Quantum Adiabatic Algorithm or Quantum Approximate Optimization Algorithm which use qudit Hamiltonians~\cite{blekos2024review}.

Given the emerging importance of \sdg\ operators in computer science, it is thus natural to ask the following questions: 
\begin{center}
    \textit{What is the computational power of \sdg\ operators? Is it necessary to build a universal quantum computer to simulate them?}
\end{center}
At first glance, the answers to these questions may seem straightforward, as it is widely believed that general quantum-mechanical systems cannot be efficiently simulated using classical computers~\cite{aaronson2009quantum,a22bqp}. 
However, there exist efficient classical algorithms to simulate and/or compute certain physical properties of seemingly quantum processes, such as non-interacting fermions~\cite{terhal2002classical, knill2001fermionic,valiant2001quantum}, quantum circuit consisting of Clifford gates~\cite{gottesman1998heisenberg}, and ferromagnetic transverse field Ising model~\cite{bravyi2014monte}.
Until rigorous theoretical evidence is provided, we cannot trivially rule out the possibility of an efficient classical algorithm for simulating \sdg\ operators that does not harness the full power of universal quantum computing.

On the other hand, people have shown that many systems are indeed difficult to simulate via the notion of $\bqp$-completeness from computational complexity theory.
In brief, if simulating a system is $\bqp$-complete, then it is as hard as universal quantum computation.
Several quantum systems are shown to be $\bqp$-complete, including scattering in quantum field theory~\cite{jordan2018bqp}, Bose--Hubbard model~\cite{childs2013universal}, and almost all 2-local interactions on 2D square lattices~\cite{zhou2021strongly}. In other words, there is no efficient classical algorithm for any of these systems at all---unless there is one not only working for the specific system but \emph{all} possible quantum systems, i.e., $\bpp=\bqp$.

Besides the simulation of quantum dynamical processes, estimating the ground energy for a given quantum system (Hamiltonian) is another fundamental task in quantum computing, central to the field of Quantum Hamiltonian Complexity~\cite{gharibian2015quantum}.
In the past decades, people have identified the computational complexity of a variety of classes of Hamiltonians.
For many classes, we have strong evidence that they are intractable by results that pinpoint the \emph{exact} complexity classes they belong to.
An earliest example is the class of $k$-local Hamiltonians which can be written as a sum of terms that act on at most $k$ qubits.
Kitaev's seminal result shows that the task is $\qma$-complete for $5$-local Hamiltonians~\cite{kitaev2002classical}, where $\qma$ is the quantum analog of $\mathsf{NP}$~\cite{bookatz2014qma}.
Subsequent works improve this result by adding more and more restrictions: the task remains $\qma$-complete for general $2$-local Hamiltonians~\cite{kempe2006complexity}, $2$-local Hamiltonians on square lattices~\cite{oliveira2008complexity} and most $2$-local Hamiltonians on square lattices with a single fixed type of 2-qubit interactions~\cite{cubitt2018universal,zhou2021strongly};
If we do not restrict the topology, complexity classification results can be obtained for 2-local Hamiltonians~\cite{cubitt2016complexity,piddock2017complexity}.
Other classes of Hamiltonians have also been considered, such as Hamiltonians in one dimension~\cite{aharonov2009power,hallgren2013local}, Hamiltonians with no sign problem (a.k.a. stoquastic Hamiltonians)~\cite{bravyi2006complexity,bravyi2006merlin,bravyi2009complexity}, $\bigO{1}$-local Hamiltonians with succinct ground states~\cite{jiang2023local} or guiding states~\cite{gharibian2022dequantizing,weggemans2023guidable,cade2023improved}, Fermi-Hubbard~\cite{schuch2009computational} and Bose-Hubbard models~\cite{childs2014bose}.

\vspace{1em}
In this paper, we study the computational complexity of the system governed by a time-independent \sdg\ operator $H = -\Delta + V$.
Specifically, we are interested in the hardness of two tasks: (i) simulating the dynamics under $H$ and (ii) estimating the ground energy of $H$, as both tasks are fundamental in physics for a better understanding of corresponding systems~\cite{georgescu2014quantum,bauer2020quantum}.
For technical convenience, we restrict the domain of wave functions from $\mathbb{R}^n$ to a finite hypercube $[-1,1]^n$.
We also impose additional constraints on the regularity and complexity of the potential function $V(\vect{x})$, see \defn{sdgsim}. These constraints are necessary for the practical and physical relevance of our computational model. Technically, any restriction only \emph{strengthens} our computational hardness results.

\subsection{Overview of Results}
\paragraph{Simulating the \sdg\ operator is \bqp-hard.}
We show that the task of simulating the dynamics generated by the \sdg\ operator $H = -\Delta + V$ is \bqp-hard (\thm{main-bqp}) even if $V$ only consists of $2$-body interactions.
We also have strong evidence showing that this task itself is in \bqp, see~\thm{bqp-mem}.\footnote{The reason we did not obtain a \bqp-completeness result is mainly technical---we assume a Dirichlet boundary condition for wave functions but the simulation algorithm of~\cite{childs2022quantum} works under a periodic boundary condition.}
As a consequence, \emph{there is no efficient classical simulation algorithm} for \sdg\ operators, nor for the quantum algorithms involving the \sdg\ dynamics~\cite{zhang2021quantum,liu2023quantum,leng2023qhd,leng2023quantum,chen2023quantum}, unless $\bpp=\bqp$.

We prove the hardness result by reducing the simulation of the (time-independent) transverse field Ising model (TIM) to the task of simulating \sdg\ operators. 
As an interesting byproduct, we also demonstrated that simulating TIM is \bqp-complete (\thm{timbqp}), filling a conceptual gap in Hamiltonian complexity theory, as it has long been believed that TIM cannot be efficiently simulated using classical methods.
A perhaps more interesting interpretation is that implementing \emph{time-independent} TIM is powerful enough for universal quantum computing, which is particularly relevant to quantum utility given that there are quantum computers based on TIM architecture such as D-Wave~\cite{dwave2024}.

\paragraph{Determining the ground energy of the \sdg\ operator is \stoqma-complete.}
We show that for any potential $V$ that involves 2- or higher-order interactions, estimating the ground energy of the \sdg\ operator $H = -\Delta + V$ is \stoqma-complete (\thm{stoqma-mem} and \thm{main-stoqma}).
The complexity class \stoqma\ captures the computational hardness of finding the ground energy for local Hamiltonians with \emph{no sign problem}~\cite{bravyi2006merlin,bravyi2006complexity}, and can be placed between the classes \np\ and \qma. 
Unlike the simulation task, the ground energy problem of \sdg\ operators seems easier than that for general local Hamiltonians, a standard \qma-complete problem.\footnote{This is because $\stoqma \subseteq \mathsf{SBP} \subseteq \mathsf{AM}$~\cite{bravyi2006complexity,bravyi2006merlin}. If $\stoqma=\qma$ we will have an unlikely inclusion $\qma \subseteq \mathsf{AM}$.}

Though somewhat counterintuitive, this result does not contradict known physical principles: the \sdg\ operator essentially describes a \emph{separable} bosonic system, which lacks both the fermionic anti-commutation relation and non-trivial interactions between the kinetic and potential energies. In contrast, the ground energy problem of a general bosonic Hamiltonian is known to be \qma-hard~\cite{childs2014bose}.

\subsection{Technical Contributions}

The fact that the \sdg\ operator is defined in an infinite-dimensional Hilbert space and is unbounded (in the sense of operator $2$-norm) introduces several major technical difficulties, including (i) most existing quantum hardness results primarily rely on computational models with finite local degrees of freedom (e.g., qubit Hamiltonians), and (ii) many standard tools in Hamiltonian complexity theory, such as the Schrieffer--Wolff transformation~\cite{bravyi2011schrieffer}, no longer work in the infinite-dimensional setting. In this section, we briefly discuss our main technical contributions to address the aforementioned questions. These results may also be of independent interest in future research. 

\subsubsection{Encoding qubit states into bosonic modes}
The key step behind our $\bqp$- and $\stoqma$-hardness results is a perturbative reduction (\thm{main}) from transverse field Ising model (TIM) Hamiltonians to \sdg\ operators. 
To be more specific, we embed the whole spectrum of any TIM Hamiltonian $H$ into the low-energy subspace of a \sdg\ operator $\widehat{H}$ in an efficient way.
This embedding allows us to retain all physical properties of $H$, including spectral correspondence and simulation correspondence.\footnote{This resembles the definition of ``Hamiltonian simulation'' in the work of Cubitt, Montanaro and Piddock~\cite{cubitt2018universal}. But we cannot use their formal definition in this paper as theirs concerns reductions among finite dimensional systems consisting of qudits.}

It is worth noting that our constrcution is different from the well-known GKP code~\cite{gottesman2001encoding}. In the GKP code, the logcial codespace encoding $\ket{0}$ and $\ket{1}$ is the simultaneously $+1$ eigenspace of two displacement operators, while the encoding subspace we use in this work is spanned by the first two eigenstates of a \sdg\ operator with a non-quadratic potential field.

The high-level idea of our perturbative reduction is to map $\ket{+}$ and $\ket{-}$ to the ground state $\ket{\psi_0}$ and the first excited state $\ket{\psi_1}$ of a bosonic mode governed by some Hamiltonian $\widehat{X} \coloneqq -\frac{\dee^2}{\dee x^2} +\dwfunc$ where $\dwfunc$ is a symmetric double-well potential.
The correspondence allows us to embed $X$ in the TIM Hamiltonian to $c\widehat{X}$ for some $c$ because $\ket{\pm}$ can be seen as the ground state and the first excited state of $X$.
Furthermore, $\ket{0}$ (resp., $\ket{1}$) is mapped to $\ket{\logiczero} = \frac{1}{\sqrt{2}} (\ket{\psi_0} + \ket{\psi_1})$ (resp., $\ket{\logicone} = \frac{1}{\sqrt{2}} (\ket{\psi_0} - \ket{\psi_1})$).
Note that $\ket{\logicone}$ will be the reflection of $\ket{\logiczero}$ across the y-axis due to the symmetry of $\dwfunc$.
Choosing an appropriate $\dwfunc$ can make $\ket{\logiczero}$ (resp., $\ket{\logicone}$) a wave packet that is approximately confined in the region $x > 0$ (resp., $x < 0$);
See \fig{1}.
The geometric property of $\ket{\logiczero}$ and $\ket{\logicone}$ will thus make the embedding of a $Z_i Z_j$ term in the TIM Hamiltonian much easier: it is easy to imagine that (a smooth version of) the diagonal operator $\sgn(x_1 x_2)$ acting on $\ket{\logicvar{a}}\ket{\logicvar{b}}$ will approximately resemble $Z_1 Z_2$ acting on $\ket{a}\ket{b}$ for $a,b\in \{0,1\}$.

\begin{figure}[H]
    \centering
    \begin{subfigure}{0.4\textwidth}
        \includegraphics[width=\linewidth]{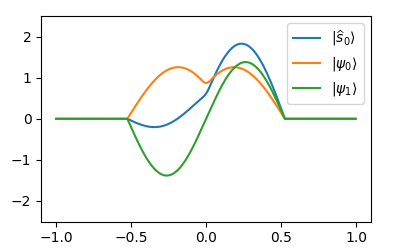}  
        \caption{$s = 100$}
    \end{subfigure}
    \begin{subfigure}{0.4\textwidth}
        \includegraphics[width=\linewidth]{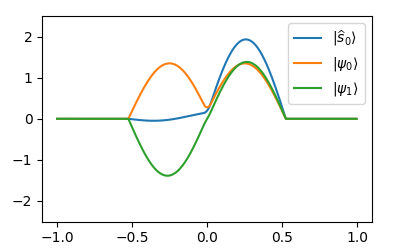}  
        \caption{$s = 500$}
    \end{subfigure}
    \caption{ First two eigenfunctions $\psi_0(x),\psi_1(x)$ and their linear combination $\logiczero(x) = (\psi_0(x)+\psi_1(x))/\sqrt{2}$ for the operator $\widehat{X} = -\frac{\dee^2}{\dee x^2} +\dwfunc$ with parameter $w = 0.05$ and $\ell = 0.5$, where $\dwfunc$ is defined in \sec{proof-outline}.}
    \label{fig:1}
\end{figure}

We believe that this idea can be generalized to embed other Hamiltonians into differential operators which are important in the fields of partial differential equations and functional analysis.

\subsubsection{Substitute of Schrieffer--Wolff theory}

When converting the idea in the previous section into an actual proof, a substitute of Schrieffer--Wolff theory (\lem{pert-sim}) is required to control the behavior of $e^{-iHt}$ when $H$ is approximately block-diagonal.
This is because the original Schrieffer--Wolff theory~\cite{bravyi2011schrieffer} works with a finite-dimensional Hilbert space while our case involves an infinite-dimensional Hilbert space.
Unlike the standard Schrieffer--Wolff theory, our \lem{pert-sim} does not depend on the spectral gap between two diagonal blocks of a Hamiltonian.
Therefore, our argument may be generalized and reused in future research if (i) the underlying Hilbert space is infinite-dimensional and/or (ii) the gap between two blocks cannot be easily controlled. 

\subsection{Organization}
The rest of the paper is organized as follows.
We introduce hardness results for TIM in \sec{tim}, which includes hardness of time-independent simulation and ground energy estimation.
\sec{algo} presents $\bqp$ and $\stoqma$ membership proofs for \sdg\ operators.
Finally, we prove the $\bqp$- and $\stoqma$-hardness results of \sdg\ operators in \sec{hardness}.

\section{Complexity of the transverse field Ising model} \label{sec:tim}
In this section we discuss the complexity of TIM Hamiltonians.
To be more concrete, \thm{tim} and \ref{thm:timbqp} are featured respectively for characterizing the complexity of (i) estimating the ground energy and (ii) time-independent simulation for TIM Hamiltonians which are defined as follows.
\begin{definition}(TIM Hamiltonian)
    An $n$-qubit \emph{TIM Hamiltonian} is that of a {\underline t}ransverse field {\underline I}sing {\underline m}odel taking the following form:
    $$
        H = \sum_{1 \leq u \leq n} \left(h_u X_u + g_u Z_u\right) + \sum_{1 \leq u,v \leq n} g_{u,v} Z_u Z_v,
    $$
    where $h_u,g_u,g_{u,v}$ are real coefficients and $X_u,Z_u$ are Pauli operators acting on the $u$th qubit.
\end{definition}
Let us introduce the definition of stoquastic Hamiltonian and complexity class \stoqma\ for later use.
\begin{definition}($k$-local termwise-stoquastic Hamiltonian)
    We call a Hamiltonian $H$ that acts on the $n$-qubit Hilbert space $\mathbb{C}^{2^n}$ \emph{$k$-local termwise-stoquastic} if it can be written as
    $$
        H = \sum_j H_j,
    $$
    where each $H_j$ acts nontrivially on at most $k$ qubits, and when written in the computational basis, $H_j$'s off-diagonal entries are nonpositive.
\end{definition}

\begin{definition}($\stoqma$~\cite{bravyi2006merlin}, informal)
    The complexity class \stoqma\ is an extension of the classical class $\ma$ where the verifier can accept quantum states as a proof instead of a bitstring in the definition of $\ma$. The verifier can then apply \emph{classical} reversible circuits on the state and measure the first qubit in the $\{\ket{+},\ket{-}\}$ basis.
    It accepts the proof iff the measurement outcome is `+'.
\end{definition}
The prefix ``Stoq'' in \stoqma\ comes from the fact that determining the ground energy of a $k$-local termwise-\underline{stoq}uastic Hamiltonian is a complete problem for \stoqma\ as long as $k \geq 2$~\cite{bravyi2006merlin,bravyi2006complexity}.
This resembles the fact that determining the ground energy of a general $k$-local Hamiltonian is $\qma$-complete for $k \geq 2$~\cite{kempe2006complexity}.

The following result from Brayvi and Hastings~\cite{bravyi2017complexity} characterizes the difficulty of estimating the ground energy for a general TIM Hamiltonian.

\begin{theorem}[\cite{bravyi2017complexity}] \label{thm:tim}
    Determining the ground energy of TIM Hamiltonian is $\stoqma$-complete.
    More specifically, estimating within error $1/\poly(n)$ the smallest eigenvalue of the TIM Hamiltonian
    $$
        H = \sum_{1 \leq u \leq n} (h_u X_u + g_u Z_u) + \sum_{1 \leq u,v \leq n} g_{u,v} Z_u Z_v,
    $$
    with $|h_u|,|g_u|,|g_{u,v}| \leq \poly(n)$ given explicitly, is complete for the complexity class \stoqma\ under \pea\ reductions.
\end{theorem}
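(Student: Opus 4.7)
The plan is to prove the two directions of this $\stoqma$-completeness statement separately: first the containment in $\stoqma$, then the $\stoqma$-hardness under deterministic polynomial-time reductions.

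For membership, the idea is to show that every TIM Hamiltonian is unitarily equivalent to a $2$-local termwise-stoquastic Hamiltonian via a polynomial-time-constructible diagonal Clifford. For each qubit $u$ with $h_u > 0$, conjugating the Hamiltonian by $Z_u$ flips $X_u \mapsto -X_u$ while leaving every $Z_u$ and $Z_u Z_v$ term invariant. Applying $\prod_{u:\, h_u > 0} Z_u$ therefore preserves the spectrum and produces a Hamiltonian whose only off-diagonal contributions, namely the transformed transverse-field terms $h_u' X_u$ with $h_u' \le 0$, have non-positive entries in the computational basis. The transformed operator is then $2$-local termwise-stoquastic with the same ground energy and the same polynomial coefficient bound, so the standard membership result for stoquastic local Hamiltonians of Bravyi, Bessen, and Terhal~\cite{bravyi2006merlin} places the ground energy problem in $\stoqma$.

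For hardness, I would reduce from the canonical $\stoqma$-complete problem: estimating the ground energy of a general $2$-local termwise-stoquastic Hamiltonian~\cite{bravyi2006merlin,bravyi2006complexity}. A generic stoquastic $2$-local term may involve couplings such as $-X_i X_j$, $-Y_i Y_j$, or $-X_i Y_j$, none of which is directly expressible using the TIM alphabet $\{X_u, Z_u, Z_u Z_v\}$ of real single-qubit transverse fields, longitudinal fields, and $ZZ$ couplings. The approach is a perturbation-gadget construction: introduce mediator qubits together with a strong TIM-expressible penalty that lifts a high-energy subspace by a large gap $\Delta = \poly(n)$, and design couplings between logical and mediator qubits (again using only $X$, $Z$, and $ZZ$ terms of $\poly(n)$ magnitude) so that the second- or higher-order terms in the Schrieffer--Wolff expansion of the effective low-energy Hamiltonian reproduce the missing $XX$, $YY$, and $XY$-type interactions with the correct signs. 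Scaling $\Delta$ appropriately drives the perturbative error below the target $1/\poly(n)$ precision for ground energy estimation, yielding a $\pea$-reduction.

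The main obstacle is precisely this gadget design. Each gadget must (a) respect the restricted TIM algebra and the real polynomial coefficient bound, (b) preserve the non-positivity of off-diagonal entries in the low-energy block so that the simulated Hamiltonian remains stoquastic and the reduction lands inside TIM as stated, and (c) control the full perturbative tail, not only the leading order, uniformly across all simultaneously implemented gadgets, whose cross-terms may interfere. Meeting all three simultaneously, especially coaxing an effective $XX$-like term out of a Hamiltonian whose only off-diagonal generators are single-qubit flips, is the technical core of the Bravyi--Hastings reduction and the step I would expect to be the most delicate.
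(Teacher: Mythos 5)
The paper does not actually prove \thm{tim}: it is imported from Bravyi and Hastings~\cite{bravyi2017complexity} and used as a black box, so there is no in-house argument to compare against. That said, your outline is a fair high-level description of the cited proof. The membership half you give is correct and essentially complete: conjugating by $\prod_{u : h_u > 0} Z_u$ sends $X_u \mapsto -X_u$ while fixing every $Z_u$ and $Z_u Z_v$ term, so the resulting (spectrally equivalent) Hamiltonian has all off-diagonal entries nonpositive and is $2$-local termwise-stoquastic, whence the ground-energy problem is in \stoqma\ by~\cite{bravyi2006merlin}. For hardness, reducing from the $\stoqma$-complete $2$-local termwise-stoquastic ground-energy problem via perturbative gadgets built only from $X$, $Z$, and $ZZ$ couplings is indeed the Bravyi--Hastings strategy.

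Two remarks on the hardness sketch. First, your examples of the ``missing'' couplings are slightly off: a termwise-stoquastic term must be real-symmetric in the computational basis, so $-X_i Y_j$ (whose off-diagonal entries are imaginary) cannot appear at all, and $-Y_i Y_j$ on its own is not stoquastic either, since $(-Y\otimes Y)_{00,11} = +1 > 0$; what actually must be engineered are terms such as $-X_i X_j$, mixed diagonal/off-diagonal terms such as $X_i Z_j$, and stoquastic combinations such as $-X_iX_j - Y_iY_j$. Second, as you yourself acknowledge, the gadget construction---with its simultaneous requirements of staying inside the TIM alphabet, preserving stoquasticity of the effective Hamiltonian, and controlling the full (not just leading-order) perturbative tail across all gadgets at once---is the entire technical content of the hardness direction, and your proposal does not supply it. What you have is an accurate roadmap rather than a proof; for this paper's purposes that is moot because the result is cited, but a self-contained re-derivation would require the gadgets.
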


Now let us turn to the complexity of simulating a TIM Hamiltonian in a time-independent fashion.
The problem is not settled in previous literature explicitly to our best knowledge.
However, it turns out that the simulation task is \bqp-complete by virtue of a few reductions and techniques from previous works.
\begin{definition}
    We call a family of Hamiltonians acting on $n$-qubit Hilbert spaces $\mathbb{C}^{2^n}$ \bqp-complete if the following problem is (\textsf{Promise})\bqp-complete under $\pea$ reductions:
    \begin{itemize}
        \item {\bf Input.} An integer $n$, a positive $t \leq \poly(n)$, the description of a Hamiltonian $H$ from the family, a product state $\ket{\psi} \in \mathbb{C}^{2^n}$ with each component being $\bigO{1}$-qubit and a projector $M = \ket{\mu}\bra{\mu} \otimes I$ where $\ket{\mu}$ is an $\bigO{1}$-qubit state.
        \item {\bf Output.} \texttt{Yes} if $\Tr{M\ket{\phi}\bra{\phi}} > 2/3$ or \texttt{No} if $\Tr{M\ket{\phi}\bra{\phi}} < 1/3$ given the promise that one must be the case, where $\ket{\phi} \coloneqq e^{-iHt}\ket{\psi}$.
    \end{itemize}
\end{definition}

\begin{theorem} \label{thm:timbqp}
    TIM Hamiltonians with coefficients bounded in $\poly(n)$ are $\bqp$-complete.
\end{theorem}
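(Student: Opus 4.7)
The plan is to prove $\bqp$-completeness in two parts: $\bqp$ membership via a direct quantum simulation algorithm, and $\bqp$ hardness via a reduction from another $\bqp$-complete time-independent simulation problem.

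For $\bqp$ membership I would observe that an $n$-qubit TIM Hamiltonian is a sum of $\bigO{n^2}$ two-local Pauli terms with coefficients bounded by $\poly(n)$, which makes it amenable to standard Hamiltonian simulation. A high-order Lie--Trotter--Suzuki product formula, or equivalently a linear-combination-of-unitaries / qubitization-based scheme, produces a $\poly(n)$-sized quantum circuit implementing $e^{-iHt}$ to inverse-polynomial error for any $t\le \poly(n)$. One then prepares the input product state $\ket{\psi}$, applies this circuit, and measures the constant-weight projector $M$; the resulting accept probability is within $o(1)$ of $\Tr(M\ket{\phi}\bra{\phi})$, so the promise problem lies in $\bqp$.

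For $\bqp$-hardness I would reduce from the time-independent simulation of a \emph{general} 2-local qubit Hamiltonian, which is $\bqp$-complete via the Feynman clock construction applied to a universal circuit followed by standard locality-reduction perturbation gadgets. To reproduce arbitrary two-local Pauli interactions using only the TIM interaction set $\{X_u, Z_u, Z_uZ_v\}$, I would apply perturbation-gadget techniques of the Oliveira--Terhal and Cubitt--Montanaro--Piddock flavor: introduce ancilla qubits under a heavy transverse field together with auxiliary $ZZ$ couplings so that, in a suitable low-energy subspace of the TIM on ancilla$+$data, effective $XX$, $YY$ and mixed two-local interactions appear at leading order in perturbation theory. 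Crucially, the gadget analysis must preserve not only the low-energy spectrum but also the low-energy \emph{dynamics} up to inverse-polynomial error over time $t\le\poly(n)$, so that $e^{-i\widetilde{H}t}$ restricted to the ancilla-ground sector approximates $e^{-iHt}\otimes(\text{ancilla})$; the corresponding dynamical-simulation lemmas are available from prior work on analog Hamiltonian simulation. Since both input state and measurement are product states on $\bigO{1}$-qubit components, they pull back cleanly through the encoding.

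The main obstacle is that TIM is a stoquastic family, so one must proceed carefully to avoid contradicting \thm{tim}: if the reduction simulated both dynamics and spectra in the full Cubitt--Montanaro--Piddock sense, then $\qma$-hardness of TIM ground energy would follow, contradicting $\stoqma$-completeness (and the unlikely inclusion $\qma\subseteq\mathsf{AM}$). The resolution is that $\bqp$-hardness of dynamics is a strictly weaker property than universal Hamiltonian simulation: even though $H$ is stoquastic, the unitary $e^{-iHt}$ is generically not stoquastic because the factor $-\i$ in the Schr\"odinger equation turns the nonpositive off-diagonal signs into complex phases that support genuine quantum interference. The technical heart of the reduction is therefore to tune the perturbative scale so that the cumulative gadget error stays inverse-polynomially small over evolution time $\poly(n)$, and to verify that the ancillas remain approximately factorized from the encoded qubits throughout $e^{-i\widetilde{H}t}$ — not merely in the ground sector — so that the final measurement statistics match those of the source 2-local simulation to within the promise gap.
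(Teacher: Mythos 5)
Your $\bqp$ membership argument is fine and matches the paper's (which just appeals to general local Hamiltonian simulation). The hardness direction, however, has a genuine gap that you partially sense but do not resolve.

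You propose to use Oliveira--Terhal / Cubitt--Montanaro--Piddock perturbation gadgets to engineer effective $XX$, $YY$ and mixed two-local interactions from the TIM interaction set $\{X_u,Z_u,Z_uZ_v\}$ in the low-energy sector, and to verify that these gadgets preserve low-energy dynamics. The problem is that any gadget of this type \emph{must} reproduce the low-energy spectrum (that is exactly what ``effective interactions appear at leading order in perturbation theory'' means). If TIM gadgets produced an effective Hamiltonian equal to a general $2$-local Hamiltonian $H$, then the ground energy of the TIM construction would track that of $H$, which is $\qma$-hard; this contradicts \thm{tim}, i.e.\ $\stoqma$-completeness of TIM ground energy (and would imply $\qma\subseteq\mathsf{AM}$). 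You correctly flag this tension, but the ``resolution'' you offer---that $e^{-iHt}$ can be non-stoquastic even when $H$ is stoquastic---does not rescue your construction: a perturbation gadget that faithfully approximates the low-energy dynamics of $H$ also approximates its low-energy eigenvalues, so the contradiction remains. In fact, in the CMP classification TIM is universal only for \emph{stoquastic} Hamiltonians, so gadgets generating genuine $XX$ or $YY$ effective terms simply do not exist.

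The paper resolves this with a different and essential ingredient, the Jordan--Janzing block-embedding trick. Starting from $XX+ZZ$ Hamiltonians (already $\bqp$-complete by CMP), one writes $H=-\sum_k H_k$ and replaces each real matrix entry $x$ by $\bigl(\begin{smallmatrix}x&0\\0&x\end{smallmatrix}\bigr)$ if $x\ge 0$ and by $\bigl(\begin{smallmatrix}0&-x\\-x&0\end{smallmatrix}\bigr)$ if $x<0$, producing a $3$-local termwise-stoquastic $\widetilde{H}=H\otimes\ket{-}\bra{-}+\bigl(-\sum_k\lvert H_k\rvert\bigr)\otimes\ket{+}\bra{+}$ on $n+1$ qubits. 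The point is that this embeds the \emph{dynamics} of $H$ (by evolving $\ket{\psi}\ket{-}$) \emph{without} embedding its ground energy: the ancillary $\ket{+}$ block generally has lower energy, so the ground state of $\widetilde{H}$ has nothing to do with that of $H$. This is precisely the decoupling of dynamical hardness from spectral hardness that your proposal needed but did not supply. After that step, the remaining reduction from $3$-local termwise-stoquastic Hamiltonians to TIM is a legitimate CMP-style simulation within the stoquastic class. Without a mechanism like Jordan--Janzing, your reduction either fails to produce the needed effective interactions or inadvertently proves $\qma$-hardness of TIM ground energy.
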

\begin{proof}
    It is obvious that it should be in $\bqp$ since general local Hamiltonian simulation is in $\bqp$.
    We focus on proving its $\bqp$-hardness below.
    By Theorem 42 of~\cite[SI]{cubitt2018universal} we know that $XX+ZZ$ Hamiltonians are \bqp-complete.\footnote{In fact, one can even restrict the topology of such Hamiltonians on a 2D square lattice and the \bqp-completeness will remain unchanged by virtue of~\cite{zhou2021strongly}. From that result, a similar $XX+YY$ family also finds its application in verification of quantum simulators~\cite{jackson2024accreditation}.}
    To be more specific, the family of Hamiltonians consist of $H$ taking the shape
    $$
        H = \sum_{1 \leq i,j \leq n} (a_{ij} X_i X_j + b_{ij} Z_i Z_j),
    $$
    where $\abs{a_{ij}},\abs{b_{ij}} \leq \poly(n)$.
    Now we employ the technique from~\cite{jordan2010quantum,janzing2006bqp} to embed the spectrum of $H$ into a 3-local termwise-stoquastic matrix (in the computational basis).
    Write $H = -\sum_k H_k$ where $H_k$ are 2-local real matrices of size $2^n$.
    Construct $\widetilde{H_k}$ of size $2^{n+1}$ from $H_k$ by the following replacement rule:
    $$
        x \mapsto \left\{
        \begin{array}{ll}
            \left( \begin{smallmatrix}x & 0\\ 0 & x\end{smallmatrix} \right), & x \geq 0, \\
            \left( \begin{smallmatrix}0 & -x\\ -x & 0\end{smallmatrix} \right) , & x < 0.
        \end{array}
        \right.
    $$
    $\widetilde{H_k}$ is 3-local if we interpret it as acting on $n+1$ qubits and it is clearly stoquastic.
    Note that $\widetilde{H_k} = H_k \otimes \ket{-}\bra{-} + \abs{H_k} \otimes \ket{+}\bra{+}$ where $\abs{\cdot}$ applies entrywise.
    Hence we can construct the following 3-local stoquastic Hamiltonian:
    $$
        \widetilde{H} \coloneqq - \sum_k \widetilde{H_k} = H \otimes \ket{-}\bra{-} + \left(- \sum_k \abs{H_k}\right) \otimes \ket{+} \bra{+}.
    $$
    Therefore, the evolution of $\ket{\psi}$ under Hamiltonian $H$ can be simulated by the evolution of $\ket{\psi}\ket{-}$ under $\widetilde{H}$.
    We conclude that 3-local termwise stoquastic Hamiltonians are also \bqp-complete.
    Finally, as detailed in Section 5.4 of~\cite[SI]{cubitt2018universal}, 3-local termwise-stoquastic Hamiltonians can be efficiently simulated by TIM Hamiltonians.
    This is achieved by chaining local simulation protocols from~\cite{bravyi2006complexity} and \cite{bravyi2017complexity}, which reduces 3-local termwise-stoquastic Hamiltonians to 2-local ones, and 2-local ones to TIM Hamiltonians, respectively.
\end{proof}

\section{Algorithms for \sdg\ Operators} \label{sec:algo}
From now on we will focus on the \sdg\ operator $H= -\Delta + V$ in the \emph{hypercube} $D$ with $V$ being a \emph{$\bigO{1}$-body potential}.
The hypercube constraint enforces a Dirichlet boundary condition on all wave functions such that they must vanish on the boundary of $D$.
The only exception is that in \sec{bqpmem} we will consider a periodic boundary condition instead of a Dirichlet one.
\begin{definition}(Hypercube)
    An $n$-dimensional hypercube is $D \coloneqq [-1,1]^n$.
    Throughout this paper we use $D$ to denote this specific hypercube unless noted otherwise.
\end{definition}
\begin{definition}($k$-body potential)
    A potential function $V: D \to \mathbb{R}$ is called \emph{$k$-body} if it only consists of $k$-body interaction:
    In other words, it can be written as the sum of functions that depends on at most $k$ coordinates.
    For example, the Coulomb potential of $n$ electrons on a straight line,
    $$
        V(r_1,\dots,r_n) = \sum_{1 \leq i < j \leq n} \frac{1}{\abs{r_i - r_j}},
    $$
    is a $2$-body potential.
\end{definition}
The main results in this section are \thm{stoqma-mem} and \thm{bqp-mem}, which addresses the upper bound of the complexity of (i) estimating the ground energy and (ii) time-independent simulation for \sdg\ operators, respectively.

\subsection{Estimating the Ground Energy is in StoqMA}

\begin{theorem} \label{thm:stoqma-mem}
    Determining the ground energy of \sdg\ operators with $\bigO{1}$-body potential is in \stoqma.
    More specifically, let $V: D \to \mathbb{R}$ be a $\bigO{1}$-body potential function with $C^1$-norm bounded within $\poly(n)$.
    Then, estimating within error $1/\poly(n)$ the smallest eigenvalue of the \sdg\ operator
    $
        H = -\Delta + V
    $
    is in the complexity class \stoqma.
\end{theorem}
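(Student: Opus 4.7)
My plan is to reduce the continuous problem to the $\stoqma$-complete problem of estimating the ground energy of an $\bigO{1}$-local termwise-stoquastic qubit Hamiltonian, by discretizing $H$ on a fine grid and then encoding the grid into qubits in a locality- and stoquasticity-preserving way.

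First, I would place a uniform grid of spacing $h = 1/\poly(n)$ on $D = [-1,1]^n$ and replace $-\Delta$ by the standard second-order central finite-difference Laplacian $-\Delta_h$, with zero boundary values enforcing the Dirichlet condition. Let $H_h := -\Delta_h + V|_{\text{grid}}$. In the grid (position) basis, $H_h$ is already termwise-stoquastic on the grid graph: its diagonal is $\sim 1/h^2 + V$ and its nearest-neighbor off-diagonal entries are $-1/h^2 \le 0$. Next, I would prove a spectral-convergence estimate $\lvert E_0(H) - E_0(H_h)\rvert \leq 1/\poly(n)$ by a two-sided variational argument: an upper bound on $E_0(H_h)$ comes from restricting a near-optimal smooth eigenfunction of $H$ to the grid and Taylor-bounding the finite-difference error, while a lower bound comes from extending any discrete trial vector to an $H^1_0(D)$ function via piecewise multilinear interpolation and comparing Rayleigh quotients. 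The required regularity of eigenfunctions follows from elliptic regularity combined with the smoothness of $V$ and its $\poly(n)$ $C^1$-bound.

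I would then encode the grid into qubits using the \emph{one-hot} (unary) encoding along each coordinate: each of the $n$ dimensions uses $N = \lceil 2/h \rceil = \poly(n)$ qubits, exactly one of which is in $\ket{1}$ to mark the grid position, giving $nN = \poly(n)$ qubits in total. I enforce the physical subspace by a diagonal (hence trivially stoquastic) penalty $M \sum_i \bigl(\sum_k n_k^{(i)} - 1\bigr)^2$ with $M = \poly(n)$ chosen large enough that the ground state of the qubit Hamiltonian must lie in the physical subspace. On that subspace, the kinetic term becomes a sum of nearest-neighbor hopping terms $-\tfrac{1}{h^2}(\ket{01}\bra{10} + \ket{10}\bra{01})$ on adjacent unary qubits, whose off-diagonal entries in the computational basis are $-1/h^2 \le 0$, hence termwise-stoquastic. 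Since $V$ is $\bigO{1}$-body, $V|_{\text{grid}}$ becomes a sum of diagonal $\bigO{1}$-local terms in unary coordinates, again trivially stoquastic. The resulting qubit Hamiltonian $\widetilde{H}$ is therefore $\bigO{1}$-local termwise-stoquastic with coefficients of magnitude $\poly(n)$, and a standard penalty argument gives $\lvert E_0(\widetilde{H}) - E_0(H_h) \rvert \leq 1/\poly(n)$. Invoking the $\stoqma$-containment of ground-energy estimation for $\bigO{1}$-local termwise-stoquastic Hamiltonians at promise gap $1/\poly(n)$ then concludes the proof, after absorbing all discretization and penalty errors into the promise gap.

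The main obstacle will be the spectral convergence estimate in the second step. The classical fourth-order Taylor expansion of $-\Delta_h$ requires pointwise control of $\psi^{(4)}$, which through elliptic regularity calls on derivatives of $V$ beyond the $C^1$-bound directly given in the hypothesis. The cleanest workaround is a variational $L^2$-based argument using the weaker estimate $\|{-}\Delta_h u + \Delta u\|_{L^2} \lesssim h\,\|\nabla^3 u\|_{L^2}$ combined with an $H^3$-eigenfunction bound obtained by iterating elliptic regularity on the smooth $V$, while carefully tracking all implicit constants as polynomials in $n$ and $\|V\|_{C^1}$ so that the final discretization error remains $1/\poly(n)$.
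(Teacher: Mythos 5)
Your overall strategy is correct and parallels the paper's at a high level (finite-difference discretization, unary-type encoding of the grid into qubits, diagonal penalty Hamiltonian restricting to the physical subspace, then invoke $\stoqma$-membership of $\bigO{1}$-local termwise-stoquastic ground-energy estimation), but you use a genuinely different encoding. The paper uses a \emph{domain-wall} (thermometer) unary map $U:\ket{j}\mapsto\ket{0^j 1^{m-j}}$ together with an explicit embedding $\sigma$ that sends any tri-diagonal matrix to a $1$-local qubit operator, with a separate $2$-local penalty $\widehat{Q}$ whose zero eigenspace is exactly the domain-wall subspace. You use a \emph{one-hot} encoding with a diagonal penalty $M\sum_i(\sum_k n_k^{(i)}-1)^2$, under which the discrete Laplacian becomes $2$-local nearest-neighbor hopping $-\delta^{-2}(\ket{01}\bra{10}+\ket{10}\bra{01})$. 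Both give stoquastic hopping and $\bigO{1}$-local diagonal potential terms, and both exploit the fact that the unencoded discretized Hamiltonian preserves the physical subspace (the hopping is number-conserving, the potential is diagonal), so the penalty only needs to separate blocks energetically, exactly as in the paper's argument via \lem{pert-spec}-style reasoning. The two encodings use comparable numbers of qubits; the domain-wall encoding shaves the kinetic term to $1$-local, but this confers no real advantage since the penalty and the $\bigO{1}$-body potential are still at least $2$-local in either scheme. One further point: you correctly flag that the spectral-convergence bound $\abs{E_0(H)-E_0(H_h)}\le 1/\poly(n)$ requires controlling eigenfunction derivatives beyond what the $C^1$-bound on $V$ directly gives, and you propose an $L^2$/$H^3$ elliptic-regularity workaround. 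The paper's proof handles this less explicitly, citing~\cite{babuska1991eigenvalue} for the finite-element eigenvalue estimate without tracking the regularity constants; your more careful treatment is a genuine improvement in rigor at that step.
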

\begin{proof}
    We will prove the theorem with a slightly generalized $D$ which is a hyperbox with its each dimension between $1/\poly(n)$ and $\poly(n)$.
    First, we use the vanilla finite difference method to discretize $H$.
    Write $D = D_1 \times D_2 \times \cdots \times D_n$.
    Fix a stepsize $\delta$.
    Let $\widetilde{D}$ be the discretized version of $D$: $\widetilde{D} = D \cap \{(j_1 \delta, j_2 \delta, \dots, j_n \delta): j_1,\dots,j_n \in \mathbb{Z}\}$.
    and similarly $\widetilde{D}_i \coloneqq D_i \cap \{j \delta: j \in \mathbb{Z}\}$.
    Obviously $\widetilde{D} \coloneqq \widetilde{D}_1 \times \cdots \times \widetilde{D}_n$.
    Suppose $V = \sum_j V_j$ is the decomposition of the $\bigO{1}$-body potential $V$.
    Define $\widetilde{V}_j \coloneqq \sum_{\vect{x} \in \widetilde{D}} V_j(\vect{x})\ket{\vect{x}} \bra{\vect{x}}$.
    Define $\widetilde{T}_j$ to be a a tri-diagonal matrix that discretizes $-\frac{\partial^2}{\partial x_j^2}$: 
    $$\widetilde{T}_j \coloneqq \delta^{-1} \left( I - \sum_{\substack{x,y \in \widetilde{D}_j \\ \abs{x-y}=\delta}} \ket{x}\bra{y} \right).$$
    Let
    $$
        \widetilde{H} \coloneqq \sum_{1 \leq j \leq n} \widetilde{T}_j + \sum_j \widetilde{V}_j.
    $$
    Now by the standard analysis of finite element methods~\cite{babuska1991eigenvalue} we have $\abs{\mineigen(H) - \mineigen(\widetilde{H})} \leq \delta^{\Omega(1)}$.
    
    Next, we invoke the unary embedding method from~\cite{leng2024expanding} for $\widetilde{H}$.
    Intuitively, we will embed the Hilbert space of a qudit $\mathbb{C}^{m+1}$ into that of $m$ qubits $\mathbb{C}^{2^{m}}$ by the correspondence $U: \ket{j} \mapsto \ket{0^j 1^{m-j}}$ for $j=0,1,\dots,m$.
    Define $\mathcal{S}$ to be the embedding subspace $\spn\{\ket{0^j 1^{m-j}}\}_j$.
    For any $(m+1)$-dimensional real Hermitian matrix $A = \sum_{0 \leq i,j \leq m} \alpha_{i,j} \ket{i} \bra{j}$ we define its embedding to be
    $$
        \sigma(A) = \left(\alpha_{0,0} I_0 + \frac12 \sum_{0\le j \le m-1} (\alpha_{j-1,j-1}-\alpha_{j,j}) (I_j - Z_j)\right) + \sum_{0\le j < k \le m} X_{k-1}\otimes \dots \otimes X_{j+1} \otimes \alpha_{j,k} X_j.
    $$
    It is straightforward to verify that $\sigma(A) = UAU^\dagger$ in $\mathcal{S}$~\cite{leng2024expanding}.
    Furthermore, we have:
    \begin{itemize}
        \item $\sigma(I)$ is the identity.
        \item $\sigma(A)$ is 1-local if $A$ is tri-diagonal.
        \item $\sigma(A)$ is diagonal (resp., stoquastic) if $A$ is diagonal (resp., stoquastic).
    \end{itemize}
    
    Now, we want to embed the discretized \sdg\ operator $\widetilde{H} = \sum \widetilde{T}_j + \sum \widetilde{V}_j$.
    Define
    $$
        \widehat{T_j} \coloneqq \sigma(I)^{\otimes j} \otimes \sigma(\widetilde{T}) \otimes \sigma(I)^{\otimes n-j-1} \text{ and } \widehat{V_j}\coloneqq\sigma(\widetilde{V}_j) \otimes \sigma(I)^{\otimes n-k}
    $$
    assuming $\widetilde{V_j}$ acts nontrivally on the first $k$ qudits.
    This naturally generalizes to a definition for general $\widehat{V_j}$.
    Let $\widehat{H} \coloneqq \sum \widehat{T}_j + \sum \widehat{V}_j$.
    By our construction, $\widehat{H} = U^{\otimes n} \widetilde{H} U^{\dagger \otimes n}$ in $\mathcal{S}^{\otimes n}$ and $\widehat{H} \mathcal{S}^{\otimes n} \subseteq \mathcal{S}^{\otimes n}$. 

    Finally, we introduce a ``penalty'' diagonal Hamiltonian $\widetilde{Q}$ that separates its lowest eigenspace $\mathcal{S}^{\otimes n}$ from the remaining eigenspaces:
    $$
        \widehat{Q} \coloneqq \sum_{1 \leq i \leq n} I^{\otimes i-1} \otimes \left( 2I -\sum_{0 \leq j \leq m-2} Z_j Z_{j+1} + Z_0 - Z_{m-1} \right) \otimes I^{n-i}.
    $$
    It can be verify that its spectral gap is $\eigen_1(\widehat{Q}) - \eigen_0(\widehat{Q}) = 4$ and $\eigen_0(\widehat{Q}) = \mineigen(\widehat{Q}) = 0$.
    Now we consider the spectrum of the operator
    $
        \widehat{H}_{\star} \coloneqq \widehat{H} + c\widehat{Q}
    $.
    To analyze the smallest eigenvalue of $\widehat{H}_{\star}$, note that it is block-diagonal regarding subspaces $\mathcal{S}^{\otimes n}$ and $(\mathcal{S}^{\otimes n})^\perp$ because (i) $\widehat{Q}$ has the minimal eigenvalue $0$ with the eigenspace $\mathcal{S}^{\otimes n}$ and (ii)  $\widehat{H} \mathcal{S}^{\otimes n} \subseteq \mathcal{S}^{\otimes n}$.
    The smallest eigenvalue of $\widehat{H}_{\star}$ is just that of $\widetilde{H}$ because $\widehat{H}_{\star} = \widehat{H} + c \cdot 0 = U^{\otimes n} \widetilde{H} U^{\dagger \otimes n}$ when restricted on the subspace $\mathcal{S}^{\otimes n}$.
    On the other hand, for any $\ket{\psi} \in (\mathcal{S}^{\otimes n})^\perp$ we have
    $$
        \braket{\psi | \widehat{H}_{\star} | \psi} = \braket{\psi | \widehat{H} | \psi} + c \braket{\psi | \widehat{Q} | \psi} \geq -\norm{\hat{H}} + c \cdot 4 \geq -p(n) + 4c.
    $$
    where $p(n)$ is some polynomial in $n$ by the naive estimation of $\hat{H}$.
    Choosing $c = p(n)$ we are guarantee that $\braket{\psi | \widehat{H}_{\star} | \psi} > p(n) > \mineigen(\widetilde{H})$.
    Hence the smallest eigenvalue of  $\widehat{H}_{\star}$ on subspace $(\mathcal{S}^{\otimes n})^\perp$ is greater than that on $\mathcal{S}^{\otimes n}$.
    Therefore, $\mineigen(\widehat{H}_{\star}) = \mineigen(\widetilde{H})$.
    To conclude the proof simply note that $\widehat{H}_{\star}$ is 2-local termwise-stoquastic by construction and we can choose some enough small $\delta \geq 1/\poly(n)$ such that the reduction remains efficient.
\end{proof}

\subsection{Time-Independent Simulation is in BQP} \label{sec:bqpmem}

To show that time-independent of \sdg\ equation with periodic boundary condition is in \bqp, we simply invoke the \emph{pseudospectral algorithm} of Childs et al.~\cite[Theorem 8]{childs2022quantum} with a refined analysis~\cite[Theorem 4]{leng2023quantum}.
We first recall the notion of \emph{mesh} and \emph{Fourier series}.
\begin{definition}[Mesh]
    An $n$-dimensional mesh of order $m$ is the discretization of the hypercube $[0,1)^n$, written as $\Gamma_m^n \coloneqq \{0,\frac{1}{2m+1},\dots,\frac{2m}{2m+1}\}^n$. 
\end{definition}
\begin{definition}[Fourier series]
    Let $\vect{x} = (x_1,\dots,x_n) \in \mathbb{R}^n$.
    Write $\vect{k} = (k_1,\dots,k_n) \in \mathbb{Z}^n$.
    A \emph{Fourier series of order $m$} is of the form
    $$
        f(\vect{x})=\sum_{-m \leq k_1,\dots,k_n \leq m} c_{\vect{k}} \prod_{1 \leq j \leq n} \exp(2 \pi i k_j x_j).
    $$
    Note that $f(\vect{x})$ is periodic with translation lattice $\mathbb{Z}^n$.
    Furthermore, $f$ is uniquely determined by its value on the mesh $\Gamma_m^n$.
\end{definition}
\begin{definition}[Pseudospectral algorithm~\cite{childs2022quantum}]
    The purpose of this algorithm is to simulate the time-depedent \sdg\ equation
    $$
        i \frac{\partial}{\partial t}\ket{\Psi(\vect{x},t)} = \left(-\Delta + V(\vect{x},t)\right)\ket{\Psi(\vect{x},t)}
    $$
    from $t=0$ to $t=T$ over the hypercube $D=[-1,1]^n$ with a periodic boundary condition.
    The algorithm maintains the approximation of $\Psi(\vect{x},t)$, denoted as $\widetilde{\Psi}(\vect{x},t)$, on the mesh $\Gamma_m^n$ in the form of a quantum state 
    $$
        \ket{\widetilde{\Psi}(t)} \coloneqq \sum_{\vect{x} \in \Gamma_m^n} \widetilde{\Psi}(\vect{x},t) \ket{\vect{x}}.
    $$
    \begin{itemize}
        \item {\bf Input.} Dimension $n$; Truncation number $m$; Evolution time $T$; The \emph{quantum} input state $\ket{\widetilde{\Psi}(0)}$ such that ${\widetilde{\Psi}(\cdot,0)}={{\Psi}(\cdot,0)}$; An oracle $O_V$ which returns the value of $V(\vect{x},t)$ given $\vect{x}$ and $t$; Two metadata oracles $O_{\rm inv}$ and $O_{\rm norm}$ (See~\cite[Sec. 3.2]{childs2022quantum}) which act trivially if $V$ is independent on $t$.
        \item {\bf Output.} A state $\ket{\widetilde{\Psi}(T)}$.
    \end{itemize}    
\end{definition}

\begin{theorem}[\cite{childs2022quantum}] \label{thm:spectral-method}
    Suppose $V(\vect{x},t)\colon D \times [0, T] \to \mathbb{R}$ is bounded, smooth in $\vect{x}$ and $t$, and periodic in $\vect{x}$. Moreover, we assume that the initial wave function $\Psi(\vect{x},0)$ is periodic on $D$ and $V$ is $L$-Lipschitz in $t$. Define $\|V\|_{\infty,1}\coloneqq \int^{T}_{0}\|V(\cdot,t)\|_{\infty}~\d t$.
    Then, choosing $m \gg \log \eta$ can guarantee that 
    $$
        \abs{\widetilde{\Psi}(\vect{x},T) - \Psi(\vect{x},T)} \leq \eta
    $$
    for every $\vect{x} \in \Gamma_m^n$.
    The cost of the algorithm includes the following:
    \begin{enumerate}
        \item Queries to the quantum evaluation oracle $O_V$, $O_{\rm inv}$ and $O_{\rm norm}$: $\bigO{\|V\|_{\infty,1} \frac{\log(\|V\|_{\infty,1}/\eta)}{\log\log(\|V\|_{\infty,1}/\eta)}}$,
        \item 1- and 2-qubit gates: 
        $$\bigO{\|V\|_{\infty,1}\left(\polylog\left(T/\eta\right)+\log^{2.5}\left(L\|V\|_{\infty,1}/\eta\right) + d\log m\right)\frac{\log(\|V\|_{\infty,1}/\eta)}{\log\log(\|V\|_{\infty,1}/\eta)}}.$$
    \end{enumerate}
\end{theorem}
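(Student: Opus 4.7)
The plan is to follow the template of the pseudospectral Hamiltonian simulation results in Childs et al.\ and Leng et al., decomposing the argument into (i) a discretization/truncation error bound and (ii) a simulation cost bound for the resulting discrete Hamiltonian. I would first exploit the fact that when $\Psi(\cdot,t)$ and $V(\cdot,t)$ are smooth and periodic on $D$, their Fourier coefficients decay faster than any polynomial in $\|\vect{k}\|$. Restricting attention to the mesh $\Gamma_m^n$ and truncating to Fourier modes of order at most $m$ therefore introduces an approximation error in the interpolant $\widetilde{\Psi}$ that decays super-polynomially in $m$; standard spectral-accuracy estimates give error $\leq \eta$ as soon as $m \gtrsim \log(1/\eta)$ (with constants depending on the smoothness and $\|V\|_{\infty,1}$). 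This step justifies the ``$m \gg \log \eta$'' choice.

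Next, I would write the discretized generator as $\widetilde{H}(t) = -\widetilde{\Delta} + \widetilde{V}(t)$ on $\mathbb{C}^{\Gamma_m^n}$, where $-\widetilde{\Delta}$ is diagonal in the Fourier basis (with entries $(2\pi)^2\|\vect{k}\|^2$) and $\widetilde{V}(t)$ is diagonal in the position basis. Using the inverse QFT to move between these bases, one obtains efficient block-encodings of both pieces: the position-basis block-encoding of $\widetilde{V}$ is obtained from the oracle $O_V$ together with $O_{\rm inv}$ and $O_{\rm norm}$ for normalization, and the Fourier-basis block-encoding of $-\widetilde{\Delta}$ is obtained directly from arithmetic on the computational basis labels of $\vect{k}$. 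Each block-encoding uses $\bigO{d\log m}$ ancilla gates.

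The third and most delicate step is to simulate $\widetilde{U}(T,0)$ for the time-dependent generator $\widetilde{H}(t)$ with the tightest possible dependence on $\|V\|_{\infty,1}$ and $\eta$. I would invoke a truncated Dyson-series or interaction-picture simulation (as in the Low--Wiebe framework adopted by Childs et al.), splitting $[0,T]$ into segments on which $\widetilde{V}(t)$ varies by at most a constant in operator norm (this is where the Lipschitz hypothesis $L$ enters, controlling the number of segments), and applying the compressed Dyson algorithm on each segment. Summing over segments produces the stated query count $\bigO{\|V\|_{\infty,1}\log(\|V\|_{\infty,1}/\eta)/\log\log(\|V\|_{\infty,1}/\eta)}$ to $O_V, O_{\rm inv}, O_{\rm norm}$; the additional $\polylog(T/\eta) + \log^{2.5}(L\|V\|_{\infty,1}/\eta) + d\log m$ factor on the gate count comes from arithmetic and QFTs inside each Dyson-series block.

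I expect the hardest step to be reconciling the truncation-error estimate with the simulation-error estimate without paying an extra polynomial factor in $1/\eta$: the block-encoding of the truncated $\widetilde{H}(t)$ must reproduce the continuous $H(t)$ to total error $\eta$ in the relevant subspace, and propagating the mesh-truncation error through the Dyson-series algorithm requires that $m$ be chosen compatibly with both $L$ and $\|V\|_{\infty,1}$. Once this compatibility is checked, the rest of the analysis reduces to bookkeeping on the inputs of \cite{childs2022quantum} and the refined bounds of \cite{leng2023quantum}, which is why I would ultimately cite those two works and simply verify that the present hypotheses on $V$ meet their assumptions.
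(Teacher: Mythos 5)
Your proposal sketches a de novo proof of the pseudospectral algorithm, but the paper treats this theorem almost entirely as a citation to \cite{childs2022quantum} and \cite{leng2023quantum}; the only original content in the paper is a short remark justifying why the statement here differs from the statement in those references. You do acknowledge at the end that you would ``ultimately cite those two works and simply verify that the present hypotheses on $V$ meet their assumptions,'' which matches the paper's strategy in spirit. What you miss, though, is the specific thing the paper actually has to argue: the cited bound in \cite{leng2023quantum} carries a $\poly(z)$ factor, where $z$ is the bit-precision to which $V$ can be evaluated, and the theorem here replaces that with $\polylog(T/\eta)$. You instead attribute the $\polylog(T/\eta)$ term to ``arithmetic and QFTs inside each Dyson-series block,'' which is not where it comes from. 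The correct justification (and the one the paper gives) is a precision-propagation argument: set $z = c\log(T/\eta)$ so that each evaluation of $V$ incurs error $\bigO{c\eta/T}$, then apply Duhamel's principle to the pair of time-ordered propagators generated by the exact and the precision-truncated potentials to conclude that this per-step error accumulates to at most $\bigO{c\eta}$ in the final state. Without this step you cannot justify the $\polylog(T/\eta)$ in the gate count, so the proposal as written has a gap precisely at the one place where the paper departs from a pure citation.
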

\begin{remark}
    Compared to the original statement in~\cite[Theorem 4]{leng2023quantum}, we made the dependence on $m$ explicit and replaced the $\poly(z)$ by a $\polylog(T/\eta)$ in \thm{spectral-method} where $V$ is able to be computed with $z$ bits of precision.
    This can be justified as follows.
    Setting $z \coloneqq c\log(T/\eta)$ is sufficient for obtaining the value of $V$ with error at most $\bigO{c\eta/T}$.
    By Duhamel's principle we have
    \begin{align*}
        &\phantom{=}\ \exp_{\mathcal{T}}(\int_0^t -iA(s) \dee s) - \exp_{\mathcal{T}}(\int_0^t -iB(s) \dee s) \\
        &= -i \int_0^t \left( \exp_{\mathcal{T}}\left(\int_s^t -iA(\tau) \dee \tau\right) (A(s) - B(s)) \exp_{\mathcal{T}}\left(\int_0^s -iB(\tau) \dee \tau\right) \right) \dee s.
    \end{align*}
    Let $A(t)$ be the ideal \sdg\ operator $-\Delta + V(\vect{x},t)$ and $B(t)$ be one with $V$ truncated with error $\bigO{c\eta/T}$ as discussed above.
    We immediately obtain that the truncation incurs at most $\bigO{c\eta/T} \cdot t = \bigO{c\eta}$ error in $\Psi(\vect{x},t)$.
\end{remark}

Let us formally define the simulation problem before proceeding to a proof.

\begin{definition}[\sdgsim] \label{def:sdgsim}
    \sdgsim\ is a decision problem with its input and output listed as below.
    \begin{itemize}
        \item {\bf Input.} Dimension $n$; Evolution time $t \leq \poly(n)$; A classical $\poly(n)$-time explicit oracle to compute $V(\vect{x})$ where $V: D \to \mathbb{R}$ is a smooth potential function with its $C^1$-norm bounded within $\poly(n)$\footnote{i.e., the function has supremum norm $\poly(n)$ and it is $\poly(n)$-Lipschitz.}; $\bigO{n}$ many of classical $\poly(n)$-time explicit oracles to compute $\bigO{1}$-dimensional smooth wave functions $\psi_j(\vect{x})$ and $\mu(\vect{x})$ with their $C^1$-norms bounded within $\poly(n)$.
        \item {\bf Output.} \texttt{Yes} if $\Tr{M \ket{\phi} \bra{\phi}} > 2/3$ or \texttt{No} if $\Tr{M \ket{\phi} \bra{\phi}} < 1/3$ given the promise that one must be the case, where
        $$
            \ket{\phi} \coloneqq e^{-i (-\Delta + V) t} \ket{\psi} \coloneqq e^{-i (-\Delta + V) t} \left( \bigotimes_j \ket{\psi_j} \right) \textrm{ and } M \coloneqq  \ket{\mu}\bra{\mu} \otimes I.
        $$
    \end{itemize} 
    We will have two variations of the problem depending on which boundary condition of wave functions we are considering: the \emph{Dirichlet} boundary condition and the \emph{periodic} boundary condition.
\end{definition}

\begin{theorem} \label{thm:bqp-mem}
    $\sdgsim$ with periodic boundary condition is in \bqp.
\end{theorem}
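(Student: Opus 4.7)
The plan is to invoke the pseudospectral algorithm of \thm{spectral-method} directly, wrapped with an efficient preparation of the initial product state and an efficient implementation of the measurement $M = \ket{\mu}\bra{\mu}\otimes I$. Since $V$ is time-independent, the oracles $O_{\rm inv}$ and $O_{\rm norm}$ act trivially, and $L=0$, so the cost formula simplifies considerably. The $C^1$-norm bound $\poly(n)$ on $V$ together with $t\leq \poly(n)$ gives $\|V\|_{\infty,1} = \|V\|_\infty\cdot t \leq \poly(n)$. Targeting a total simulation error $\eta = 1/\poly(n)$, \thm{spectral-method} then guarantees pointwise error $|\widetilde{\Psi}(\vect{x},t)-\Psi(\vect{x},t)|\leq \eta$ on the mesh $\Gamma_m^n$ using $m = \polylog(n/\eta) = \polylog(n)$ truncation per coordinate, $\poly(n)$ queries to $O_V$, and $\poly(n)$ 1- and 2-qubit gates. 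The oracle $O_V$ itself is implementable in $\poly(n)$ quantum gates from the classical $\poly(n)$-time explicit description of $V$ by standard reversible compilation.

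Next I would handle the state preparation. The input wave function is $\ket{\psi} = \bigotimes_j \ket{\psi_j}$ where each $\psi_j$ is $\bigO{1}$-dimensional and smooth with $C^1$-norm $\poly(n)$. Its discretization on the portion of $\Gamma_m^n$ corresponding to $\psi_j$'s coordinates lives in a Hilbert space of dimension $m^{\bigO{1}} = \polylog(n)$, so it can be prepared directly (e.g., via the Grover--Rudolph procedure using the classical evaluation oracle for $\psi_j$, or simply by explicit unitary synthesis on $\bigO{\log m} = \bigO{\log\log n}$ qubits) in time $\poly(m) = \polylog(n)$. Since there are $\bigO{n}$ such factors, total preparation takes $\poly(n)$ gates, and the discretization/normalization error per factor is $\poly(m^{-1}) = 1/\polylog(n)$; by a standard product-state error bound the induced error on $\ket{\psi}$ is $1/\poly(n)$ after choosing $m$ sufficiently large. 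An entirely analogous procedure prepares $\ket{\mu}$, and the measurement of $M = \ket{\mu}\bra{\mu}\otimes I$ is realized by applying the inverse preparation circuit for $\ket{\mu}$ on the relevant $\bigO{1}$-dimensional subregister and measuring in the computational basis, accepting on the all-zeros outcome.

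Finally I would collect the errors. Writing $p_{\rm ideal} = \Tr(M\ket{\phi}\bra{\phi})$ for the ideal acceptance probability, and $\widetilde{p}$ for the one actually produced by the quantum circuit (using the discretized $\ket{\widetilde{\psi}}$, the pseudospectral evolution $\widetilde{U}$, and the discretized $\ket{\widetilde{\mu}}$), the triangle inequality in trace norm gives $|p_{\rm ideal}-\widetilde{p}| \leq \|\ket{\phi}-\ket{\widetilde{\phi}}\| + \|\ket{\mu}-\ket{\widetilde{\mu}}\|$, where each term is bounded by the pointwise mesh error times $|\Gamma_m^n|^{1/2}$ plus the state-preparation error. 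Choosing $m = \polylog(n)$ and $\eta = 1/\poly(n)$ large enough ensures $|p_{\rm ideal}-\widetilde{p}| < 1/6$, which preserves the $1/3$--$2/3$ promise gap. The whole simulation therefore runs in $\poly(n)$ quantum gates, placing \sdgsim\ with periodic boundary condition in $\bqp$.

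The main obstacle I anticipate is the error bookkeeping relating the \emph{pointwise} mesh error of \thm{spectral-method} to the $\ell^2$ error of the quantum state used in the final measurement; a naive bound scales with $|\Gamma_m^n|^{1/2} = m^{n/2}$, which is superpolynomial, but this is tamed by taking $\eta$ small in $m$ (permissible because \thm{spectral-method} only gives polylogarithmic overhead in $1/\eta$), or alternatively by invoking the refined $L^2$-norm variant of the pseudospectral error bound that yields $\|\ket{\widetilde{\Psi}(t)}-\ket{\Psi(t)}\| \leq \eta$ directly. Either resolution is mild, and no further technical machinery beyond \thm{spectral-method} and standard state-preparation techniques is needed.
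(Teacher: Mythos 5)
Your approach is identical to the paper's: prepare the discretized input product state via black-box state preparation, run the pseudospectral algorithm of \thm{spectral-method}, and measure by undoing the preparation circuit for $\ket{\mu}$ and checking the all-zeros outcome. Two minor bookkeeping corrections: the mesh parameter must be $m = \poly(n)$ rather than $\polylog(n)$, since the discretization error from the $L$-Lipschitz initial state and observable scales as $\bigO{\sqrt{n}L^2/m}$ with $L = \poly(n)$; and your worry about a $|\Gamma_m^n|^{1/2}$ factor in passing from pointwise to $\ell^2$ error is moot, because the normalization of the discretized quantum state contributes a compensating $(2m+1)^{-n/2}$ factor, so the $\ell^2$ state error is already $\bigO{\eta}$.
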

\begin{proof}
    Without loss of generality we can assume $\ell=1$ since a time dilation argument can scale $D$ by an arbitrary $\poly(n)$ factor.
    We describe the polynomial-time quantum algorithm below.
    Let the truncation parameter $m$ be determined later.
    We prepare the state $\ket{\widetilde{\Psi}(0)}$ in the pseudospectral algorithm as the approximation of $\bigotimes_j \ket{\psi_j}$ on the mesh $\Gamma_m^n$:
    $$
        \ket{\widetilde{\Psi}(0)} \coloneqq \bigotimes_j \ket{\widetilde{\psi}_j} \coloneqq \bigotimes_j \left( \sum_{\vect{x} \in \Gamma_m^{n_j}} \psi_j(\vect{x}) \ket{\vect{x}} \right),
    $$
    where $n_j$ is the dimension of wave function $\psi_j(\cdot)$.
    This can be done via standard quantum black-box state preparation with $\bigO{nm^{\bigO{1}}}$ oracle calls to values of $\psi_j(\cdot)$~\cite{grover2000synthesis}.
    Similarly, we construct the circuit $U$ such that $U\ket{\texttt{0}} = \ket{\widetilde{\mu}}$,
    where $\ket{\widetilde{\mu}}$ is the mesh approximation to $\ket{{\mu}}$ just like $\ket{\widetilde{\psi}_j}$ to $\ket{{\psi}_j}$.
    The implementation of $U$ uses at most $\bigO{m^{\bigO{1}}}$ oracle calls to values of $\mu(\cdot)$.
    We run the pseudospectral algorithm to obtain the final state $\ket{\widetilde{\Psi}(t)}$ by feeding it the explicit oracle to $V(\cdot)$.
    The runtime is $\bigO{t \polylog (t) \poly(n) \log m} = \bigO{\poly(n) \log m}$ if we set the error $\eta$ to be $e^{-n}$.
    It is obvious that the probability of measuring $U^{-1} \ket{\widetilde{\Psi}(t)}$ in the computational basis and then getting result \texttt{0} is exactly $\Tr{\widetilde{M} \ket{\widetilde{\Psi}(t)} \bra{\widetilde{\Psi}(t)}}$, where $\widetilde{M} = \ket{\widetilde{\mu}}\bra{\widetilde{\mu}}$.
    Now, $\ket{\widetilde{\Psi}(t)}$ is the approximation of $\ket{{\Psi}(t)} = e^{-i(-\Delta + V)t}\ket{\widetilde{\Psi}(0)}$ with pointwise error at most $e^{-n}$, and $\abs{\braket{\widetilde{\mu} | \widetilde{\Psi}(0)} - \braket{{\mu} | \psi}} \leq \bigO{\sqrt{n}L^2/m}$ if $\psi(\cdot) = \prod_j \psi_j(\cdot)$ and $\mu(\cdot)$ are $L$-Lipschitz.
    Therefore,
    $$
        \abs{\Tr{\widetilde{M} \ket{\widetilde{\Psi}(t)} \bra{\widetilde{\Psi}(t)}} - \Tr{M \ket{\phi} \bra{\phi}}} \leq \bigO{\frac{\sqrt{n}L^2}{m}+e^{-n}}.
    $$
    We conclude the proof by that $L \leq \poly(n)$ from the $C^1$-norm conditions of $\psi_j(\cdot)$ and $\mu(\cdot)$, and hence some $m \leq \poly(n)$ will make the error above smaller than $1/100$ when $n \to \infty$.
\end{proof}

\section{Hardness of \sdg\ operators} \label{sec:hardness}

The two main theorems of this section, \thm{main-stoqma} and \ref{thm:main-bqp}, establish the hardness of \sdg\ operators in terms of the complexity of finding the ground energy and time-independent simulation.
Their correctness follow from (i) \thm{main}, a perturbative reduction from \sdg\ operators to TIM Hamiltonians and (ii) the hardness results of TIM Hamiltonians, namely \thm{tim} and \ref{thm:timbqp}.

The rest of this section is organized as follows:
In \sec{proof-outline} we provide the essential idea behind \thm{main} by motivating a perturbative reduction in the 1-dimensional case. 
In \sec{proof-of-lem-main} we present the proof of \lem{main}, which serves as our main lemma that plays the central role in showing the correctness of the main perturbative reduction (\thm{main}).
We need some preparation before that:
The first ingredient is the semiclassical analysis of \sdg\ operators with $V$ being double-well functions in \sec{semiclassical}.
Then we introduce some perturbation theory tools in \sec{pert}.

\begin{theorem}[\stoqma-hardness] \label{thm:main-stoqma}
    Determining the ground energy of \sdg\ operators with 2-body potential is \stoqma-hard.
    More specifically, let $V: [-1,1]^n \to \mathbb{R}$ be a 2-body potential function with $C^1$-norm bounded within $\poly(n)$.
    Then, estimating within error $1/\poly(n)$ the smallest eigenvalue of the \sdg\ operator $H = \Delta + V$ is \stoqma-hard.
\end{theorem}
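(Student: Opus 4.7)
The plan is to prove the theorem by a polynomial-time reduction from the TIM ground-energy problem, whose \stoqma-hardness is \thm{tim}. Given a TIM instance $H = \sum_u (h_u X_u + g_u Z_u) + \sum_{u,v} g_{u,v} Z_u Z_v$ with coefficients bounded in $\poly(n)$, I would construct a \sdg\ operator $\widehat{H} = -\Delta + V$ on $D = [-1,1]^n$, with $V$ a 2-body potential of $\poly(n)$-bounded $C^1$-norm, such that its lowest eigenvalue equals an affine image of $\mineigen(H)$ up to additive error $1/\poly(n)$; since the affine map has $\poly(n)$-bounded slope, a $1/\poly(n)$-accurate ground-energy estimate for $\widehat H$ yields one for $H$, transferring \stoqma-hardness.

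The construction follows the encoding sketched in \sec{proof-outline}. Along each coordinate $x_u$ I place a symmetric double-well $\dwfunc(x_u)$, so that the one-dimensional operator $\widehat{X}_u = -\partial_{x_u}^2 + \dwfunc(x_u)$ has ground state $\ket{\psi_0}$ and first excited state $\ket{\psi_1}$; I define the logical states $\ket{\logiczero} = (\ket{\psi_0}+\ket{\psi_1})/\sqrt 2$ and $\ket{\logicone} = (\ket{\psi_0}-\ket{\psi_1})/\sqrt 2$, which for a sufficiently deep and narrow barrier are semiclassically localized in $\{x_u > 0\}$ and $\{x_u < 0\}$ respectively. Restricted to the logical subspace $\L \coloneqq \spn\{\ket{\logicvar{a_1}\cdots\logicvar{a_n}} : a \in \{0,1\}^n\}$, the kinetic operator plus the single-site well acts, up to a constant shift, as $-\tfrac12(E_1-E_0)\,X_u^{\mathrm{logical}}$, providing the transverse-field terms. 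The diagonal Pauli terms are implemented by a smooth polynomial-Lipschitz approximation $\sigma$ of $\sgn$ that equals $\pm 1$ outside a thin neighborhood of $0$: set
$$
    V(\vect{x}) = \sum_u \left( c_X\, \dwfunc(x_u) + c_Z\, g_u\, \sigma(x_u) \right) + \sum_{u,v} c_{ZZ}\, g_{u,v}\, \sigma(x_u)\sigma(x_v),
$$
with constants $c_X, c_Z, c_{ZZ}$ chosen to align the induced logical action with the prefactors of $H$. The potential is manifestly 2-body, smooth, and has $\poly(n)$-bounded $C^1$-norm provided the smoothing scale is no sharper than $1/\poly(n)$.

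The technical core is to show that the low-energy spectrum of $\widehat{H}$ matches that of (the rescaled) $H$ up to $1/\poly(n)$ error. For this I would invoke the infinite-dimensional substitute for Schrieffer--Wolff, \lem{pert-sim}, together with the semiclassical estimates of \sec{semiclassical}: Agmon-type exponential-decay bounds on $\ket{\logiczero}, \ket{\logicone}$ outside their wells control (i) the error between $\sigma(x_u)\sigma(x_v)$ and $Z_u Z_v$ acting on logical states, (ii) leakage of $\widehat H \L$ out of $\L$, and (iii) the gap between $\L$ and the rest of the spectrum. Combining these estimates shows that the restriction of $\widehat H$ to $\L$ is, up to operator-norm error $1/\poly(n)$, unitarily equivalent to a rescaled copy of $H$, while energies outside $\L$ sit strictly above; hence $\mineigen(\widehat H)$ and $\mineigen(H)$ agree after the affine rescaling within the required tolerance.

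The main obstacle is the simultaneous balancing of the perturbative parameters. The double-well depth must be large enough for the semiclassical tunneling tails to be inverse-polynomially small and for the gap above $\L$ to remain $\Omega(1/\poly(n))$, while being small enough that the $C^1$-norm of $\dwfunc$ stays within $\poly(n)$; the smoothing width of $\sigma$ must be narrow enough that $\sigma(x_u)\sigma(x_v)$ agrees with $Z_u Z_v$ on logical states up to inverse-polynomial error, yet wide enough for the Lipschitz constant of $V$ to remain polynomial. This is exactly the balancing achieved in the authors' \thm{main}/\lem{main}, and once that lemma is invoked the present theorem is immediate from \thm{tim}.
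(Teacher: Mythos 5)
Your proposal follows essentially the same route as the paper: reduce from the \stoqma-hard TIM ground-energy problem (\thm{tim}) via the perturbative embedding \thm{main}/\lem{main}, whose construction is exactly the per-coordinate double-well plus smoothed $\sgn$ encoding you sketch, and conclude immediately since the rescaling constant $C$ is $\poly(n)$-bounded. One small correction: for the ground-energy (spectral) part the relevant perturbation lemma is \lem{pert-spec}, a Weyl-inequality corollary controlling $\abs{\eigen_k(\widehat H)-\eigen_k(\widehat H_{\mathcal S})}\le\norm{R}$, not \lem{pert-sim}, which is the dynamics-specific bound on $\norm{P_{\mathcal S}e^{-i\widehat Ht}P_{\mathcal S}-e^{-i\widehat H_{\mathcal S}t}}$ used only for the \bqp-hardness of \sdgsim.
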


\begin{theorem}[\bqp-hardness] \label{thm:main-bqp}
    \sdgsim\ with Dirichlet boundary condition is $\bqp$-hard even if $V$ is 2-body.
\end{theorem}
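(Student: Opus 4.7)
My plan is to derive \thm{main-bqp} by combining two ingredients: the \bqp-hardness of time-independent TIM simulation established in \thm{timbqp}, and the perturbative reduction \thm{main} announced at the start of this section, which embeds any TIM Hamiltonian $H$ into the low-energy subspace of an \sdg\ operator with 2-body potential. Given a TIM simulation instance $(H,\ket{\psi},M,t)$ on $n$ qubits, \thm{main} supplies an \sdg\ operator $\widehat H = -\Delta + V$ on $[-1,1]^n$ with $V$ 2-body and $C^1$-norm in $\poly(n)$ that preserves both spectral and simulation correspondences between $H$ and the restriction of $\widehat H$ to the encoded subspace. Mapping the TIM input product state $\ket{\psi}$ coordinatewise to the corresponding encoded smooth wave function and the $\bigO{1}$-local projector $M$ analogously, the acceptance probability of the resulting \sdgsim\ instance matches that of the original TIM instance up to inverse-polynomial error; a \bpp\ algorithm for the former would thus yield a \bpp\ algorithm for the latter, proving \bqp-hardness.

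The encoding and potential realizing \thm{main} are the ones sketched in \sec{proof-outline}: each qubit is hosted by a coordinate $x_u\in[-1,1]$, and the qubit basis $\ket{0},\ket{1}$ is represented by $\ket{\logiczero},\ket{\logicone}$ built from the lowest two eigenstates of a tuned 1D double-well $-\dee^2/\dee x_u^2+c_u\dwfunc(x_u)$. The 2-body potential takes the form
\begin{equation*}
    V(\vect{x}) \;=\; \sum_u c_u\dwfunc(x_u) \;+\; \sum_u g_u\phi(x_u) \;+\; \sum_{u<v} g_{uv}\phi(x_u)\phi(x_v),
\end{equation*}
where $\phi\colon[-1,1]\to[-1,1]$ is a fixed smooth bounded odd approximation to $\sgn$; on the encoding subspace $\phi(x_u)$ acts as $Z_u$ and $\phi(x_u)\phi(x_v)$ acts as $Z_u Z_v$ up to exponentially small overlap corrections, while the tunnelling gap $E_1-E_0$ of each double-well acts as the encoded $X_u$ coefficient after a global time rescaling. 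The constants $c_u$ are selected using the semiclassical analysis of \sec{semiclassical} so that these effective couplings match the TIM coefficients while keeping the gap to the third eigenvalue and the $C^1$-norm of $V$ polynomially bounded.

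The nontrivial work, which I expect to be the main obstacle and which lives inside the proof of \thm{main} via \lem{main}, is controlling the simulation correspondence in the infinite-dimensional setting: one must show that $e^{-i\widehat H t}$ on an encoded initial product state remains, up to $1/\poly(n)$ error over times $t\le\poly(n)$, inside the encoded subspace and acts there as $e^{-i H t}$. The standard Schrieffer--Wolff block-diagonalization is unavailable because the Hilbert space is infinite-dimensional and the spectral gap between the encoded subspace and its complement is not cleanly polynomial in all relevant scales; this is precisely why the paper develops the substitute \lem{pert-sim}, which bounds interblock leakage by the product of evolution time and interblock operator norm rather than by a gap-dependent quantity. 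Once that leakage is shown to be exponentially small in the barrier parameters and the within-block effective Hamiltonian is identified with $H$ to inverse-polynomial accuracy, the \bqp-hardness of \sdgsim\ with Dirichlet boundary follows immediately from \thm{timbqp}.
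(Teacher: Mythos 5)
Your proposal is correct and follows the same route the paper takes: the paper derives \thm{main-bqp} by combining the perturbative reduction \thm{main} with the $\bqp$-completeness of time-independent TIM simulation (\thm{timbqp}), exactly as you do, mapping the TIM input state through $W$ and the measurement projector analogously, with the simulation correspondence (iii) of \thm{main} and the leakage bound (\clm{trunc-leak} inside \lem{pert-sim}) ensuring the acceptance probabilities agree to inverse-polynomial accuracy. Your description of the construction underlying \thm{main} (double-well hosting each qubit, smoothed sign function realizing $Z$-type terms, semiclassical tuning of the gap) also matches the paper's \sec{proof-outline} and \sec{proof-of-lem-main}, so this is the intended argument.
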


\begin{theorem}[Perturbative reduction] \label{thm:main}
    Let $\epsilon \coloneqq 1/P(n)$ and $t < P(n)$ for some polynomial $P(\cdot)$.
    For any TIM Hamiltonian 
    $$
        H = \sum_{1 \leq u \leq n} (a_u X_u + b_u Z_u) + \sum_{1 \leq u,v \leq n} b_{u,v} Z_u Z_v
    $$
    such that $|a_u|,|b_u|,|b_{u,v}| \leq P(n)$, 
    there exists a \sdg\ operator
    $$
        \widehat{H} = - \Delta + V
    $$
    and a $\poly(n)$-time computable $C \leq \poly(n)$ satisfying the following:
    \begin{enumerate}[label=(\roman*)]
        \item The potential $V: [-1,1]^n \to \mathbb{R}$ is $2$-body, smooth, $\poly(n)$-time computable and has $C^1$-norm bounded within $\poly(n)$.
        \item $\abs{C \eigen_k(\widehat{H}) - \eigen_k(H)} \leq \epsilon$ for every $k \in \{0,1,\dots,2^n-1\}$.
        \item $\norm{ W^\dagger P_{\mathcal{S}} e^{-iC\widehat{H}t}W - e^{-iHt}} \leq \epsilon$, where $W \coloneqq \bigotimes_{1 \leq k \leq n} (\ket{\psi_0^{k}}\bra{-} + \ket{\psi_1^{k}}\bra{+})$ shifts basis and $\mathcal{S}$ is the image of $W$.  $\ket{\psi_0^k}$ and $\ket{\psi_1^k}$ are $1$-dimensional.
        Moreover, wave functions $\psi_j^k(\cdot)$ are smooth, $\poly(n)$-time computable and have $C^1$-norm bounded within $\polylog(n)$.
    \end{enumerate}
\end{theorem}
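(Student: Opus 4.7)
The plan is a three-step construction: (1) build a single-mode 1-D operator whose low-lying spectrum encodes a qubit, (2) tensor copies of it and add smooth 2-body couplings to encode the $Z_u$ and $Z_u Z_v$ terms of the TIM, and (3) apply the infinite-dimensional perturbative framework of \lem{pert-sim} to transfer both the spectrum and the propagator from $\widehat{H}$ back to $H$.

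For step (1), for each mode $u$ I would choose a smooth symmetric double-well potential $\dwfunc_u$ with parameters (depth $s_u$, separation $\ell_u$, barrier width $w_u$) tuned so that the 1-D operator $\widehat{X}_u \coloneqq -\partial_{x_u}^2 + \dwfunc_u(x_u)$ has its first two eigenvalues split by $\delta_u$ proportional to $|a_u|$, with a large uniform gap to the rest of the spectrum. The semiclassical estimates in \sec{semiclassical} will quantify $\delta_u$ via a WKB-type tunneling formula and, crucially, will give exponential localization of the logical states $\ket{\logiczero^u} = (\ket{\psi_0^u}+\ket{\psi_1^u})/\sqrt{2}$ and $\ket{\logicone^u} = (\ket{\psi_0^u}-\ket{\psi_1^u})/\sqrt{2}$ to $\{x_u>0\}$ and $\{x_u<0\}$ respectively. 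In the 2-dimensional logical subspace, $\widehat{X}_u$ then acts (up to a constant shift and an orientation of $\psi_0,\psi_1$ chosen to match $\sgn(a_u)$) as $(\delta_u/2)\, X_u$.

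For step (2), I would define
\[
V(\x) \;=\; \sum_{u} \dwfunc_u(x_u) \;+\; \sum_u b_u\, \chi(x_u) \;+\; \sum_{u,v} b_{u,v}\, \chi(x_u)\,\chi(x_v),
\]
where $\chi\colon[-1,1]\to[-1,1]$ is a fixed smooth odd mollification of $\sgn(x)$ that equals $\pm1$ outside a small neighborhood of the origin. Since $\ket{\logicvar{a}^u}$ is concentrated on $\{\sgn(x_u)=(-1)^a\}$ up to exponentially small leakage, $\chi(x_u)$ acts as $Z_u$ on the logical subspace modulo that same exponentially small error, and likewise $\chi(x_u)\chi(x_v)$ implements $Z_u Z_v$. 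By construction $V$ is smooth, 2-body, efficiently computable, and has $C^1$-norm bounded within $\poly(n)$, giving item~(i).

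For step (3), let $\mathcal{S}$ be the low-energy subspace of $\widehat{H}_0 \coloneqq \sum_u \widehat{X}_u$ and $W \coloneqq \bigotimes_u (\ket{\psi_0^u}\bra{-}+\ket{\psi_1^u}\bra{+})$. I would analyze the block decomposition of $C\widehat{H}$ with respect to $\mathcal{S}\oplus\mathcal{S}^\perp$ for a suitable $C\le\poly(n)$: inside $\mathcal{S}$, $W^\dagger(C\widehat{H})W$ equals $H$ plus an exponentially small error coming from the wrong-well leakage and the smoothing of $\sgn$; the off-diagonal block $P_{\mathcal{S}^\perp}\widehat{H} P_{\mathcal{S}}$ is small in norm because $\chi$ is bounded and the high-energy eigenfunctions have controlled overlap with $\chi\ket{\logicvar{a}}$; and the $\mathcal{S}^\perp$ block retains the uniform gap inherited from $\widehat{H}_0$ after rescaling. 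Item~(ii) then follows from standard eigenvalue perturbation, and item~(iii) follows from \lem{pert-sim}, which directly yields the propagator bound on the whole time interval $t\le\poly(n)$ without requiring the gap above $\mathcal{S}$ to dominate the perturbation norm. The main obstacle I anticipate is balancing three competing scales: the tunneling splitting $\delta_u$ must be at least $1/\poly(n)$ so that $C\le\poly(n)$ suffices, the exponential localization error must remain below $\epsilon$ even after multiplication by $C$ and propagation for time $t\le\poly(n)$, and the $C^1$-norm of the eigenstates $\psi_j^u$ must stay $\polylog(n)$ despite $\dwfunc_u$ being polynomially deep. Carrying out the semiclassical estimates in \sec{semiclassical} with sufficient quantitative precision to choose $(s_u, w_u, \ell_u)$ as appropriate polynomial functions of $P(n)$ and $\epsilon^{-1}$ is where the bulk of the technical work lies.
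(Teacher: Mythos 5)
Your proposal matches the paper's approach at every structural step: encoding a qubit in the two lowest eigenstates of a 1D double-well, implementing $Z_u$ and $Z_uZ_v$ via a smooth odd mollification of $\sgn$, and transferring spectrum and propagator from the block-diagonal $C\widehat{H}$ back to $H$ via Weyl-type perturbation and the infinite-dimensional propagator bound of \lem{pert-sim}. The paper's actual proof of \thm{main} is a short wrapper that (after conjugating $H$ with $Z_u$'s to force $a_u \geq 0$) invokes \lem{main}, which carries all of the content you describe; your per-mode tuning so that $\delta_u \propto |a_u|$ corresponds to the paper's choice $G_u = G^\star/a_u$ after normalizing each 1D splitting to exactly $2$.

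The one concrete gap: your construction requires $\delta_u \propto |a_u|$, which is not achievable when some $a_u$ is zero or when $|a_u|$ is super-polynomially small, since a symmetric double well always has a strictly positive tunneling splitting; more precisely, to keep the localization error per mode uniformly $1/\poly(n)$ while also keeping the overall rescaling $C \leq \poly(n)$, you need a polynomial floor on each $|a_u|$. The paper handles this with a separate reduction (\lem{add-small-number-to-a}) that replaces $a_u$ by $\max\{a_u, 1/M_2\}$, paying only $n/M_2$ in operator norm, hence controllable errors in both eigenvalues (\lem{weyl}) and propagator (\lem{pert-sim-diff}). Without such a floor your step (1) is ill-posed for a TIM with vanishing transverse field on some site. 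A secondary caution: the paper fixes $\dwfunc$ to be the quartic $(x-1/2)^2(x+1/2)^2$ specifically because establishing the quantitative $\polylog$ bound on $\|\psi_j^u\|_{C^1}$ (required by item (iii)) via WKB is clean there; for an arbitrary smooth double-well family parameterized by $(s_u,w_u,\ell_u)$ this bound is not automatic and would need its own argument.
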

\begin{proof}
    Without loss of generality assume $a_u \geq 0$ because otherwise one can conjugate $H$ with $Z_u$ and this conjugation simply flip the $u$th qubit in the $\{\ket{+},\ket{-}\}$ basis.
    Now compare the theorem statement of \thm{main} and that of \lem{main}: the only obstacle from directly invoking \lem{main} is that it has an additional assumption $a_u \geq 1/M_2$ for every $u$.
    The assumption can be safely removed by \lem{add-small-number-to-a} which uses a perturbative reduction.
\end{proof}

\begin{lemma} \label{lem:main}
    Fix $M_1,M_2,\epsilon_1 > 0$ and $t \geq 0$.
    For any TIM Hamiltonian 
    $$
        H = \sum_{1 \leq u \leq n} (a_u X_u + b_u Z_u) + \sum_{1 \leq u,v \leq n} b_{u,v} Z_u Z_v
    $$
    such that $|a_u|,|b_u|,|b_{u,v}| \leq M_1$ and $a_u > 1/M_2$, 
    there exists a \sdg\ operator
    $$
        \widehat{H} = - G^\star \Delta + V
    $$
    and a polynomial $P \leq \poly(n,M_1,M_2,1/\epsilon_1,t)$ satisfying the following:
    \begin{enumerate}[label=(\roman*)]
        \item $1 \leq G^\star \leq P$ is $\poly(n)$-time computable and the potential $V: [-1,1]^n \to \mathbb{R}$ is $2$-body, smooth, $\poly(n)$-time computable and has $C^1$-norm bounded within $P$.
        \item $\abs{\eigen_k(\widehat{H}) - \eigen_k(H)} \leq \epsilon_1$ for every $k \in \{0,1,\dots,2^n-1\}$.
        \item $\norm{ W^\dagger P_{\mathcal{S}} e^{-i\widehat{H}t}W - e^{-iHt}} \leq \epsilon_1$, where $W \coloneqq \bigotimes_{1 \leq k \leq n} (\ket{\psi_0^{k}}\bra{-} + \ket{\psi_1^{k}}\bra{+})$ shifts basis and $\mathcal{S}$ is the image of $W$.  $\ket{\psi_0^k}$ and $\ket{\psi_1^k}$ are $1$-dimensional.
        Moreover, wave functions $\psi_j^k(\cdot)$ are smooth, $\poly(n)$-time computable and have $C^1$-norm bounded within $\polylog(P)$.
    \end{enumerate}
\end{lemma}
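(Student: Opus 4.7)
The plan is to construct $\widehat{H} = -G^\star \Delta + V$ as a sum of one-body double-well pieces that realize the transverse field $a_u X_u$ plus smooth diagonal potentials that realize $b_u Z_u$ and $b_{u,v} Z_u Z_v$, and then to appeal to the infinite-dimensional Schrieffer--Wolff substitute \lem{pert-sim} to upgrade the resulting approximate block structure into the spectral and dynamical equivalences (ii) and (iii). Under the identification $\ket{-} \leftrightarrow \ket{\psi_0^u}$ and $\ket{+} \leftrightarrow \ket{\psi_1^u}$, the relevant low-energy subspace is $\mathcal{S} = \bigotimes_u \spn\{\ket{\psi_0^u}, \ket{\psi_1^u}\}$, which is precisely the image of the isometry $W$ in the lemma statement.

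First I would take a symmetric double-well profile $\dwfunc$ as in \sec{proof-outline} and, for each $u$, consider the one-dimensional operator $\widehat{X}_u = -G^\star \tfrac{\mathrm{d}^2}{\mathrm{d}x_u^2} + V_u(x_u)$, where $V_u$ is a rescaling of $\dwfunc$ with depth parameter $c_u$. The semiclassical analysis of \sec{semiclassical} will provide (a) a near-degenerate lowest doublet $E_0^{(u)} < E_1^{(u)}$ with tunneling splitting $\Delta_u = E_1^{(u)} - E_0^{(u)}$, (b) an $\Omega(1)$ gap from $E_1^{(u)}$ up to $E_2^{(u)}$, and (c) wave packets $\logiczero,\logicone$ concentrated on opposite wells. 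I would then pick $c_u$ so that $\Delta_u = 2a_u/C$ for a uniform scale $C = \poly(n,M_1,M_2,1/\epsilon_1,t)$; the hypothesis $a_u > 1/M_2$ forces $\Delta_u$ to be at worst inverse-polynomial, so the wells need only be moderately deep and the wavefunctions keep $C^1$-norm within $\polylog(P)$ as required in (iii).

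Next I would introduce a smooth odd function $\tilde{s}(x) \approx \sgn(x)$ outside a narrow transition window, and set the diagonal perturbations to $b_u \tilde{s}(x_u)/C$ and $b_{u,v}\tilde{s}(x_u)\tilde{s}(x_v)/C$; the latter is manifestly a 2-body potential. Because $\logiczero$ (resp.\ $\logicone$) is concentrated where $\tilde{s} \approx +1$ (resp.\ $-1$), these potentials act on $\mathcal{S}_u$ (resp.\ $\mathcal{S}_u \otimes \mathcal{S}_v$) as $b_u Z_u/C$ (resp.\ $b_{u,v} Z_u Z_v/C$) up to errors governed by the tails of $\psi_{0,1}^u$ crossing the origin. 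Assembling all pieces gives $V$ as a 2-body $\poly$-bounded smooth potential, and on the subspace $\mathcal{S}$ the operator $\widehat{H}$ reproduces $H/C$ up to an inverse-polynomial error, yielding (i) and (ii).

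Finally I would apply \lem{pert-sim} to bound the full simulation error in (iii). The off-diagonal coupling $P_{\mathcal{S}^\perp} \widehat{H} P_\mathcal{S}$ arises only from the smooth $\tilde{s}$ potentials matrix-connecting the lowest doublet to higher bosonic modes, with norm controlled by $\sum_{u,v}|b_{u,v}|$ times the overlap of $\psi_{0,1}^u$ with higher eigenmodes in the transition window of $\tilde{s}$. The main obstacle is the quantitative bookkeeping: there are $\Theta(n^2)$ two-body terms, and at time $t$ the cumulative leakage must be driven below $\epsilon_1$ while simultaneously keeping the $C^1$-norm of $V$ polynomial, the tunneling splittings at $\Omega(1/\poly)$, and the eigenfunction $C^1$-norms at $\polylog(P)$. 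Because \lem{pert-sim} does not require a spectral gap between the blocks (in contrast to standard Schrieffer--Wolff), the only quantity that must be driven small is the off-diagonal norm itself, and this can be achieved by choosing $G^\star$ and the width of the smoothing window of $\tilde{s}$ so that the semiclassical estimates hold to any desired inverse-polynomial accuracy.
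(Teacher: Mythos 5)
Your overall construction matches the paper's strategy: realize $a_u X_u$ via a one-dimensional double-well whose tunneling doublet encodes the qubit, realize $b_u Z_u$ and $b_{u,v}Z_u Z_v$ via smoothed sign functions, keep the kinetic coefficient uniformly $G^\star$, and control leakage out of the low-energy subspace $\mathcal{S}$ via the infinite-dimensional Schrieffer--Wolff substitute \lem{pert-sim}. However, your treatment of condition (ii) has a genuine gap.

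For (iii), you correctly note that \lem{pert-sim} controls $\norm{P_\mathcal{S} e^{-i\widehat{H}t}P_\mathcal{S} - e^{-i\widehat{H}_\mathcal{S}t}}$ purely in terms of $\norm{R}$, with no spectral-gap assumption. But you then conclude (ii) simply from $P_\mathcal{S}\widehat{H}P_\mathcal{S} \approx W(H+cI)W^\dagger$. That does not suffice: to match $\eigen_k(\widehat{H})$ with $\eigen_k(H)$ for $k < 2^n$ you must also rule out spurious low-lying eigenvalues of $\widehat{H}$ supported on $\mathcal{S}^\perp$. The paper handles this with \lem{pert-spec}, whose hypothesis is precisely $\Delta \coloneqq \mineigen(\widehat{H}_{\mathcal{S}^\perp}) - \maxeigen(\widehat{H}_\mathcal{S}) > 0$. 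Unlike \lem{pert-sim}, this step does need a gap between the blocks, and your proposal never verifies it.

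Verifying $\Delta > 0$ is also nontrivial, and your stated single-mode estimate of an ``$\Omega(1)$ gap from $E_1^{(u)}$ to $E_2^{(u)}$'' is far too weak for it. The $\mathcal{S}$-block of $\widehat{H}$ already spreads over an interval of width $\Theta(n^2 M_1)$ (the $n$ tunneling splittings of size $O(M_1)$ each, plus $\Theta(n^2)$ diagonal $ZZ$ terms of size $O(M_1)$), so an $O(1)$ gap to the second excited state in a single mode cannot push the complementary block above it. What the paper actually exploits is that, after normalizing the tunneling splitting, the second gap of each one-body operator is $\Omega(G^{1-o(1)})$ (\clm{x-property}), i.e.\ polynomially large in the kinetic coefficient; combined with the hypothesis $a_u > 1/M_2$, this gives $\Delta \geq \bigOmega{G_{\rm min}^{1-o(1)}/M_2}$ in \clm{main-reduction.delta-estimate}. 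You attribute the role of $a_u > 1/M_2$ only to keeping the wells moderately deep and the eigenfunctions $C^1$-tame; it does that too, but its essential use is here, in lower-bounding $\mineigen(\widehat{H}_{\mathcal{S}^\perp})$ against the $O(nM_1)$ spread of $\widehat{H}_\mathcal{S}$.

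A minor point: you introduce a global scale $C$ so that the low-energy block carries $H/C$. Lemma \ref{lem:main} (unlike Theorem \ref{thm:main}) requires $\abs{\eigen_k(\widehat{H}) - \eigen_k(H)} \leq \epsilon_1$ with no rescaling, so you must take $C=1$ (equivalently, absorb $C$ into $G^\star$ and $V$), which in turn forces the tunneling splittings to be $2a_u$ exactly, just as in the paper.
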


\begin{lemma} \label{lem:add-small-number-to-a}
    Fix $M_1,M_2 > 0$.
    For any TIM Hamiltonian 
    $$
        H = \sum_{1 \leq u \leq n} (a_u X_u + b_u Z_u) + \sum_{1 \leq u,v \leq n} b_{u,v} Z_u Z_v
    $$
    such that $\abs{a_u},\abs{b_u},\abs{b_{u,v}}\leq M_1$ and $a_u \geq 0$,
    there exists another TIM Hamiltonian
    $$
        \widetilde{H} = \sum_{1 \leq u \leq n} (\widetilde{a}_u X_u + \widetilde{b}_u Z_u) + \sum_{1 \leq u,v \leq n} \widetilde{b}_{u,v} Z_u Z_v
    $$
    with all coefficients being $\poly(n)$-time computable, which satisfies the following:
    \begin{enumerate}[label=(\roman*)]
        \item $\abs{\widetilde{a}_u},\abs{\widetilde{b}_u},\abs{\widetilde{b}_{u,v}}\leq \max\{M_1,1/M_2\}$ and $\widetilde{a}_u \geq 1/M_2$ for any $u$.
        \item $\abs{\eigen_k(\widetilde{H}) - \eigen_k(H)} \leq n/M_2$ for every $k \in \{0,1,\dots,2^n-1\}$.
        \item $\norm{ e^{-i\widetilde{H}t} - e^{-iHt}} \leq nt/M_2$.
    \end{enumerate}
\end{lemma}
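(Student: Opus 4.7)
}
My plan is to construct $\widetilde{H}$ by the simplest possible modification: leave the $Z$ and $ZZ$ coefficients alone and only push each $X$-coefficient up to the required floor. Concretely, I set $\widetilde{a}_u \coloneqq \max\{a_u,\,1/M_2\}$, $\widetilde{b}_u \coloneqq b_u$, and $\widetilde{b}_{u,v} \coloneqq b_{u,v}$. Since $a_u \ge 0$, this change satisfies $0 \le \widetilde{a}_u - a_u \le 1/M_2$ for every $u$. Item (i) is then essentially automatic: $\widetilde{a}_u \ge 1/M_2$ by construction, $|\widetilde{a}_u| \le \max\{M_1,1/M_2\}$, the other coefficients inherit the $M_1$ bound from $H$, and all coefficients remain $\poly(n)$-time computable because taking a $\max$ with a fixed rational is a trivial operation on top of the original explicit oracle.

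For (ii) and (iii) the key quantity is the perturbation norm. Writing $\widetilde{H} - H = \sum_{u=1}^n (\widetilde{a}_u - a_u)\,X_u$ and using $\|X_u\|=1$ together with the triangle inequality gives
\begin{equation*}
    \bigl\|\widetilde{H}-H\bigr\| \;\le\; \sum_{u=1}^n \bigl|\widetilde{a}_u - a_u\bigr| \;\le\; \frac{n}{M_2}.
\end{equation*}
Applying Weyl's inequality to the bounded self-adjoint perturbation yields $|\eigen_k(\widetilde{H})-\eigen_k(H)| \le \|\widetilde{H}-H\| \le n/M_2$ for every $k$, which is (ii). For (iii) I would use Duhamel's formula,
\begin{equation*}
    e^{-i\widetilde{H}t} - e^{-iHt} \;=\; -\mathrm{i}\int_0^t e^{-i\widetilde{H}(t-s)}\bigl(\widetilde{H}-H\bigr) e^{-iHs}\,\mathrm{d}s,
\end{equation*}
and since the flanking exponentials are unitaries of norm one, this gives $\|e^{-i\widetilde{H}t}-e^{-iHt}\| \le t\,\|\widetilde{H}-H\| \le nt/M_2$, which is (iii).

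There is essentially no hard step in this lemma: everything reduces to a first-order operator-norm estimate on a bounded, finite-dimensional perturbation. The only point requiring a moment of care is verifying that the truncation $\widetilde{a}_u = \max\{a_u,1/M_2\}$ does not disturb the $\poly(n)$-time computability promised in the surrounding \thm{main} pipeline, and that the new coefficient bound $\max\{M_1,1/M_2\}$ is still a $\poly(n,M_1,M_2)$ quantity so that the downstream invocation of \lem{main} by \thm{main} still produces a polynomial $P$ of the required form. Both are immediate.
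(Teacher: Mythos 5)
Your proof is correct and takes essentially the same route as the paper: raise each $a_u$ to a floor of $1/M_2$, bound $\|\widetilde{H}-H\| \le n/M_2$ by the triangle inequality, and conclude (ii) from Weyl's inequality and (iii) from the Duhamel estimate (which is exactly what the paper's \lem{pert-sim-diff} supplies). Worth noting: the paper's own proof writes $\widetilde{a}_u \coloneqq \min\{a_u,1/M_2\}$, which is a typo — your $\max$ is the correct operation, since $\min$ neither enforces $\widetilde{a}_u \ge 1/M_2$ nor yields $|\widetilde{a}_u - a_u| \le 1/M_2$ when $a_u$ is large.
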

\begin{proof}
    We simply choose $\widetilde{a}_u \coloneqq \min\{a_u,1/M_2\}$, $\widetilde{b}_u \coloneqq b_u$ and $\widetilde{b}_{u,v} \coloneqq b_{u,v}$.
    Note that $\abs{\widetilde{a}_u - a_u} \leq 1/M_2$.
    Therefore $\norm{H - \widetilde{H}} \leq n/M_2$ and hence (ii) and (iii) follow by Weyl's inequality (\lem{weyl}) and \lem{pert-sim-diff}, respectively.
\end{proof}

\subsection{Proof Idea} \label{sec:proof-outline}

We explain the rough idea with a 1-dimensional toy case (i.e., embedding a qubit into a single bosonic mode).
Suppose we want to simulate the TIM Hamiltonian $H = aX + bZ$ with some \sdg\ operator $\widehat{H}$.
A natural approach will be to first assume that the reduction map $\sigma: H \mapsto \widehat{H}$ is \emph{linear} for a ``reasonable'' range of $a$ and $b$.
Let us first consider what $\widehat{Z} \coloneqq \sigma(Z)$ to choose:
Obviously it should be the linear combination of $\frac{\dee^2}{\dee x^2}$ and a diagonal potential function $f$ since these are only two allowed terms in a \sdg\ operator $\hat{H}$.
Now, many choices are possible but there is one promising candidate being the sign function $\hat{Z} = \sgn(x)$ with two advantages: (i) this choice easily generalizes to higher dimensions and (ii) for any wave function $\psi(x)$ staying in $x>0$ (resp., $x<0$) we have $\hat{Z} \ket{\psi} = \ket{\psi}$ (resp. $\hat{Z} \ket{\psi} = -\ket{\psi}$), which resembles the fact $Z\ket{0}=\ket{0}$ (resp.,  $Z\ket{1}=-\ket{1}$).
Sticking with this choice, the next step is to find an $\widehat{X} \coloneqq \sigma(X)$ that is compatible with the choice of $\widehat{Z}$ in the sense that we can find wave functions $\logiczero(x)$ and $\logicone(x)$ such that
\begin{align*}
    \widehat{X}(\alpha \ket{\logiczero} + \beta \ket{\logicone}) \approx \alpha \ket{\logicone} + \beta \ket{\logiczero}, \\
    \widehat{Z}(\alpha \ket{\logiczero} + \beta \ket{\logicone}) \approx \alpha \ket{\logiczero} - \beta \ket{\logicone}.
\end{align*}
In other words, $\widehat{Z}, \widehat{X}, \ket{\logiczero},\ket{\logicone}$ approximately retain the algebraic relation among $Z, X, \ket{0}, \ket{1}$.
We already know that any $\logiczero(x)$ (resp., $\logicone(x)$) that concentrates in the region $x<0$ (resp., $x>0$) will fit the bill for $\widehat{Z}$ from the previous discussion about $\widehat{Z}$.
Therefore we assume that they have this concentration property from now on.
Alas, it turns out that we cannot find a compatible $\widehat{X}$ unless the error hidden in $\approx$ can be a constant.\footnote{But we need the error to be at least smaller than $1/n$ because error adds up when generalizing from 1 dimension to $n$ dimensions.}
However, note that what we actually wanted to do is maintaining the \emph{spectral} property from $H$ to $\hat{H}$.
Therefore an $\hat{X}$ which acts on $\ket{\logiczero},\ket{\logicone}$ like $c_1X+c_2I$ acting on $\ket{0},\ket{1}$ for some $c_1,c_2$ will suffice.
Another way to see this is that time-independent simulation under Hamiltonian $c_1H+c_2I$ is equivalent to that under $H$ modulo phase difference and time dilation.
In light of this, we can heuristically find $\widehat{X}$ by considering the spectral property of $c_1X+c_2I$: its eigenstates are $\ket{0}+\ket{1}$ and $\ket{0} - \ket{1}$.
This means a valid $\widehat{X}$ should have eigenfunctions that look like $\logiczero(x)+\logicone(x)$ and $\logiczero(x)-\logicone(x)$.
Since we want to embed $H$ to the low-energy subspace of $\widehat{H}$ we prefer these two eigenstates being the ground state and the first excited state of $\widehat{X}$.
Recall the concentration property we assumed on $\logiczero(x)$ and $\logicone(x)$ and a good candidate of $\widehat{X}$ emerges: $\widehat{X} = -\frac{\dee^2}{\dee x^2} + \dwfunc$ where $\dwfunc$ is a double square well potential parameterized by $\ell, w, s > 0$ which is defined as follows.
$$
    \dwfunc(x) \coloneqq
    \left\{
        \begin{array}{ll}
            +\infty, & x \in (\ell + w/2,+\infty) \\
            0, & x \in [w/2, \ell + w/2] \\
            s, & x \in [0, w/2) \\
            \dwfunc(-x), & x < 0
        \end{array}
    \right.
$$
As we can see in \fig{1}, if $s$ is sufficiently large, the first two eigenfunctions of $\widehat{X} = -\frac{\dee^2}{\dee x^2} + \dwfunc$ is approximately $\logiczero(x) \pm \logicone(x)$ for $\logiczero(x)$ being a partial sine wave in $x>0$ and $\logicone(x) \coloneqq \logiczero(-x)$.

It turns out that a cleaner approach is to \emph{define} $\ket{\logiczero}$ and $\ket{\logicone}$ from $\widehat{X}$: Fix $\widehat{X} \coloneqq -\frac{\dee^2}{\dee x^2} + \dwfunc$ with the ground state $\ket{\psi_0}$ and the first excited state $\ket{\psi_1}$.
Let $\ket{\logiczero} \coloneqq \frac{1}{\sqrt{2}} \left( \ket{\psi_0}+\ket{\psi_1} \right)$ and $\ket{\logicone} \coloneqq \frac{1}{\sqrt{2}} \left( \ket{\psi_0}-\ket{\psi_1} \right)$.
We now have that $\widehat{X}$ acting on $\{ \ket{\logiczero},\ket{\logicone}\}$ exactly resembles $\frac{\delta}{2}X + c_2I$ acting on $\{\ket{0},\ket{1}\}$ for some $c_2$ where $\delta$ is the spectral gap of $\widehat{X}$.
The remaining problem is how well $\widehat{Z} \coloneqq \sgn(x)$ resembles $Z$ in the same manner.
It is easy to see from Figure~\ref{fig:1} that increasing $s$ will make $\ket{\logiczero}$ concentrate more in the region $x>0$ and hence the relation $\widehat{Z}\ket{\logiczero} \approx \ket{\logiczero}$ is approximated with smaller error.
However, this comes with a cost of reducing the spectral gap $\delta$.
Note that we want $\delta$ to be not too small because our reduction reads $X \mapsto \frac{2}{\delta} \widehat{X}$.
An exponentially small $\delta$ means the potential function hiding in the \sdg\ operator $\frac{2}{\delta} \widehat{X}$ must be exponentially large, which will make the whole reduction inefficient in the sense that the operator we reduced to involves prohibitively large energy.
Therefore, we need to pick an appropriate $s$ such that $\norm{\widehat{Z}\ket{\logiczero} - \ket{\logiczero}}$ is small enough but $\delta$ is not too small.
Studying the relation between $\delta$ and $s$ can be regarded as doing semiclassical analysis on the operator $-h^2\Delta + \dwfunc$ with $s=1$: this operator is clearly equivalent to $-\Delta + \dwfunc$ with $s = 1/h^2$.
This is why we will invoke results from semiclassical analysis in \sec{semiclassical} for the actual construction of $\widehat{X}$ in \sec{proof-of-lem-main}.

The idea above can be naturally generalized to higher dimensions.
Note that we used potential functions that are not continuous in the construction of the reduction $H \mapsto \widehat{H}$ above, and we need smooth potentials in the actual construction of the \sdg\ operator.

\subsection{1-Dimensional Double-Well Function} \label{sec:semiclassical}

The goal of this section is to establish a proof for a technical lemma (\lem{semiclassical-main}) which will be used later.
It characterizes the spectral property of a \sdg\ operator with the potential being a 1-dimensional double-well function.

\begin{definition}[Sobolev Space]
    Let $M$ be a finite-dimensional compact connected Riemannian manifold or an open subset of such a manifold.\footnote{Most $M$ encountered later in this section are simply $[-1,1]$ and its open subsets. The general definition is put here for the sake of completeness. }
    Define $H^2_0(M) \coloneqq W^{2,2}_0(M)$ as the Sobolev space regarding $\ell^2$-norm and $2$nd order derivatives, which consists of functions $f:M \to \mathbb{R}$ vanishing on the boundary $\partial M$.
\end{definition}

\paragraph{Operator of Interest.} Throughout this section, we focus on analyzing the property of \sdg\ operator $P \coloneqq -h^2 \Delta + V$ when $h \to 0$.
This is called \emph{semi-classical analysis} in literature.
Here, we are only interested in the case that $V$ is a fixed \emph{double-well function} defined as follows.

\begin{definition}[Double-Well Function] \label{def:dw}
    We call $V: [-1,1] \to \mathbb{R}$ a \emph{double-well function} if it satisfies the following:
    \begin{enumerate}
        \item $V$ is symmetric, i.e., $V(-x) = V(x)$ for any $x$.
        \item $V$ is smooth on $[-1,1]$.
        \item There exists some $a \in (0,1)$ such that $V$ has exactly two minima located at $a$ and $-a$ with $V(a) = V(-a) = 0$ and $V''(a) = V''(-a) > 0$.
    \end{enumerate}
\end{definition}

We restrict the domain of $P$ to be functions in $H^2_0([-1,1])$.
All kets introduced later in this section are \emph{real}-valued functions.
For why this works, note that e.g. $P$ is a real differential operator and hence all its eigenfunctions can be made real.

The so-called \emph{Agmon distance} is essential in semi-classical analysis:

\begin{definition}[Agmon Distance]
    Let $M$ be a finite-dimensional compact connected Riemannian manifold.
    Let $V: M \to \mathbb{R}$ be a potential function and $E \in \mathbb{R}$ be some specific energy of interest.
    The \emph{Agmon distance} between $x,y \in M$ is
    $$
        \agmonfull{V}{E}{x}{y} \coloneqq \inf_{\gamma} \int_{\gamma} \sqrt{\max\{V(z)-E,0\}} \dee z,
    $$
    where $\gamma$ runs over all piecewise $C^1$ paths connecting $x$ and $y$.
    The Agmon distance between a point $x \in M$ and a subset $Y \in M$ follows naturally:
    $$
        \agmonfull{V}{E}{x}{U} \coloneqq \inf_{y \in Y} \agmonfull{V}{E}{x}{y}.
    $$
    We define $S_0 \coloneqq \agmonfull{V}{0}{a}{-a}$ as it will make appearances throughout.
    Here $V$ is the fixed double-well function and $a$ and $-a$ are two minima of $V$ (Refer to Definition~\ref{def:dw}).
\end{definition}

The notion of \emph{Dirichlet realization} will be helpful when we study local behavior of a \sdg\ operator:

\begin{definition}[Dirichlet Realization]
    Let $P\coloneqq-h^2\Delta + V$ be a \sdg\ operator with $V: M \to \mathbb{R}$, where $M$ is a finite-dimensional compact connected Riemannian manifold.
    Let $\Omega \subseteq M$ be a bounded open set.
    $P_{\Omega}$ is defined as the restriction of $P$ over $\Omega$.
    To be more specific, $P$ has domain $H_0^2(\Omega)$.
\end{definition}

\begin{definition}[Ball]
    Let $M$ be a Riemannian manifold.
    Let $d(\cdot,\cdot)$ be a distance function on $M$.
    Define $\ball{d}{x}{r} \coloneqq \{ z \in M \ | \ d(z, x) \leq r \}$ and $\ballopen{d}{x}{r} \coloneqq \{ z \in M \ | \ d(z, x) < r \}$.
\end{definition}

Recall that our target is to analyze the behavior of $P$ when $h \to 0$.
As such, the $k$th eigenvalue of $P$ tends to $0$ for any fixed $k$.
Therefore we need the following notion to keep track of those eigenvalues we are interested in when $h \to 0$.
\begin{definition}[Eigenvalue Interval, $I(h)$]
    $I(h)$ is defined as a closed interval in $\mathbb{R}$ that depends on the value of $h$.
    We require that $I(h) \to \{0\}$ when $h \to 0$.
\end{definition}

\begin{definition}[Local Eigenvalues and Eigenstates]
    Fix a small $\eta > 0$.
    Define intervals $L, R \subseteq [-1,1]$ as follows:
    \begin{align*}
        L &\coloneqq \ballopen{\agmonsymbol}{-a}{3S_0} - \ball{\agmonsymbol}{a}{\eta} - [a,+\infty] \\
        R &\coloneqq \ballopen{\agmonsymbol}{a}{3S_0} - \ball{\agmonsymbol}{-a}{\eta} - [-\infty,-a] 
    \end{align*}
    where $\agmonsymbol \coloneqq \agmonsymbolandparams{V}{0}$.
    Let $J=L$ or $J=R$. 
    Denote by $\mu_{J,1},\mu_{J,2},\dots, \mu_{J, m_J}$ eigenvalues\footnote{Multiplicity is included here, i.e., degenerate eigenvalues appear multiple times in the list.} of $P_J$ in interval $I(h)$ and by $\ket{\phi_{J,1}},\ket{\phi_{J,2}},\dots, \ket{\phi_{J,m_J}}$ their corresponding normalized eigenstates that constitute an orthogonal system.
    Note that $P_L$ is just the reflection of $P_R$ across the y-axis.
\end{definition}

\begin{definition}[Hidden constants in big O notation]
    The subscript(s) $c$ in the notation $\bigOdep{c}{\cdot}$ means that the hidden constant depends on $c$.
\end{definition}

\begin{lemma} \label{lem:semiclassical-main}
    Let $P \coloneqq -h^2 \Delta + V$ be a \sdg\ operator with some double-well function $V$.
    Let $\mathcal{F}$ be the subspace spanned by eigenstates of $P$ whose eigenvalues are in $I(h)$. The following holds for sufficiently small $h > 0$ if there exists $r(h) \geq \Omega(e^{-o(1/h)})$ such that $P_L$ and $P_R$ have no eigenvalues $r(h)$-close to $I(h)$ from outside, i.e., no eigenvalues in $(I(h) + \ball{\agmonsymbol}{0}{r(h)}) - I(h)$:
    \begin{enumerate}[label=(\roman*)]
        \item There exists a bijection $\mathfrak{b}$ mapping eigenvalues of $P$ in $I(h)$ to eigenvalues of $P_L$ and $P_R$ in $I(h)$. Furthermore, $\abs{\mathfrak{b}(\lambda) - \lambda} \leq \bigOdep{\eta}{e^{-\frac{S_0 -2\eta - o(1)}{h}}}$.
        \item For each eigenstate $\ket{\phi_{J,k}}$ $(J\in\{L,R\})$, there exists a $\ket{v_{J,k}} \in \mathcal{F}$ such that $\norm{\ket{v_{J,k}} - \ket{\phi_{J,k}} }\leq \bigOdep{\eta}{e^{-\frac{S_0 - \epsilon(\eta)}{h}}}$ where $\epsilon(\eta)$ is some function such that $\lim_{\eta \to 0}\epsilon(\eta) = 0$. 
        $v_{L,k}(\cdot)$ is $v_{R,k}(\cdot)$ reflected across the y-axis.
        Furthermore, $\{\ket{v_{J,k}}\}$ are almost orthogonal to each other in the sense that $\abs{\braket{v_{J,k}|v_{J',k'}}} \leq \bigOdep{\eta}{e^{-\frac{S_0 - \epsilon(\eta)}{h}}}$ for any $(J,k) \neq (J',k')$.
    \end{enumerate}
\end{lemma}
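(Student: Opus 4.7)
The plan is to execute the standard semiclassical double-well tunneling analysis: build exponentially accurate quasimodes for $P$ out of the local eigenfunctions $\ket{\phi_{J,k}}$ of $P_J$ using smooth cutoffs, use Agmon weighted $L^2$ estimates to control all residuals at the scale $e^{-(S_0-\epsilon(\eta))/h}$, and match the two spectra via the spectral theorem combined with a dimension count in both directions.

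First I would prove the Agmon decay estimate: for any normalized eigenfunction $\phi$ of $P_J$ (or of $P$ itself) with eigenvalue $\lambda \in I(h)$ and any $\epsilon' > 0$, $\|e^{(1-\epsilon')\agmon{x}{\{a,-a\}}/h}\phi\|_{L^2} \leq C_{\eta,\epsilon'}$. This is the classical Agmon identity, obtained by multiplying $(P-\lambda)\phi = 0$ by $e^{2\Phi/h}\phi$ for a weight $\Phi$ slightly below $\agmonfull{V}{\lambda}{\cdot}{\{a,-a\}}$ and integrating by parts; the hypothesis $I(h)\to\{0\}$ makes the relevant Agmon exponent converge to $\agmonsymbolandparams{V}{0}$ as $h\to 0$, and the ``no local eigenvalues within $r(h)$ of $I(h)$ from outside'' hypothesis bounds the constants uniformly when inverting $P_J - \lambda$ away from the wells. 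Then, for each local eigenpair $(\mu_{J,k}, \ket{\phi_{J,k}})$ I would introduce a smooth cutoff $\chi_J$ supported in $J$, equal to $1$ on $\ball{\agmonsymbol}{\pm a}{S_0 - 2\eta}$ with the sign matched to $J$, and set $\ket{v_{J,k}} \coloneqq \chi_J \ket{\phi_{J,k}}/\|\chi_J\phi_{J,k}\|$. The commutator identity $(P-\mu_{J,k})(\chi_J\phi_{J,k}) = -h^2[\Delta,\chi_J]\phi_{J,k}$ confines the residual to $\mathrm{supp}(\nabla\chi_J)$, a region where the Agmon weight already exceeds $S_0 - \epsilon(\eta)$, so the Agmon estimate yields $\|(P-\mu_{J,k})\ket{v_{J,k}}\| \leq \bigOdep{\eta}{e^{-(S_0-\epsilon(\eta))/h}}$. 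Almost orthogonality $|\braket{v_{J,k}|v_{J',k'}}| \leq \bigOdep{\eta}{e^{-(S_0-\epsilon(\eta))/h}}$ and the reflection symmetry $v_{L,k}(x) = v_{R,k}(-x)$ are immediate from the same weighted bound combined with the symmetry of $V$.

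To conclude, the spectral theorem applied to each quasimode produces an eigenvalue of $P$ within $\bigOdep{\eta}{e^{-(S_0-\epsilon(\eta))/h}}$ of $\mu_{J,k}$; since $r(h)\geq\Omega(e^{-o(1/h)})$ dominates this distance for small $h$, the orthogonal projection $P_{\mathcal{F}}\ket{v_{J,k}}$ is within the same exp-small error of $\ket{v_{J,k}}$, and after renormalisation it is the vector required for (ii). For the bijection in (i), almost orthogonality of the $m_L + m_R$ quasimodes forces $\dim\mathcal{F} \geq m_L + m_R$; for the reverse inequality I would take any normalised $\psi$ with $P\psi = \lambda\psi$ and $\lambda\in I(h)$, decompose it as $\chi_L\psi + \chi_R\psi + \text{(Agmon tail)}$, observe that each $\chi_J\psi$ lies in $\dom(P_J)$ with $\|(P_J-\lambda)(\chi_J\psi)\|$ exp-small by the same commutator-plus-Agmon argument, and invoke the hypothesis on the spectrum of $P_J$ outside $I(h)$ to force $\lambda$ to be exp-close to some $\mu_{J,k}\in I(h)$. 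Pairing each quasimode-produced eigenvalue of $P$ with its $\mu_{J,k}$ and each $\chi_J\psi$-produced eigenvalue of $P_J$ with its $\lambda$ then defines the bijection $\mathfrak{b}$.

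The main obstacle is the quantitative calibration of the many small parameters: the cutoff widths, the Agmon weight function, the margin $\eta$, and the outside-gap $r(h)$ must be tuned together so that every error term collapses into a single bound of the form $e^{-(S_0-\epsilon(\eta))/h}$ with $\epsilon(\eta)\to 0$ as $\eta\to 0$. The reverse direction of the bijection is particularly delicate: a global eigenvalue of $P$ in $I(h)$ could in principle correspond to a local eigenvalue of $P_J$ just outside $I(h)$, which would break the dimension count; the assumption $r(h)\geq\Omega(e^{-o(1/h)})$ is precisely what guarantees that the exponentially small perturbative shifts are swallowed by the forbidden zone in the spectrum of $P_J$, forcing the correspondence to respect $I(h)$ and making $\mathfrak{b}$ well-defined and bijective.
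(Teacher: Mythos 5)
The paper does not give a self-contained proof of this lemma: its ``proof'' is a Remark that reduces it to Proposition~A.3 and Eqs.~(187)--(190) of Liu et al.\ (together with material from Dimassi--Sj\"{o}strand and Helffer's book). Your sketch reconstructs the standard Helffer--Sj\"{o}strand tunneling argument that those sources actually carry out, so at the level of plan you and the paper are pointing at the same mathematics, just at different depths.

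There is, however, one step in your write-up that would fail as written. You argue that because the quasimode $\chi_J\phi_{J,k}$ has residual $\delta \coloneqq \|(P-\mu_{J,k})\chi_J\phi_{J,k}\| \leq \bigOdep{\eta}{e^{-(S_0-\epsilon(\eta))/h}}$ and $r(h) \gg \delta$, the projection $P_{\mathcal{F}}(\chi_J\phi_{J,k})$ differs from the quasimode by $\bigO{\delta}$. That inference needs
$$
\bigl\|P_{\mathcal{F}^\perp}(\chi_J\phi_{J,k})\bigr\| \leq \frac{\delta}{\operatorname{dist}\bigl(\mu_{J,k},\,\sigma(P)\setminus I(h)\bigr)},
$$
and the denominator is a gap in the spectrum of the \emph{global} operator $P$ around $I(h)$. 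The hypothesis of the lemma gives such a gap only for $P_L$ and $P_R$; it says nothing directly about where the spectrum of $P$ lies relative to $\partial I(h)$. One must first bootstrap: decompose an arbitrary normalized eigenfunction $\psi$ of $P$ with eigenvalue $\lambda$ near $\partial I(h)$ into $\chi_L\psi + \chi_R\psi + (\text{Agmon tail})$, show each $\chi_J\psi$ is a quasimode of $P_J$ with residual exponentially smaller than $r(h)$, and use the hypothesis that $P_J$ has no eigenvalues in $(I(h)+\ball{\agmonsymbol}{0}{r(h)}) - I(h)$ to conclude that $\lambda$ cannot sit in a shell of width $\Theta(r(h))$ around $\partial I(h)$. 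Only after this is established do both your projection argument and your dimension-counting bijection go through. This is precisely the content of Remark~4.2.3 of Helffer that the paper cites as the reason it may drop the usual hypothesis on the spectrum of $P$ itself, and it is the piece that is silently assumed in your sketch rather than derived. You do mention the role of $r(h)$ in passing, but the reasoning ``the exponentially small shifts are swallowed by the forbidden zone in the spectrum of $P_J$'' addresses only whether an eigenvalue of $P$ near $I(h)$ can be matched to an eigenvalue of $P_J$ near $I(h)$; it does not by itself give the gap for $P$ that the resolvent-type bound above requires.

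A smaller point: your Agmon estimate controls the commutator term $h^2[\Delta,\chi_J]\phi_{J,k} = -h^2(2\chi_J'\partial_x + \chi_J'')\phi_{J,k}$, so you implicitly need a weighted bound on $\phi_{J,k}'$ as well as on $\phi_{J,k}$ itself; this is standard (one gets it from the same Agmon identity applied with the operator), but it should be stated, since \lem{single-well-concentration} as quoted is an $L^2$ bound on the function only.
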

\begin{remark}
    \lem{semiclassical-main} is a special case of Proposition A.3 and the following discussion in Liu et al.~\cite{liu2023quantum}.
    Refer to \cite[Section 6.c]{dimassi1999spectral} and \cite[Section 4.2]{helffer2006semi} for more complete treatments.
    To be more specific, first, it is easy to verify that $V$ satisfies Assumption A.3 and A.4 in~\cite{liu2023quantum}.
    Therefore the first part of \lem{semiclassical-main} directly follows from Proposition A.3.
    Note that our assumption in \lem{semiclassical-main} does not ask for spectral property of $P$ like Proposition A.3 does in~\cite{liu2023quantum}: the assumption is unnecessary as pointed out in Remark 4.2.3 of~\cite{helffer2006semi}.
    Readers may have noticed that the definition of $R$ (similar for $L$) in \cite{liu2023quantum} should be $R \coloneqq \ballopen{\agmonsymbol}{a}{3S_0} - \ball{\agmonsymbol}{-a}{\eta}$, which is a disjoint union of two intervals (for small enough $\eta$) while ours discards the interval on which $V(x)$ has a nonzero lower bound that depends on $\eta$.
    This difference can be justified by that we are interested in the case when $h \to 0$ throughout:
    $P$ restricted on the other interval cannot have nontrivial eigenfunction with eigenvalue in $I(h)$ since $I(h) \to \{0\}$ and $V(x)$ on that interval is bounded below by a positive value.
    For the second part of \lem{semiclassical-main}, the orthogonality comes from Eq. (189) and (190) of~\cite{liu2023quantum}, while the bound of $\norm{\ket{v_{J,k}} - \ket{\phi_{J,k}} }$ can be derived from Eq. (187) of~\cite{liu2023quantum} and the decay of $\phi_\alpha, \psi_\alpha$ given in p. 59 of~\cite{dimassi1999spectral}. 
\end{remark}

\begin{lemma}[{\cite[Proposition 6.4]{dimassi1999spectral}}] \label{lem:single-well-concentration}
    Let $M$ be a finite-dimensional compact connected Riemannian manifold.
    Fix an smooth $V:M \to \mathbb{R}$.
    Define the energy well $U \coloneqq \{\vect{x} \ | \ V(\vect{x}) \leq 0 \}$.
    Let $h \to 0$ and assume that $u = u(\vect{x};h)$ is an eigenfunction of $P \coloneqq -h^2 \Delta + V$ with eigenvalue $o(1)$.
    Then,
    $$
        \norm{e^{\frac{\agmonfull{V}{E}{\vect{x}}{U}}{h}}u} \coloneqq \sqrt{\int_{M} \abs{ e^{\frac{\agmonfull{V}{E}{\vect{x}}{U}}{h}}u (\vect{x}) }^2 \dee \vect{x}} = \bigO{e^{\frac{o(1)}{h}}}.\footnote{The orginal theorem statement in~\cite{dimassi1999spectral} is that for every $\epsilon$, one have for small enough $h>0$ depending on $\epsilon$ such that the left-hand side is no more than $C_{\epsilon} e^{\epsilon/h}$ where $C_{\epsilon}$ is a constant that depends on $\epsilon$. This clearly induces the statement in \lem{single-well-concentration} by letting $\epsilon \to 0$.}
    $$
\end{lemma}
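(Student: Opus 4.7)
The argument is the classical Agmon weighted-energy estimate. The starting point is the identity
\begin{equation*}
    \operatorname{Re}\langle e^{\phi/h}(P-\lambda)u,\, e^{\phi/h}u\rangle
      = \norm{h\,\nabla(e^{\phi/h}u)}^2
      + \langle (V-\lambda-|\nabla\phi|^2)\,e^{\phi/h}u,\, e^{\phi/h}u\rangle,
\end{equation*}
valid for any real Lipschitz weight $\phi$ and any $u \in H^2_0(M)$; it follows from conjugating $-h^2\Delta$ by $e^{\phi/h}$ and integrating by parts, the cross terms cancelling because $\phi$ is real. Since $u$ is an eigenfunction with eigenvalue $\lambda = o(1)$, the left-hand side vanishes, so the potential-weighted term on the right must be nonpositive.

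The Agmon distance $\agmon{\vect{x}}{U}$ satisfies the eikonal inequality $|\nabla_{\vect{x}}\,\agmon{\vect{x}}{U}|^2 \leq \max(V(\vect{x}),0)$ almost everywhere, which is exactly what makes the weight compatible with the potential term. Fix a small parameter $\varepsilon > 0$ and take a smooth mollification $\phi_\varepsilon$ of $(1-\varepsilon)\,\agmon{\vect{x}}{U}$ satisfying $|\nabla\phi_\varepsilon|^2 \leq (1-\varepsilon/2)^2\max(V,0)$ outside a thin neighborhood $U_\varepsilon$ of the well. On $M \setminus U_\varepsilon$ one then has $V - \lambda - |\nabla\phi_\varepsilon|^2 \geq c(\varepsilon) > 0$ for $h$ sufficiently small, since $V$ is bounded below by a positive constant there while $\lambda = o(1)$. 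Splitting the integral in the Agmon identity across $U_\varepsilon$ and its complement, and dropping the nonnegative gradient term, yields
\begin{equation*}
    c(\varepsilon)\int_{M\setminus U_\varepsilon} |e^{\phi_\varepsilon/h}u|^2\,\dee\vect{x}
      \leq \int_{U_\varepsilon}\bigl(\lambda + |\nabla\phi_\varepsilon|^2 - V\bigr)|e^{\phi_\varepsilon/h}u|^2\,\dee\vect{x}.
\end{equation*}

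On $U_\varepsilon$ the weight obeys $\phi_\varepsilon = \bigO{\varepsilon}$, hence $e^{\phi_\varepsilon/h} \leq e^{C\varepsilon/h}$, and the right-hand side is $\bigO{e^{2C\varepsilon/h}}$ because $u$ is $L^2$-normalized on $M$. Combining with the trivial bound on $U_\varepsilon$ itself gives $\norm{e^{\phi_\varepsilon/h}u} = \bigO{e^{C\varepsilon/h}}$. Since $\varepsilon$ was arbitrary and $\phi_\varepsilon \to \agmon{\vect{x}}{U}$ pointwise as $\varepsilon \to 0$, letting $\varepsilon \to 0$ sufficiently slowly compared to $h \to 0$ delivers the claimed $\bigO{e^{o(1)/h}}$ bound on $\norm{e^{\agmon{\vect{x}}{U}/h}u}$.

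The main technical obstacle is the construction of the smoothed weight $\phi_\varepsilon$: the Agmon distance is only Lipschitz in general, and one must mollify it while still preserving an eikonal-type inequality with merely an $\varepsilon$-loss, uniformly on the compact manifold $M$. A secondary subtlety is that $\lambda$ is only assumed $o(1)$ rather than identically $0$, which is exactly why the weight is scaled by $(1-\varepsilon)$ instead of being set equal to $\agmon{\vect{x}}{U}$: the $\varepsilon/2$ slack in the estimate of $V-\lambda-|\nabla\phi_\varepsilon|^2$ is what absorbs the eigenvalue for sufficiently small $h$ and lets us close the estimate.
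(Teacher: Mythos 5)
Your proposal is a correct sketch of the classical Agmon weighted-energy estimate, and it is precisely the argument behind the cited result. The paper itself does not reprove this lemma---it imports it directly from Dimassi--Sj\"ostrand as Proposition 6.4 (with the footnote explaining the notational translation from the ``for every $\epsilon$ there is $C_\epsilon$'' form to the $\bigO{e^{o(1)/h}}$ form). So there is no paper-internal proof to compare against; the relevant comparison is with the textbook proof, and your outline matches it: conjugate $P$ by $e^{\phi/h}$, observe that the cross terms cancel in real part, invoke the eikonal inequality $|\nabla d(\cdot,U)|^2 \le \max(V,0)$ a.e., scale by $(1-\varepsilon)$ and mollify to get slack that absorbs the $o(1)$ eigenvalue, split the integral across a thin Agmon-neighborhood of the well, and drop the nonnegative kinetic term.

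Two small points worth flagging for precision. First, the chain of inequalities bounds $\|e^{\phi_\varepsilon/h}u\|$, not $\|e^{d(\cdot,U)/h}u\|$; since $\phi_\varepsilon \approx (1-\varepsilon)d(\cdot,U)$ and the Agmon diameter of a compact $M$ is finite, the missing factor is at most $e^{\varepsilon D/h}$ with $D$ the Agmon diameter, which is still $e^{\bigO{\varepsilon}/h}$ and hence harmless---but you should make this conversion explicit rather than only saying $\phi_\varepsilon\to d$ pointwise, since pointwise convergence alone does not control the weighted $L^2$ norm uniformly. Second, you correctly identify the mollification as the main technical obstacle; the standard remedy is to cap the weight, i.e.\ use $\min\bigl((1-\varepsilon)d(\cdot,U),\,K_\varepsilon\bigr)$ or mollify the capped Lipschitz function, so that $\nabla\phi_\varepsilon$ remains controlled near $\partial U$ where $V\to 0$. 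With these routine patches the argument is airtight and is indeed the proof given in the cited source.
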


\begin{lemma}[{\cite[Theorem 11.3]{hislop2012introduction}}] \label{lem:harmonic}
    Fix $H(\lambda) \coloneqq - \Delta + \lambda^2 V$ with $V: \mathbb{R}^n \to \mathbb{R}$ that satisfies the following:
    \begin{enumerate}[label=(\roman*)]
        \item $V \in C^3(\mathbb{R}^n)$; $V \geq 0$; $\liminf_{\abs{\vect{x}} \to \infty} V(\vect{x}) = +\infty$.
        \item $V$ has a single, nondegenerate zero at $\vect{x} = \vect{0}$, i.e., the Hessian of $V$ at $\vect{0}$, denoted by $A$, is positive semidefinite.
    \end{enumerate}
    Let $K \coloneqq -\Delta + \frac12 \vect{x}^\intercal A \vect{x}$.
    We list eigenvalues of $K$, including multiplicity, as $e_0 \leq e_1 \leq e_2 \leq \cdots$.
    Similarly, label eigenvalues of $H(\lambda)$ as $e_0(\lambda) \leq e_1(\lambda) \leq e_2(\lambda) \leq \cdots$.
    Then, for any $k$ we have
    $$
        e_k(\lambda) = \lambda e_k + \bigOdep{k}{\lambda^{\frac{4}{5}}}, \  \text{as } \lambda \to +\infty.
    $$
\end{lemma}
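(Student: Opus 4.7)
The plan is to establish a semiclassical harmonic approximation: after a rescaling, $H(\lambda)$ is close to $\lambda K$ near the origin, and one then promotes this to two-sided eigenvalue estimates via trial functions and an IMS localization argument. Set $\vect{x} = \lambda^{-1/2}\vect{y}$; under this unitary change of variables $H(\lambda)$ becomes $\lambda\bigl(-\Delta_{\vect{y}} + \lambda\,V(\lambda^{-1/2}\vect{y})\bigr)$. Taylor-expanding $V$ around its unique nondegenerate zero at $\vect{0}$ gives
$$
    \lambda\,V(\lambda^{-1/2}\vect{y}) = \tfrac{1}{2}\vect{y}\trans A\vect{y} + \bigO{\lambda^{-1/2}|\vect{y}|^3},
$$
so formally $H(\lambda)/\lambda$ equals $K$ modulo a small remainder whenever $|\vect{y}|$ is not too large in the rescaled coordinate. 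The goal is to turn this formal expansion into a rigorous eigenvalue estimate with explicit error $\bigOdep{k}{\lambda^{4/5}}$.

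For the upper bound $e_k(\lambda) \le \lambda e_k + \bigOdep{k}{\lambda^{4/5}}$, I apply the min-max principle with trial functions $\widetilde{\varphi}_j(\vect{x}) \coloneqq \lambda^{n/4}\chi(\vect{x}/R)\varphi_j(\lambda^{1/2}\vect{x})$ for $j=0,\dots,k$, where $\varphi_j$ are the Hermite-type eigenfunctions of $K$ and $\chi$ is a smooth cutoff supported in $\{|\vect{x}| \le R\}$. Because each $\varphi_j$ decays like a Gaussian on the unit scale, the cutoff at radius $R$ in $\vect{x}$ (equivalently, $R\lambda^{1/2}$ in $\vect{y}$) incurs only exponentially small errors as long as $R\lambda^{1/2} \to \infty$. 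A direct computation gives
$$
    \langle \widetilde{\varphi}_j, H(\lambda)\widetilde{\varphi}_j \rangle = \lambda e_j + \bigO{\lambda^2 R^3},
$$
where the $\lambda^2 R^3$ term captures the cubic Taylor remainder of $V$ on $\{|\vect{x}|\le R\}$. Gram--Schmidt on $\{\widetilde{\varphi}_j\}$ together with min-max then yields the upper bound with error $\bigO{\lambda^2 R^3}$.

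For the matching lower bound I use IMS localization. Choose a smooth quadratic partition $1 = \chi_1^2 + \chi_2^2$ with $\chi_1$ supported in $\{|\vect{x}|\le 2R\}$ and $\chi_2$ in $\{|\vect{x}|\ge R\}$; the IMS identity reads
$$
    H(\lambda) = \chi_1 H(\lambda)\chi_1 + \chi_2 H(\lambda)\chi_2 - |\nabla\chi_1|^2 - |\nabla\chi_2|^2.
$$
On the inner piece, the Taylor expansion yields $\chi_1 H(\lambda)\chi_1 \ge \chi_1(\lambda K - C\lambda^2 R^3)\chi_1$. On the outer piece, the combination of the nondegenerate zero at $\vect{0}$ and $\liminf_{|\vect{x}|\to\infty} V = +\infty$ gives $V(\vect{x}) \ge c\min(|\vect{x}|^2, 1)$, so the $(k{+}1)$st eigenvalue of $\chi_2 H(\lambda)\chi_2$ is of order at least $\lambda^2 R^2$. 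The localization cost $|\nabla\chi_i|^2$ is $\bigO{R^{-2}}$. Min-max then gives $e_k(\lambda) \ge \lambda e_k - \bigO{\lambda^2 R^3} - \bigO{R^{-2}}$, provided $\lambda^2 R^2$ dominates $\lambda e_k$.

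The main obstacle is choosing $R$ to balance the Taylor error $\lambda^2 R^3$ against the IMS cost $R^{-2}$. Setting $\lambda^2 R^3 = R^{-2}$ gives $R = \lambda^{-2/5}$, which is still much larger than the quantum length scale $\lambda^{-1/2}$ so that Gaussian tails of the $\widetilde{\varphi}_j$ are indeed negligible, and which yields the advertised error $\bigOdep{k}{\lambda^{4/5}}$. The $k$-dependence in the constant enters through the spatial extent of $\varphi_j$ in the rescaled variable: higher-$k$ Hermite functions have larger effective support, so both the Gaussian tail bound and the Gram--Schmidt overlap error degrade with $k$. Carefully tracking this $k$-dependence through both the upper and lower bound arguments is the only nontrivial bookkeeping.
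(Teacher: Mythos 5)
The paper does not reprove this lemma; it cites it directly from Hislop and Sigal's textbook, and your argument is essentially the standard proof from that reference: rescale to the quantum length scale, Taylor-expand $V$ around the nondegenerate minimum, get the upper bound from harmonic trial states with a cutoff, get the lower bound from IMS localization, and balance the Taylor remainder $O(\lambda^2 R^3)$ against the localization cost $O(R^{-2})$ at $R=\lambda^{-2/5}$, which yields the advertised $O_k(\lambda^{4/5})$. The parameter choices and the $k$-dependence you track are correct, so this is a faithful reconstruction of the cited result.
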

\begin{corollary} \label{cor:harmonic}
    Under the same assumption in \lem{harmonic} with $H(h) \coloneqq -h^2\Delta + V$, we have
    $$
        e_k(h) = h e_k + \bigOdep{k}{h^{\frac{6}{5}}}, \  \text{as } h \to 0.
    $$
\end{corollary}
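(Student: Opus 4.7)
\textbf{Proof plan for Corollary~\ref{cor:harmonic}.} The plan is to reduce the semiclassical operator $H(h) = -h^2\Delta + V$ directly to the coupling-constant operator $H(\lambda) = -\Delta + \lambda^2 V$ of \lem{harmonic} by a simple overall rescaling, and then transport the asymptotic expansion through the rescaling. No new analytic work should be required; the only things to check are that the hypotheses of \lem{harmonic} are preserved and that the error exponent transforms correctly.

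First I would set $\lambda \coloneqq 1/h$, so that $\lambda \to +\infty$ precisely when $h \to 0^+$, and observe the operator identity
\begin{equation*}
    H(h) \;=\; -h^2\Delta + V \;=\; h^2\bigl(-\Delta + h^{-2}V\bigr) \;=\; h^2\,H(\lambda),
\end{equation*}
which is valid on the common domain (the same $V$ satisfies hypotheses (i) and (ii) of \lem{harmonic} unchanged, so $H(\lambda)$ is well-defined and has discrete spectrum for every $\lambda$). Since multiplication by the positive scalar $h^2$ preserves the ordering of the spectrum, the $k$th eigenvalues satisfy $e_k(h) = h^2\, e_k(\lambda)$ with the same labeling convention.

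Next I would substitute the expansion provided by \lem{harmonic}, namely $e_k(\lambda) = \lambda\, e_k + \bigOdep{k}{\lambda^{4/5}}$ as $\lambda \to +\infty$, into the identity above:
\begin{equation*}
    e_k(h) \;=\; h^2\,e_k(\lambda) \;=\; h^2\bigl(\lambda\, e_k + \bigOdep{k}{\lambda^{4/5}}\bigr) \;=\; h^2 \cdot h^{-1}\, e_k + \bigOdep{k}{h^{2}\cdot h^{-4/5}} \;=\; h\, e_k + \bigOdep{k}{h^{6/5}},
\end{equation*}
which is exactly the claimed asymptotic as $h \to 0^+$. The only thing to verify carefully is that the $\bigOdep{k}{\cdot}$ constant produced by \lem{harmonic} really depends only on $k$ (and on $V$, which is fixed), so that after multiplication by the $h$-independent factor $h^2$ absorbed into $\lambda^{4/5} = h^{-4/5}$, the resulting constant still depends only on $k$; this is immediate from the statement of \lem{harmonic}.

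I do not expect any real obstacle: the step is a one-line rescaling, and the main point is simply to match the two forms of the semiclassical operator and to book-keep the exponents ($2 - 4/5 = 6/5$). The only minor subtlety worth mentioning in the writeup is that the hypotheses on $V$ in \lem{harmonic} are invariant under the rescaling because they do not involve $\lambda$, so no further verification of (i)–(ii) is needed for the rescaled operator.
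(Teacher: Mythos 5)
Your rescaling argument is correct and is precisely the intended derivation: the paper states Corollary~\ref{cor:harmonic} without explicit proof as an immediate consequence of \lem{harmonic}, and the substitution $\lambda = 1/h$ together with $H(h) = h^2 H(\lambda)$ is exactly the one-line bookkeeping ($2 - 4/5 = 6/5$) being relied upon.
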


\begin{lemma}[\cite{nakamura1986remark}] \label{lem:1d-double-well-gap}
    Fix $H(\lambda)\coloneqq-\Delta + \lambda^2 V$ with $V:\mathbb{R} \to \mathbb{R}$ that satisfies the following:
    \begin{enumerate}[label=(\roman*)]
        \item $V$ is a nonnegative continuous function such that $V(x)$ takes its absolute minimum zero at some $x=a$ and $x=b$ for some $a,b \in \mathbb{R}$.
        \item $\liminf_{\abs{\vect{x}} \to \infty} V(x) > 0$.
    \end{enumerate}
    Denote by $e_{n-1}(k)$ the $n$th smallest eigenvalue of $H(\lambda)$.
    Fix an arbitrary $n$, we have
    $$
        \liminf_{k \to \infty} \frac{1}{k} \ln (e_{n}(k) - e_{n-1}(k)) \geq -\int_{a}^b \sqrt{V(x)} \dee x.
    $$
\end{lemma}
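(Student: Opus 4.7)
The plan is to combine a semiclassical spectral reduction (in the spirit of \lem{semiclassical-main}) with a quantitative WKB \emph{lower} bound on tunneling matrix elements. As a preliminary, the harmonic approximation \cor{harmonic} guarantees that for any fixed $n$ the lowest eigenvalues $e_0(k),\dots,e_n(k)$ of $H(k)=-\Delta + k^2 V$ all lie in an interval $I(k) \to \{0\}$ as $k \to \infty$, so we may analyze them via semiclassical tools adapted to the regime $h = 1/k \to 0$.

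First, I would invoke the semiclassical reduction to produce an almost-orthonormal basis $\{\ket{v_{J,j}}\}$, indexed by a well $J \in \{L,R\}$ and a local index $j$, of the spectral subspace $\mathcal{F}$ of $H(k)$ for $I(k)$. Each $\ket{v_{J,j}}$ is close to a single-well eigenstate $\ket{\phi_{J,j}}$ of the Dirichlet realizations $H_L, H_R$ on neighborhoods of the two wells, with error $\bigO{e^{-k(S_0 - o(1))}}$. In this basis $H(k)|_{\mathcal{F}}$ is represented by a finite matrix $M$ whose diagonal entries approximate the single-well eigenvalues $\mu_{L,j},\mu_{R,j}$ with error $\bigO{e^{-k(S_0-o(1))}}$, and whose off-diagonal entries are the tunneling amplitudes $w_{ij} = \langle v_{L,j}, H(k) v_{R,j'}\rangle$.

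Second, I would observe that $e_n(k)-e_{n-1}(k)$ equals the gap between two consecutive eigenvalues of $M$. For a $2\times 2$ block of $M$ with diagonal $\alpha,\beta$ and off-diagonal $w$, the splitting equals $\sqrt{(\alpha-\beta)^2+4|w|^2}\ge \max\{|\alpha-\beta|,2|w|\}$. If both relevant eigenvalues originate from the same well, the splitting is at least the single-well gap, which is $\Omega(k^{-1})$ by \cor{harmonic}---far larger than $e^{-kS_0}$. The critical case is when they originate from different wells with nearly matching diagonal entries (as happens, e.g., when $V$ is symmetric): the gap reduces to $\approx 2|w|$, and we need $|w|\ge e^{-k(S_0+o(1))}$.

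The hard part is this quantitative \emph{lower} bound on $|w|$. My strategy would be to rewrite $w$, via Green's identity applied to $\phi_L$ and $\phi_R$ on the cut $c \in (a,b)$, as a Wronskian boundary term: up to exponentially small remainders, $w \sim \phi_L(c)\phi_R'(c) - \phi_L'(c)\phi_R(c)$. A WKB ansatz in the barrier, $\phi_J(x) \sim k^{1/2} V(x)^{-1/4}\exp\bigl(-k\int_{J}^{x}\sqrt{V}\bigr)$, then exhibits the Wronskian as $e^{-kS_0}$ multiplied by an explicit prefactor of the form $k^{1-o(1)}$, which is nonzero to leading order. This delivers $|w|\ge e^{-k(S_0+o(1))}$ and hence the stated bound on the $\liminf$. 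The principal subtlety is pushing the WKB expansion with exponentially accurate remainders so that the leading rate $-S_0$ is captured tightly---precisely where the delicate one-dimensional double-well analysis of Helffer--Sj\"ostrand and Nakamura is needed.
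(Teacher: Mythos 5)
The paper does not prove this lemma at all --- it is stated as an imported result from Nakamura (1986). So there is no internal proof to compare against, and your sketch must stand on its own. As it stands, it has three genuine gaps. First, a hypothesis mismatch: the lemma assumes only that $V$ is continuous and nonnegative with zeros at $a,b$ (possibly degenerate) and $\liminf_{|x|\to\infty}V>0$, whereas the tools you lean on (\cor{harmonic} requires $V\in C^3$ with nondegenerate minima; \lem{semiclassical-main} is stated for the smooth double-well of \defn{dw}) impose strictly stronger regularity. Your sketch therefore does not prove the lemma in the stated generality; indeed the companion upper-bound \lem{1d-double-well-gap2} does assume smoothness and nondegeneracy, and the asymmetry in hypotheses is exactly the signal that the lower bound is obtained by a softer, non-WKB argument. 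Second, a normalization slip: for $H(k)=-\Delta+k^2V$ the single-well harmonic eigenvalues scale like $k\sqrt{2V''(a)}\,(j+\tfrac12)$, so the same-well gap is $\Omega(k)$, not $\Omega(k^{-1})$; this is harmless for the conclusion but indicates a confusion between the normalizations $-\Delta+k^2V$ and $-h^2\Delta+V$.

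Third, and most importantly, you do not actually establish the crux, namely the lower bound $|w|\ge e^{-k(S_0+o(1))}$ on the tunneling element. The Wronskian/Green's-identity reduction is sound in spirit (this is the Harrell / Helffer--Sj\"ostrand route), but the WKB ansatz $\phi_J\sim k^{1/2}V^{-1/4}\exp(-k\int\sqrt V)$ breaks down at the turning points, the normalization constants and the exponentially precise remainder control are where essentially all of the work lies, and with degenerate minima the standard Airy matching does not even apply. You acknowledge this and defer to ``Helffer--Sj\"ostrand and Nakamura,'' but since the lemma \emph{is} Nakamura's result, your proposal bottoms out by invoking the very theorem it is meant to prove. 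To turn this into a proof you would need either a careful uniform WKB/Airy analysis (and then you would be reproving the lemma under stronger hypotheses than it states) or a different argument --- e.g. the comparison-type argument Nakamura actually uses, which is robust enough to handle merely continuous $V$ --- rather than deferring to the citation.
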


\begin{lemma}[{\cite[Theorem 12.3]{hislop2012introduction}}]  \label{lem:1d-double-well-gap2}
    Fix $H(\lambda)\coloneqq-\Delta + \lambda^2 V$ with $V:\mathbb{R} \to \mathbb{R}$ that satisfies the following:
    \begin{enumerate}[label=(\roman*)]
        \item $V$ is symmetric, i.e., $V(x) = V(-x)$.\footnote{This induces Assumption (A3) in the description of Theorem 12.3 in~\cite{hislop2012introduction}.}
        \item  $V \in C(\mathbb{R}^n)$; $V \geq 0$; $\liminf_{\abs{x} \to \infty} V(x) = +\infty$.
        \item $V(x)$ has two nondegenerate minima at some $x=-a$ and $x=a$ for some $a \in \mathbb{R}$.
    \end{enumerate}
    Denote by $e_{n-1}(k)$ the $n$th smallest eigenvalue of $H(\lambda)$.
    Fix an arbitrary $n$, we have
    $$
        \limsup_{k \to \infty} \frac{1}{k} \ln (e_{1}(k) - e_{0}(k)) \leq -\int_{a}^b \sqrt{V(x)} \dee x.
    $$
\end{lemma}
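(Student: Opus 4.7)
The plan is to exploit the $\mathbb{Z}_2$ symmetry of $V$ to reduce the gap estimate to a half-line tunneling calculation governed by the Agmon distance $S_0\coloneqq\int_{-a}^{a}\sqrt{V(x)}\,\dee x$. Since $V(-x)=V(x)$, the parity $P\colon u(x)\mapsto u(-x)$ commutes with $H(\lambda)$, so $L^2(\mathbb{R})=L^2_{+}\oplus L^2_{-}$ splits into even and odd sectors and the spectrum of $H(\lambda)$ is the union of the spectra of its restrictions $H^{+}(\lambda)$ and $H^{-}(\lambda)$. A Perron--Frobenius argument (the ground state of a one-dimensional Schr\"odinger operator can be taken strictly positive, hence even under $P$) places the true ground state in the even sector and the first excited state in the odd sector, so it suffices to bound $e_0^{-}(\lambda)-e_0^{+}(\lambda)$ from above, where $e_0^{\pm}(\lambda)$ denotes the smallest eigenvalue of $H^{\pm}(\lambda)$.

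To build matching trial functions in each parity sector, I would take the Dirichlet realization of $-\Delta+\lambda^2 V$ on a half-line neighborhood $\Omega_R$ of $a$ that excludes $-a$ (chosen in the spirit of the intervals $L$ and $R$ used in \lem{semiclassical-main}), write $\phi_R$ for its ground state and $\mu_R(\lambda)$ for the corresponding Dirichlet eigenvalue, and set $\phi_L(x)\coloneqq\phi_R(-x)$. The combinations $\phi_{\pm}\coloneqq(\phi_R\pm\phi_L)/\sqrt{2(1\pm\langle\phi_R,\phi_L\rangle)}$ are automatically parity eigenvectors, hence admissible in $L^2_{\pm}$, and a short computation using $H\phi_R=\mu_R\phi_R$ inside $\Omega_R$ yields
\[
    \langle\phi_{\pm},H(\lambda)\phi_{\pm}\rangle \;=\; \mu_R(\lambda)\pm w(\lambda)+\bigO{r(\lambda)},
\]
where $w(\lambda)\coloneqq\langle\phi_R,(H-\mu_R)\phi_L\rangle$ is the tunneling matrix element and $r(\lambda)$ collects boundary contributions from the cutoff at $\partial\Omega_R$. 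Applying the min--max principle in each parity sector separately then gives the clean sandwich $e_0^{-}(\lambda)-e_0^{+}(\lambda)\le 2|w(\lambda)|+\bigO{r(\lambda)}$.

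The heart of the argument is therefore an exponential bound on $w(\lambda)$ via Agmon estimates: for every $\epsilon>0$ and all sufficiently large $\lambda$,
\[
    |\phi_R(x)|+\lambda^{-1}|\phi_R'(x)|\;\le\; C_\epsilon\, e^{-\lambda(\agmon{x}{a}-\epsilon)},
\]
with the analogous decay for $\phi_L$ measured from $-a$. Because $w(\lambda)$ reduces (after integration by parts against the Dirichlet boundary condition) to a Wronskian of $\phi_R$ and $\phi_L$ at $\partial\Omega_R$, the triangle inequality $\agmon{x}{a}+\agmon{x}{-a}\ge \agmon{-a}{a}=S_0$ converts the product of these decays into the uniform bound $|w(\lambda)|\le C_\epsilon\, e^{-\lambda(S_0-\epsilon)}$. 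Taking logarithms, dividing by $\lambda$, and sending $\lambda\to\infty$ and then $\epsilon\to 0$ yields exactly the claimed limsup estimate. The main obstacle I expect is arranging the geometry of $\Omega_R$ so that $\mu_R(\lambda)$ is itself exponentially close to $e_0^{+}(\lambda)$ and the remainder $r(\lambda)$ is dominated by $e^{-\lambda(S_0-\epsilon)}$; this forces $\partial\Omega_R$ to lie at Agmon distance at least $S_0-\bigO{\epsilon}$ from $a$, matching the localization choices that power \lem{semiclassical-main}.
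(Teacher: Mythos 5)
The paper cites this lemma as Theorem~12.3 of Hislop--Sigal and gives no internal proof, so there is no argument of the paper's to compare against; your reconstruction has to be judged on its own terms.

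Your overall architecture matches the standard semiclassical treatment of a symmetric double well (and what the cited textbook does): use parity symmetry to split into even/odd sectors, note the ground state is even and the first excited state is odd by positivity/nodal arguments, build trial functions $\phi_\pm$ from the Dirichlet ground state of the single-well problem and its reflection, and control the tunneling matrix element $w(\lambda)$ by Agmon decay plus a Wronskian identity at $\partial\Omega_R$. That skeleton is correct, and you also correctly use $\int_{-a}^{a}\sqrt V$ (the lemma statement's $\int_a^b$ is a typo carried over from \lem{1d-double-well-gap}).

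The gap is the step ``applying the min--max principle in each parity sector separately then gives the clean sandwich $e_0^{-}(\lambda)-e_0^{+}(\lambda)\le 2|w(\lambda)|+\bigO{r(\lambda)}$.'' Testing with $\phi_\pm$ in the respective sectors only produces \emph{upper} bounds $e_0^{\pm}(\lambda)\le \mu_R(\lambda)\pm w(\lambda)+\bigO{r(\lambda)}$; subtracting two upper bounds gives nothing. To control the gap you also need a \emph{lower} bound of the form $e_0^{+}(\lambda)\ge \mu_R(\lambda)-|w(\lambda)|-\bigO{r(\lambda)}$, and this does not follow from min--max with a single trial function. It is usually obtained by an IMS localization formula (partition of unity with one cutoff per well plus a ``far'' region where $\lambda^2 V$ dominates) or a resolvent/Feshbach argument showing the single-well Dirichlet eigenvalue is exponentially stable once the second well is opened. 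You flag exactly this at the end (``the main obstacle I expect is arranging the geometry of $\Omega_R$ so that $\mu_R(\lambda)$ is itself exponentially close to $e_0^{+}(\lambda)$\dots''), which is the right diagnosis, but it is asserted rather than proved and it is precisely what makes the sandwich nontrivial; as written, the proposal has named but not closed this step. Once that localization/lower-bound estimate is supplied, the Agmon bound on $w(\lambda)$ and the $\limsup$ argument you sketch do carry through to the claimed inequality.
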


Combining \lem{1d-double-well-gap} and \ref{lem:1d-double-well-gap2} we immediately have the following:

\begin{corollary} \label{cor:1d-double-well-gap}
    Under the same assumption in \lem{1d-double-well-gap2} with $H(h) \coloneqq -h^2\Delta + V$, we have
    $$
        e_1(k) - e_{0}(k) \subseteq \bigOmega{e^{-\frac{S_0 + o(1)}{h}}} \cap \bigO{e^{-\frac{S_0 - o(1)}{h}}}, \quad S_0 \coloneqq \agmonfull{V}{0}{a}{b}.
    $$
\end{corollary}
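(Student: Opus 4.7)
The plan is to derive the corollary by a simple rescaling argument that reduces $H(h) = -h^2\Delta + V$ to the form $\widetilde{H}(\lambda) = -\Delta + \lambda^2 V$ considered in \lem{1d-double-well-gap} and \lem{1d-double-well-gap2}. Write $H(h) = h^2\bigl(-\Delta + h^{-2}V\bigr)$; since multiplication by $h^2$ does not change eigenfunctions, if $\tilde e_n(\lambda)$ denotes the $(n+1)$st smallest eigenvalue of $\widetilde H(\lambda)$, then
$$
    e_n(h) \;=\; h^2\,\tilde e_n\!\bigl(1/h\bigr).
$$
I will also check that the hypotheses of \lem{1d-double-well-gap2} used in the corollary statement imply those of \lem{1d-double-well-gap}: symmetry plus two nondegenerate minima at $\pm a$ with $V\ge 0$ and $\liminf_{|x|\to\infty}V(x)=+\infty$ clearly gives a continuous nonnegative $V$ whose absolute minimum $0$ is attained at $\pm a$ and with $\liminf_{|x|\to\infty}V(x)>0$, so both lemmas apply with the same $a,b=-a$ and hence with the same $S_0=\int_{-a}^{a}\sqrt{V(x)}\,\dee x$.

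Next I would apply \lem{1d-double-well-gap} with $\lambda=1/h$ to obtain
$$
    \liminf_{h\to 0}\,h\,\ln\bigl(\tilde e_1(1/h)-\tilde e_0(1/h)\bigr)\;\ge\;-S_0.
$$
Substituting $\tilde e_1(1/h)-\tilde e_0(1/h)=h^{-2}\bigl(e_1(h)-e_0(h)\bigr)$ and expanding the logarithm gives
$$
    h\,\ln\bigl(e_1(h)-e_0(h)\bigr) \;\ge\; -S_0 + 2h\ln h - o(1),
$$
and since $2h\ln h = o(1)$ as $h\to 0^{+}$, this absorbs cleanly into the $o(1)$ term, yielding
$$
    e_1(h)-e_0(h)\;\ge\;e^{-(S_0+o(1))/h},
$$
which is the $\Omega$ half of the claim. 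The $O$ half is entirely analogous: \lem{1d-double-well-gap2} gives the matching $\limsup$ inequality, and the same substitution together with $2h\ln h\to 0$ yields $e_1(h)-e_0(h)\le e^{-(S_0-o(1))/h}$.

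There is no genuine obstacle here; the only point that needs any care is the bookkeeping of the prefactor $h^2$ from the rescaling and the fact that it contributes $2h\ln h$ to $h\ln(e_1(h)-e_0(h))$, which is $o(1)$ and hence can be absorbed into the $o(1)$ in the exponent without changing the form of the bound. I would note this explicitly so the reader sees why the polynomial-in-$h$ factor is harmless and why the bounds from the two lemmas, stated for $\widetilde H(\lambda)$ in the limit $\lambda\to\infty$, translate verbatim (up to the asymptotic $o(1)$) into the semiclassical $h\to 0$ statement of the corollary.
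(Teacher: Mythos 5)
Your proof is correct and takes the same approach the paper intends: the paper compresses the argument into the remark ``Combining \lem{1d-double-well-gap} and \lem{1d-double-well-gap2} we immediately have the following,'' and your rescaling $H(h)=h^2\widetilde H(1/h)$, the verification that the hypotheses of \lem{1d-double-well-gap2} imply those of \lem{1d-double-well-gap}, and the observation that the $h^2$ prefactor contributes only a harmless $2h\ln h=o(1)$ to the exponent are exactly the bookkeeping the paper is glossing over.
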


\begin{remarknumbered} \label{rem:extend}
    Corollary~\ref{cor:harmonic} (resp., Corollary~\ref{cor:1d-double-well-gap}) can be extended to the case where $V$ is defined only on a compact or a bounded open subset $M \subseteq \mathbb{R}^n$ (resp., $M \subseteq \mathbb{R}$) which is connected: in such a case we require eigenfunctions to vanish on the boundary of $M$ and we no longer need any assumption for $\liminf_{\abs{x} \to \infty} V(x)$. To understand this, it is important to note that the spectrum will be unchanged if we extend $V$ to $\widehat{V}$, where
    $$
        \widehat{V}(\vect{x}) = \left\{
            \begin{array}{ll}
                V(\vect{x}), & x \in M; \\
                +\infty, & x \notin M.
            \end{array}
        \right.
    $$
    Now that $\widehat{V}$ is defined on $\mathbb{R}^n$, the only obstacle for applying Corollary~\ref{cor:harmonic} is that $\widehat{V}$ is not continuous, let alone smooth.
    This can be tackled by a standard method that constructs a smooth function that is sufficiently close to $\widehat{V}$.
\end{remarknumbered}

\begin{theorem} \label{thm:double-well-main}
    Let $E_0 \leq E_1 \leq E_2 \leq \dots$ be eigenvalues of $P \coloneqq -h^2 \Delta + V$ with a double-well function $V$ and $\{\ket{\psi_k}\}_{k \in \mathbb{N}}$ their corresponding eigenstates.
    \begin{enumerate}[label=(\roman*)]
        \item $E_1 - E_0 \subseteq \bigOmega{e^{-\frac{S_0 + o(1)}{h}}} \cap \bigO{e^{-\frac{S_0 - o(1)}{h}}}$.
        \item $E_2 - E_1 \geq \Omega(h)$.
        \item There exists a $c \in  \{\pm1\}$ such that $\ket{\logiczero} \coloneqq \frac{1}{\sqrt2} \left( \ket{\psi_0} + c \ket{\psi_1} \right)$ is bounded away from $x < a$. To be more specific, the following inequality holds:
        $$
            \forall x_0 \in (-a,a),\quad \int_{-1}^{x_0} \abs{\logiczero(x)}^2 \dee x \leq \bigO{e^{-(2-o(1))\frac{\agmon{a}{x_0}}{h}}},
        $$
        where $\agmonsymbol\coloneqq\agmonsymbolandparams{V}{0}$.
        A similar inequality holds for $\ket{\logicone} \coloneqq \frac{1}{\sqrt2} \left( \ket{\psi_0} - c \ket{\psi_1} \right)$ since it is just $\ket{\logiczero}$ reflected across the y-axis.
    \end{enumerate}
\end{theorem}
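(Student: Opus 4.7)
The plan is to treat the three claims in order, using the semiclassical machinery assembled earlier in the section.

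For (i), I would invoke Corollary~\ref{cor:1d-double-well-gap} directly, combined with the extension described in Remark~\ref{rem:extend}. The double-well function $V\colon[-1,1]\to\mathbb{R}$ is by definition symmetric, smooth, nonnegative, and has two nondegenerate zero-minima at $\pm a$, exactly matching the hypotheses; the Dirichlet boundary on $[-1,1]$ is absorbed into the remark's extension-by-$+\infty$ argument. This immediately yields $E_1 - E_0 \in \bigOmega{e^{-(S_0+o(1))/h}} \cap \bigO{e^{-(S_0-o(1))/h}}$, proving (i).

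For (ii), I would combine harmonic approximation at each well with the semiclassical bijection. Applying Corollary~\ref{cor:harmonic} (again via Remark~\ref{rem:extend}) to the Dirichlet realization $P_R$, after translating the unique minimum in $R$ to the origin, shows that the $k$-th eigenvalue of $P_R$ near $0$ equals $h e_k + \bigOdep{k}{h^{6/5}}$, where $e_0 < e_1 < \cdots$ are the eigenvalues of the harmonic oscillator $-\Delta + \tfrac{1}{2}V''(a)x^2$; by symmetry the same holds for $P_L$. In particular $e_1 - e_0$ is a positive constant independent of $h$. Invoking Lemma~\ref{lem:semiclassical-main} with $I(h)=[0, h(e_0+e_1)/2]$ shows that exactly two eigenvalues of $P$ lie there, namely $E_0, E_1 \approx h e_0$ (one from each well); a second invocation with $I(h)=[0, h(e_1+e_2)/2]$ places $E_2, E_3$ near $h e_1$. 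Hence $E_2 - E_1 = h(e_1 - e_0) + \bigO{h^{6/5}} = \bigOmega{h}$.

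For (iii), I would exploit the symmetry of $V$ together with Agmon decay. Since $V$ is even and the ground state is nodeless, $\psi_0$ must be even; by orthogonality, $\psi_1$ must then be odd. By Lemma~\ref{lem:semiclassical-main}(ii), there exists $v_{R,0} = \alpha\psi_0 + \beta\psi_1 \in \spn\{\psi_0,\psi_1\}$ with $\norm{v_{R,0}-\phi_{R,0}}$ exponentially small, and by symmetry $v_{L,0}(x) = v_{R,0}(-x) = \alpha\psi_0(x) - \beta\psi_1(x)$. The near-orthogonality $\abs{\braket{v_{L,0}|v_{R,0}}} = \abs{\alpha^2-\beta^2}$, together with the near-normalization $\alpha^2+\beta^2 \approx 1$, forces $\alpha^2,\beta^2 \approx 1/2$; choosing $c = \sgn(\alpha\beta) \in \{\pm 1\}$ then shows $\logiczero = \frac{1}{\sqrt{2}}(\psi_0 + c\psi_1)$ is exponentially close to $\pm\phi_{R,0}$ (the global sign being irrelevant for $\abs{\logiczero}^2$). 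Next, Lemma~\ref{lem:single-well-concentration} applied on $R$ gives $\int_R e^{2\agmon{x}{a}/h}\abs{\phi_{R,0}(x)}^2\,\dee x = \bigO{e^{o(1)/h}}$; since $\agmon{x}{a} \geq \agmon{a}{x_0}$ for $x \in [-1,x_0]\cap R$ (Agmon distance to $a$ increases monotonically away from $a$ along the real line where $V>0$), a Chebyshev-type step yields $\int_{-1}^{x_0} \abs{\phi_{R,0}}^2\,\dee x \leq \bigO{e^{-(2-o(1))\agmon{a}{x_0}/h}}$, and the replacement error from going back to $\logiczero$ is absorbed.

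The main obstacle I anticipate is controlling the interplay between the auxiliary parameter $\eta$ of Lemma~\ref{lem:semiclassical-main} and the $o(1)$ term sought in (iii): the replacement error carries exponent $-2(S_0-\epsilon(\eta))/h$ while the target bound has exponent $-2\agmon{a}{x_0}/h$, and $\agmon{a}{x_0}$ can approach $S_0$ as $x_0 \to -a^+$. One therefore has to pick $\eta$ small enough (possibly depending on $x_0$) so that $\epsilon(\eta) < S_0 - \agmon{a}{x_0}$, which is always achievable since $\epsilon(\eta)\to 0$ as $\eta\to 0$; the $o(1)$ term in the exponent then collects all such $\eta$- and $h$-dependent residuals.
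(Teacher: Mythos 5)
Your proposal is correct and follows essentially the same route as the paper's proof: (i) direct application of Corollary~\ref{cor:1d-double-well-gap} with Remark~\ref{rem:extend}; (ii) harmonic approximation of $P_L,P_R$ via Corollary~\ref{cor:harmonic} followed by the bijection of Lemma~\ref{lem:semiclassical-main}(i) to pair $E_0,E_1$ with $e_0(h)$ and $E_2,E_3$ with $e_1(h)$; and (iii) the symmetry argument $\psi_0$ even, $\psi_1$ odd plus near-orthogonality/normalization from Lemma~\ref{lem:semiclassical-main}(ii) to pin $\alpha,\beta\approx 1/\sqrt{2}$, then Agmon decay via Lemma~\ref{lem:single-well-concentration}, with the same observation at the end that $\eta$ must be taken small enough relative to $S_0-\agmon{a}{x_0}$ to absorb the replacement error. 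The only deviations are cosmetic (two narrower $I(h)$ intervals instead of the paper's single wider one in (ii), and writing $c=\sgn(\alpha\beta)$ rather than normalizing the sign of $\alpha$ first).
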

\begin{proof}[Proof of \thm{double-well-main}]
    We prove (i), (ii) and (iii) in order.
    \paragraph{(i).}
    This comes from directly applying Corollary~\ref{cor:1d-double-well-gap} with Remark~\ref{rem:extend}.
    
    \paragraph{(ii).}
    Recall that $P_L$ is just the reflection of $P_R$ across the y-axis. 
    By Corollary~\ref{cor:harmonic} and Remark~\ref{rem:extend}, the eigenvalues of $P_L$ and $P_R$ are 
    $$
        e_k(h) = (2k+1)qh+\bigOdep{k}{h^{6/5}},\quad k =0,1,2,\dots,
    $$ where $q \coloneqq V''(-a) = V''(a)$.
    Therefore, by setting interval $I(h)$ to be $[\frac12 e_0(h), \frac12(e_1(h)+e_2(h))]$ it satisfies the condition in \lem{semiclassical-main}.
    \lem{semiclassical-main}(i) tells us $E_{2k+1/2\pm1/2}= e_k(h) \pm \bigOdep{\eta}{e^{-\frac{S_0 -2\eta - o(1)}{h}}}$. 
    Therefore, (ii) is obtained by picking a sufficiently small $\eta$.

    \paragraph{(iii).}
    Similar to what we have done for (ii), applying \lem{semiclassical-main} with $I(h) = [\frac12 e_0(h), \frac12(e_0(h)+e_1(h))]$ we obtain a symmetric pair $\ket{v_{J}} \coloneqq \ket{v_{J,0}}$ for $J \in \{L,R\}$ from symmetric pair $\ket{\phi_{J}} \coloneqq \ket{\phi_{J,0}}$.
    Note that $\ket{v_J} \in \mathcal{F} = \mathrm{span}(\ket{\psi_0}, \ket{\psi_1})$.
    By varying $\eta$ from \lem{semiclassical-main}(ii) we have the following holds for any $\varepsilon > 0$:
    \begin{equation} \label{eq:1d-double-well.ortho}
        \begin{gathered}
        \norm{v_J} \in \norm{\phi_J} \pm \norm{v_J - \phi_J} \in 1 \pm \bigOdep{\varepsilon}{e^{-\frac{S_0-\varepsilon}{h}}}, \\
        \abs{\braket{v_L | v_R}} \leq \bigOdep{\varepsilon}{e^{-\frac{S_0-\varepsilon}{h}}}.
        \end{gathered}
    \end{equation}
    For convenience we use notation $\bigO{r}$ to represent $\bigOdep{\varepsilon}{e^{-\frac{S_0-\varepsilon}{h}}}$ below.
    Now, due to the symmetry between $\ket{v_L}$ and $\ket{v_R}$ we can assume for some $\alpha,\beta \in \mathbb{R}$ that
    \begin{equation} \label{eq:1d-double-well.factor}
        \begin{gathered}
            \ket{v_L} = \alpha \ket{\psi_0} - \beta \ket{\psi_1}, \\
            \ket{v_R} = \alpha \ket{\psi_0} + \beta \ket{\psi_1},
        \end{gathered}
    \end{equation}
    because $\psi_0(\cdot)$ is symmetric and $\psi_1(\cdot)$ antisymmetric regarding $x = 0$.
    Plugging \eqn{1d-double-well.factor} into \eqn{1d-double-well.ortho} we get $\alpha^2 + \beta^2 \in 1 \pm \bigO{r}$ and $\alpha^2 - \beta^2 \in \pm \bigO{r}$.
    Therefore,
    $$
        \abs{\alpha},\abs{\beta} \in \frac{1}{\sqrt2} \pm \bigO{r}.
    $$
    Without loss of generality let $\alpha  \in \frac{1}{\sqrt2} \pm \bigO{r}$.
    Therefore $\ket{v_R} = \alpha \ket{\psi_0} + c \abs{\beta} \ket{\psi_1}$ for some $c \in \{ \pm1\}$, which implies
    $
        \ket{\logiczero} - \ket{v_R} \in \pm \bigO{r} \ket{\psi_0} \pm \bigO{r} \ket{\psi_1}.
    $
    
    \begin{claim} \label{clm:zero-to-v}
        $\norm{\ket{\logiczero} - \ket{v_R}} = \bigO{r}$.
    \end{claim}

    The next claim is immediate from \lem{semiclassical-main}(ii).

    \begin{claim} \label{clm:v-to-phi}
        $\norm{\ket{v_R} - \ket{\phi_R}}= \bigO{r}$. 
    \end{claim}

    Now that $\ket{\phi_R}$ is clearly a good estimate of $\ket{\logiczero}$.
    Our plan is to show the inequality in (iii) with $\ket{\logiczero}$ replaced by $\ket{\phi_R}$.
    Recall that $\ket{\phi_R}$ is the ground state of $P_R$.
    Note that we are able to apply \lem{single-well-concentration} on $P_R$ despite $R$ not being a finite-dimensional compact Riemannian manifold: 
    It does the work by naturally extending the domain of $V$ in \lem{single-well-concentration} from $R$ to $\overline{R}$ and nothing will change effectively.
    Set $U = \{a\}$ in \lem{single-well-concentration}.
    The eigenvalue for $\ket{\phi_R}$ is $\Theta(h) \leq o(1)$ as discussed in the proof of (ii).
    Set $\agmonsymbol \coloneqq \agmonsymbolandparams{V}{0}$ as usual.
    Extend the definition of $\phi_R(x)$ by letting $\phi_R(x) = 0$ for $x \notin R$.
    We have
    \begin{align*}
        \int_{-1}^{x_0} \abs{\phi_R(x)}^2 \dee x &\leq e^{-2\frac{\agmon{a}{x_0}}{h}} \int_{-1}^{x_0} \abs{ e^{\frac{\agmon{a}{x}}{h}}\phi_R(x)}^2 \dee x & (\agmon{a}{x} \geq \agmon{a}{x_0})\\
        &\leq e^{-2\frac{\agmon{a}{x_0}}{h}} \int_R \abs{ e^{\frac{\agmon{a}{x}}{h}}\phi_R(x)}^2 \dee x \\
        &\leq e^{-2\frac{\agmon{a}{x_0}}{h}} \bigO{e^{\frac{o(1)}{h}}} & (\textrm{\lem{single-well-concentration}}) \\
        &\leq \bigO{e^{-2(1-o(1))\frac{\agmon{a}{x_0}}{h}}}.
    \end{align*}
    Combining the above with \clm{zero-to-v} and \ref{clm:v-to-phi} we obtain
    $$
        \sqrt{\int_{-1}^{x_0} \abs{\logiczero(x)}^2 \dee x} \leq  \sqrt{\int_{-1}^{x_0} \abs{\phi_R(x)}^2}  + \bigO{r} \leq \bigO{e^{-(1-o(1))\frac{\agmon{a}{x_0}}{h}}} + \bigOdep{\varepsilon}{e^{-\frac{S_0-\varepsilon}{h}}}
    $$
    from triangle inequality
    $$
         \sqrt{\int_{-1}^{x_0} \abs{\logiczero(x)}^2 \dee x} -  \sqrt{\int_{-1}^{x_0} \abs{\phi_R(x)}^2} \leq \sqrt{\int_{-1}^{x_0} \abs{\logiczero(x) - \phi_R(x)}^2 \dee x} \leq \norm{\logiczero(x) - \phi_R(x)} \leq \bigO{r}.
    $$
    Note that the latter term $\bigOdep{\varepsilon}{e^{-\frac{S_0-\varepsilon}{h}}}$ can be absorbed into the former term by choosing sufficiently small $\varepsilon$, since $\agmon{a}{x_0} < \agmon{a}{-a} = S_0$ from i) $x_0 \in (-a,a)$ and that ii) $\widehat{V}(x)>0$ for any $x \in \mathbb{R}-\{a\}$. 
\end{proof}

\subsection{Perturbation Theory} \label{sec:pert}

We first introduce an alternative way to identify eigenvalues for operators that may not have purely discrete spectrum.

\begin{definition} \label{def:eigen}
    Let $H$ be a self-adjoint operator that is bounded from below, i.e., $H \succeq cI$ for some $c$.
    For $k \in \mathbb{N}$, define
    $$
        \eigen_k(H) \coloneqq \sup_{\ket{\phi_0}, \dots, \ket{\phi_{k-1}} \in \dom(H)} \inf_{\substack{\ket{\psi} \in \dom(H) \\ {\norm{\ket{\psi}}=1} \\\ket{\psi} \perp \ket{\phi_0},\dots,\ket{\phi_{k-1}} }} {\braket{\psi | H | \psi}}.
    $$
    For notational convenience, We also define $\mineigen(H) \coloneqq \eigen_0(H)$ and $\maxeigen(H) \coloneqq \eigen_{m-1}(H)$ where $m \coloneqq \dim(H)$.
\end{definition}

The $\eigen_k(H)$ above are basically eigenvalues due to the following lemma:

\begin{lemma}[Min-Max Principle~{\cite[Theorem XIII.1]{reed1972methods}}]
    $\eigen_k(H)$ is the $(k+1)$th smallest eigenvalue counting multiplicity if it is below the bottom of the essential spectrum of $H$.
    Otherwise, $\eigen_k(H)$ is the bottom of the essential spectrum.
\end{lemma}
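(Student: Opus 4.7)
The plan is to establish both inequalities $\eigen_k(H) \leq \tau_k$ and $\eigen_k(H) \geq \tau_k$, where $\tau_k$ denotes the candidate value described in the lemma. Set $E_{\rm ess} \coloneqq \inf \sigma_{\rm ess}(H) \in \R \cup \{+\infty\}$ and list the eigenvalues of $H$ strictly below $E_{\rm ess}$, counted with multiplicity, as $\mu_0 \leq \mu_1 \leq \cdots$, with an associated orthonormal system of eigenvectors $\ket{e_0}, \ket{e_1}, \ldots$. Then $\tau_k = \mu_k$ when at least $k+1$ such eigenvalues exist, and $\tau_k = E_{\rm ess}$ otherwise. The essential tool is the spectral theorem, which provides a projection-valued measure $\{P_{\Omega}\}_{\Omega \subseteq \R}$ so that $H = \int \lambda \, dP_\lambda$ on $\dom(H)$, and for every bounded $\Omega$ the range of $P_\Omega$ is contained in $\dom(H)$.

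For the upper bound $\eigen_k(H) \leq \tau_k$, I would insert $\ket{\phi_j} = \ket{e_j}$ into the sup-inf for $j = 0, \ldots, k-1$ whenever these eigenvectors exist. Any unit $\ket{\psi} \in \dom(H)$ orthogonal to them lies in $\range(P_{[\tau_k, \infty)})$, so the spectral theorem gives $\braket{\psi | H | \psi} = \int_{\tau_k}^\infty \lambda \, d\braket{\psi | P_\lambda | \psi} \geq \tau_k$. When fewer than $k$ eigenvalues exist below $E_{\rm ess}$, I would supplement the test vectors with any orthonormal family drawn from $\range(P_{[E_{\rm ess}, E_{\rm ess}+\epsilon]})$; the same spectral computation yields $\braket{\psi | H | \psi} \geq E_{\rm ess} - \epsilon$ for orthogonal $\ket{\psi}$, and sending $\epsilon \to 0$ produces the bound $\eigen_k(H) \leq E_{\rm ess} = \tau_k$.

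For the lower bound $\eigen_k(H) \geq \tau_k$, given arbitrary $\ket{\phi_0}, \ldots, \ket{\phi_{k-1}} \in \dom(H)$ I would construct, for every $\epsilon > 0$, a $(k+1)$-dimensional subspace $\mathcal{V}_\epsilon \subseteq \dom(H)$ on which $H \preceq (\tau_k + \epsilon) I$. If at least $k+1$ eigenvalues lie below $E_{\rm ess}$, take $\mathcal{V}_\epsilon = \spn\{\ket{e_0}, \ldots, \ket{e_k}\}$; otherwise, combine the finitely many existing eigenvectors with vectors taken from $\range(P_{[E_{\rm ess}, E_{\rm ess}+\epsilon]})$, whose infinite-dimensionality (see below) guarantees enough room. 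A dimension count yields a unit vector $\ket{\psi} \in \mathcal{V}_\epsilon$ orthogonal to every $\ket{\phi_j}$, and since $H$ restricted to $\mathcal{V}_\epsilon$ is bounded above by $\tau_k + \epsilon$, we obtain $\braket{\psi | H | \psi} \leq \tau_k + \epsilon$. Taking the infimum and then $\epsilon \to 0$ gives $\eigen_k(H) \geq \tau_k$.

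The main obstacle is the essential spectrum case. The construction above requires that $\range(P_{[E_{\rm ess}, E_{\rm ess}+\epsilon]})$ be infinite-dimensional for every $\epsilon > 0$; this is precisely one of the equivalent characterizations of a point of the essential spectrum (Weyl's criterion), and I would invoke it as a standard fact. A secondary subtlety is verifying that the constructed subspaces sit inside $\dom(H)$, which holds because spectral projectors associated to bounded Borel sets map any vector into $\bigcap_{n} \dom(H^n)$. Once these spectral-theoretic facts are in hand, the rest of the argument is a direct dimension count combined with the variational characterization of $\braket{\psi|H|\psi}$ via the spectral measure.
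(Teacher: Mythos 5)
The paper does not supply its own proof; it cites this directly from Reed and Simon, Theorem XIII.1. Your proposal is a real attempt from the spectral theorem, and most of the right ingredients are present, but the two halves of your argument are attached to the wrong inequalities, and the conclusions drawn inside each paragraph are correspondingly reversed. Recall that $\eigen_k(H)$ is a supremum over choices of $\ket{\phi_0},\dots,\ket{\phi_{k-1}}$ of an inner infimum. Plugging the specific vectors $\ket{\phi_j}=\ket{e_j}$ into the supremum can only give a \emph{lower} bound on $\eigen_k(H)$, since a supremum is at least the value attained at any one choice. What your first paragraph actually establishes is $\inf_{\psi\perp e_0,\dots,e_{k-1}}\braket{\psi|H|\psi}\geq\tau_k$, hence $\eigen_k(H)\geq\tau_k$ --- the lower bound, not the upper bound as labeled. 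The closing sentence of that paragraph, which says the calculation ``produces the bound $\eigen_k(H)\leq E_{\rm ess}$,'' has the inequality backwards; and the $-\epsilon$ there is unnecessary, since orthogonality to every eigenvector strictly below $E_{\rm ess}$ already puts $\ket{\psi}$ in $\range\bigl(P_{[E_{\rm ess},\infty)}\bigr)$, giving $\braket{\psi|H|\psi}\geq E_{\rm ess}$ outright regardless of what the supplementary test vectors are.

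Symmetrically, your second paragraph shows that for an \emph{arbitrary} choice of $\ket{\phi_0},\dots,\ket{\phi_{k-1}}$ one can produce a unit $\ket{\psi}$ orthogonal to them with $\braket{\psi|H|\psi}\leq\tau_k+\epsilon$. This bounds the inner infimum from above uniformly over all choices, hence bounds the supremum from above, and therefore proves $\eigen_k(H)\leq\tau_k$ --- the \emph{upper} bound, not $\geq\tau_k$ as you conclude. Once the labels and the stated conclusions are swapped to match what the computations actually show, the substance is sound: the $(k+1)$-dimensional test subspace plus a dimension count gives the upper bound; the spectral-theorem estimate for vectors orthogonal to the low-lying eigenvectors gives the lower bound; and the infinite-dimensionality of $\range\bigl(P_{[E_{\rm ess},E_{\rm ess}+\epsilon]}\bigr)$, which is one of the standard equivalent characterizations of the essential spectrum, supplies the needed room when there are fewer than $k+1$ eigenvalues below $E_{\rm ess}$.
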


\begin{lemma}[Weyl's Inequality] \label{lem:weyl}
    Let $A$ and $B$ be self-adjoint, semi-bounded operators of the same domain.
    Then, for any $k$ we have
    $$
        \abs{\eigen_k(A) - \eigen_k(B)} \leq \norm{A-B}.
    $$
\end{lemma}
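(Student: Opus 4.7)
The plan is to invoke the min-max characterization of $\eigen_k$ given in \defn{eigen} and exploit the linearity of the Rayleigh quotient under the decomposition $B = A + (B-A)$. Since $A$ and $B$ share the same domain by hypothesis, every admissible test vector $\ket{\psi}$ in the variational formula for $\eigen_k(A)$ is also admissible for $\eigen_k(B)$, so no domain gymnastics are required; this is the one place where the assumption of a common domain matters.

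The first step is to observe that for any unit vector $\ket{\psi} \in \dom(A) = \dom(B)$,
$$
    \braket{\psi | B | \psi} = \braket{\psi | A | \psi} + \braket{\psi | (B-A) | \psi},
$$
and by the operator norm bound $\abs{\braket{\psi | (B-A) | \psi}} \leq \norm{B-A}$. Hence
$$
    \braket{\psi | A | \psi} - \norm{A-B} \;\leq\; \braket{\psi | B | \psi} \;\leq\; \braket{\psi | A | \psi} + \norm{A-B}.
$$

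Next, I would fix an arbitrary tuple $\ket{\phi_0},\dots,\ket{\phi_{k-1}} \in \dom(A)$ and take the infimum of the above inequality over all unit $\ket{\psi}$ in the orthogonal complement of $\spn\{\ket{\phi_0},\dots,\ket{\phi_{k-1}}\}$ inside the common domain. The constant $\norm{A-B}$ is unaffected by the infimum, so I obtain
$$
    \inf_{\ket{\psi}} \braket{\psi|A|\psi} - \norm{A-B} \;\leq\; \inf_{\ket{\psi}} \braket{\psi|B|\psi} \;\leq\; \inf_{\ket{\psi}} \braket{\psi|A|\psi} + \norm{A-B}.
$$
Finally, taking the supremum over the choice of $\ket{\phi_0},\dots,\ket{\phi_{k-1}}$ on all three sides (again $\norm{A-B}$ is unaffected) yields $\eigen_k(A) - \norm{A-B} \leq \eigen_k(B) \leq \eigen_k(A) + \norm{A-B}$, which is the claim.

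There is no real obstacle here; the only subtlety worth flagging is that when $\eigen_k$ happens to coincide with the bottom of the essential spectrum rather than a genuine discrete eigenvalue, the min-max formula from \defn{eigen} still produces the correct value, so the same argument handles that case without modification. The proof is a direct application of the variational principle and requires no spectral theorem machinery.
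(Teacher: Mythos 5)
Your proof is correct and follows essentially the same route as the paper: apply the min-max formula from \defn{eigen}, split the Rayleigh quotient via the decomposition $B = A + (B-A)$, bound the perturbation by $\norm{A-B}$, and propagate the inequality through the $\inf$ and $\sup$. The only cosmetic difference is that you carry both sides of the two-sided bound through simultaneously, whereas the paper proves one inequality and then invokes symmetry between $A$ and $B$; the substance is identical.
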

\begin{proof}
    By Definition~\ref{def:eigen},
    \begin{align*}
        \eigen_k(A) & = \sup_{\ket{\phi_0}, \dots, \ket{\phi_{k-1}} \in \dom(A)} \inf_{\substack{\ket{\psi} \in \dom(A) \\ {\norm{\ket{\psi}}=1} \\\ket{\psi} \perp \ket{\phi_0},\dots,\ket{\phi_{k-1}} }} \braket{\psi | A | \psi}  \\
        &= \sup \inf \left( \braket{\psi | B | \psi} + \braket{\psi | (A - B) | \psi} \right) \\
        & \geq \sup \left( \inf  \braket{\psi | B | \psi} + \inf \braket{\psi | (A - B) | \psi} \right) \\
        & \geq \sup \left( \inf  \braket{\psi | B | \psi} - \norm{A-B} \right) \\
        & = \eigen_k(B) - \norm{A-B}.
    \end{align*}
    Switching the role of $A$ and $B$ in the above we get $\eigen_k(B) \geq \eigen_k(A) - \norm{B-A}$, which concludes the proof.
\end{proof}

\begin{lemma} \label{lem:pert-spec}
    Let $H$ be a (possibly infinite-dimensional) Hamiltonian written as $H = \bigl( \begin{smallmatrix}A & R^\dagger\\ R & B\end{smallmatrix}\bigr)$ w.r.t. subspaces $\mathcal{S}$ and $\mathcal{S}^\perp$ such that $m\coloneqq\dim(\mathcal{S}) < \infty$.
    Suppose $\Delta \coloneqq \mineigen(B) - \maxeigen(A) > 0$.
    Then, for any $k \in \{0,1,\dots,m-1\}$,
    $$
        \abs{\eigen_k(H) - \eigen_k(A)} \leq \norm{R}. 
    $$
\end{lemma}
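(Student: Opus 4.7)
The plan is to reduce the statement to Weyl's inequality (\lem{weyl}) applied to $H$ and the unperturbed block-diagonal operator, after verifying that the hypothesis $\Delta > 0$ lets us identify $\eigen_k$ of that reference operator with $\eigen_k(A)$ in the relevant range of $k$.

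First I would write $H = H_0 + V$ with $H_0 \coloneqq \bigl(\begin{smallmatrix} A & 0 \\ 0 & B \end{smallmatrix}\bigr)$ and $V \coloneqq \bigl(\begin{smallmatrix} 0 & R^\dagger \\ R & 0 \end{smallmatrix}\bigr)$. A direct computation of $V^\dagger V = \bigl(\begin{smallmatrix} R^\dagger R & 0 \\ 0 & R R^\dagger \end{smallmatrix}\bigr)$ gives $\norm{V} = \sqrt{\norm{V^\dagger V}} = \norm{R}$. Because $\dim(\mathcal{S}) < \infty$, both $R$ and $R^\dagger$ are automatically bounded, so $V$ is bounded and $H$, $H_0$ share the same domain (and are both self-adjoint and semi-bounded, $H_0$ obviously and $H$ by a bounded perturbation of $H_0$).

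Next I would identify $\eigen_k(H_0)$ for $k \in \{0, \dots, m-1\}$. Since $H_0$ is block diagonal, its spectrum as a multiset is the disjoint union of the spectra of $A$ and $B$. The hypothesis $\mineigen(B) > \maxeigen(A)$ means every eigenvalue of $A$ lies strictly below every spectral point of $B$, so the $m$ smallest spectral points of $H_0$ are precisely the $m$ eigenvalues of $A$ (with multiplicity). Invoking the min-max principle (or more directly checking \defn{eigen}: trial vectors picked from $\mathcal{S}^\perp$ only raise the inner infimum, so the outer supremum is attained with $\ket{\phi_i} \in \mathcal{S}$), this gives $\eigen_k(H_0) = \eigen_k(A)$ for every $k < m$.

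Finally I would apply \lem{weyl} to $H$ and $H_0$ to obtain
\[
  \abs{\eigen_k(H) - \eigen_k(H_0)} \leq \norm{H - H_0} = \norm{V} = \norm{R},
\]
and combine this with $\eigen_k(H_0) = \eigen_k(A)$ to conclude. The only delicate point is the second step, i.e.\ making sure the $\sup$-$\inf$ characterization of $\eigen_k(H_0)$ is not corrupted by the infinite-dimensional block $B$; but since the gap $\Delta$ is strictly positive the optimizing trial spaces can always be chosen inside $\mathcal{S}$, and this is precisely where the hypothesis $\Delta > 0$ enters. Note that we do not need a Schrieffer--Wolff style analysis here because the bound we seek is only first order in $\norm{R}$ (rather than $\norm{R}^2/\Delta$).
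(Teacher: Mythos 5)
Your proof is correct and follows essentially the same route as the paper: decompose $H$ into the block-diagonal part $H_0 = \diag(A,B)$ plus the off-diagonal part, apply Weyl's inequality (\lem{weyl}) to get $\abs{\eigen_k(H)-\eigen_k(H_0)}\le\norm{R}$, and then use the gap $\Delta>0$ to identify $\eigen_k(H_0)=\eigen_k(A)$ for $k<m$. The extra verification that $\norm{V}=\norm{R}$ via $V^\dagger V$ is a nice touch the paper leaves implicit.
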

\begin{proof}
    Define $H_0 \coloneqq H - \bigl( \begin{smallmatrix}0 & R^\dagger\\ R & 0\end{smallmatrix}\bigr)$.
    Applying \lem{weyl} we have
    $$
        \abs{\eigen_k(H) - \eigen_k(H_0)} \leq \norm{\bigl( \begin{smallmatrix}0 & R^\dagger\\ R & 0\end{smallmatrix}\bigr)} = \norm{R}
    $$
    for any $k \in \mathbb{N}$.
    It remains to show that $\eigen_k(H_0) = \eigen_k(A)$ for $k \in \{0,1,\dots,m-1\}$: 
    Note that the spectrum of $H_0 = \bigl( \begin{smallmatrix}A & 0\\ 0 & B\end{smallmatrix}\bigr)$ is the union of those of $A$ and $B$.
    And the assumption $\Delta > 0$ guarantees that the whole spectrum of $B$ is above that of $A$.
    Hence  $\eigen_k(H_0) = \eigen_k(A)$ for $k \in \{0,1,\dots,m-1\}$.
\end{proof}

\begin{lemma} \label{lem:pert-sim-diff}
    Let $A$ and $B$ be two self-adjoint operators.
    Then,
    $$
        \norm{e^{-iAt} - e^{-iBt}} \leq \norm{A-B}t.
    $$
\end{lemma}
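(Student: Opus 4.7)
The plan is to use the standard Duhamel-style interpolation identity. Define the one-parameter family $F(s) \coloneqq e^{-iA(t-s)} e^{-iBs}$ for $s \in [0,t]$, noting that $F(0) = e^{-iAt}$ and $F(t) = e^{-iBt}$. Since $A$ (resp.\ $B$) commutes with its own semigroup $e^{-iA(t-s)}$ (resp.\ $e^{-iBs}$), differentiation yields
\begin{equation*}
    \frac{\dee}{\dee s} F(s) = iA\, e^{-iA(t-s)} e^{-iBs} - i\, e^{-iA(t-s)} B\, e^{-iBs} = i\, e^{-iA(t-s)} (A-B)\, e^{-iBs}.
\end{equation*}
Integrating from $0$ to $t$ gives the identity
\begin{equation*}
    e^{-iBt} - e^{-iAt} = i \int_0^t e^{-iA(t-s)} (A-B)\, e^{-iBs}\, \dee s.
\end{equation*}

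Taking operator norms and using the triangle inequality for Bochner-type integrals, together with the fact that $e^{-iA(t-s)}$ and $e^{-iBs}$ are unitary (because $A$ and $B$ are self-adjoint), one obtains
\begin{equation*}
    \norm{e^{-iAt} - e^{-iBt}} \leq \int_0^t \norm{e^{-iA(t-s)}} \cdot \norm{A-B} \cdot \norm{e^{-iBs}}\, \dee s = \norm{A-B}\, t,
\end{equation*}
which is the claimed bound.

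The only subtlety, and the main thing to be careful about, is that $A$ and $B$ may be unbounded, so the formal manipulation $\frac{\dee}{\dee s} e^{-iA(t-s)} = iA\, e^{-iA(t-s)}$ only makes sense on $\dom(A)$, and similarly for $B$. The clean way to handle this is to apply both sides of the identity to an arbitrary vector $\ket{\psi} \in \dom(A) \cap \dom(B)$, where Stone's theorem guarantees strong differentiability of $s \mapsto e^{-iA(t-s)}\ket{\psi}$ and $s \mapsto e^{-iBs}\ket{\psi}$; then the pointwise norm bound $\norm{(e^{-iAt} - e^{-iBt})\ket{\psi}} \leq \norm{(A-B)}\, t\, \norm{\ket{\psi}}$ extends to all of the Hilbert space by density, since $e^{-iAt}$ and $e^{-iBt}$ are bounded. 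Because the paper only invokes this lemma in settings where $A-B$ is bounded (indeed, in \lem{add-small-number-to-a} it is a difference of TIM Hamiltonians with controlled coefficients), this density argument is unproblematic.
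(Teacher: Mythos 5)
Your proof is correct and follows essentially the same route as the paper: the Duhamel interpolation $F(s) = e^{-iA(t-s)} e^{-iBs}$, differentiation, integration, and a norm bound using unitarity of the propagators. The paper states the identity more tersely (``By Duhamel's Principle or differentiating $e^{-i(A-B)t}$\dots'') and omits the domain discussion you add at the end, which is a reasonable extra precaution but not a substantive difference in method.
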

\begin{proof}
    By Duhamel's Principle or differentiating $e^{-i(A-B)t}$ we have
    $$
        e^{-iAt} - e^{-iBt} = -i \int_0^t e^{-iA(t-s)} (A - B) e^{-iBs}  \dee s.
    $$
    The inequality follows by taking norms of both sides.
\end{proof}

\begin{lemma} \label{lem:pert-sim}
    Let $H$ be a (possibly infinite-dimensional) Hamiltonian written as $H = \bigl( \begin{smallmatrix}A & R^\dagger\\ R & B\end{smallmatrix}\bigr)$ w.r.t. subspaces $\mathcal{S}$ and $\mathcal{S}^\perp$.
    For any $t > 0$ we have 
    $$
        \norm{P_{\mathcal{S}}e^{-iHt}P_{\mathcal{S}} - e^{-iAt}} \leq \frac{2\sqrt{2}}{3}\left(\norm{R}t\right)^{3/2}.
    $$
\end{lemma}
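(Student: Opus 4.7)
The plan is a double Duhamel argument that avoids any spectral-gap hypothesis between $A$ and $B$. Split $H = H_0 + V$ with $H_0 := \bigl(\begin{smallmatrix} A & 0 \\ 0 & B \end{smallmatrix}\bigr)$ block-diagonal and $V := \bigl(\begin{smallmatrix} 0 & R^\dagger \\ R & 0 \end{smallmatrix}\bigr)$ block-off-diagonal, so $\|V\|=\|R\|$. Since $H_0$ commutes with $P_{\mathcal{S}}$, we have $P_{\mathcal{S}} e^{-iH_0 t} P_{\mathcal{S}} = e^{-iAt}$, so the target quantity is $F_t := P_{\mathcal{S}}(e^{-iHt} - e^{-iH_0 t})P_{\mathcal{S}}$. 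Applying Duhamel's principle (as in \lem{pert-sim-diff}) and using that $V$ swaps $\mathcal{S}$ with $\mathcal{S}^\perp$ (so $V P_{\mathcal{S}} = P_{\mathcal{S}^\perp} V P_{\mathcal{S}}$), one obtains
\[
F_t = -i\int_0^t \bigl[P_{\mathcal{S}} e^{-iH(t-s)} P_{\mathcal{S}^\perp}\bigr]\bigl[P_{\mathcal{S}^\perp} V P_{\mathcal{S}}\bigr] e^{-iH_0 s} P_{\mathcal{S}}\, \dee s,
\]
and hence $\|F_t\| \le \|R\|\int_0^t G(t-s)\,\dee s$, where $G(\tau) := \|P_{\mathcal{S}} e^{-iH\tau} P_{\mathcal{S}^\perp}\|$.

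The crux is to bound $G(\tau)$ in two complementary ways. Trivially $G(\tau) \le 1$ since both projections and the unitary have norm at most one. A second application of Duhamel together with $P_{\mathcal{S}} e^{-iH_0 \tau} P_{\mathcal{S}^\perp} = 0$ (block-diagonality) and $V P_{\mathcal{S}^\perp} = P_{\mathcal{S}} V P_{\mathcal{S}^\perp}$ gives
\[
G(\tau) = \left\|\int_0^\tau P_{\mathcal{S}} e^{-iH(\tau-u)}\bigl[P_{\mathcal{S}} V P_{\mathcal{S}^\perp}\bigr] e^{-iH_0 u} P_{\mathcal{S}^\perp}\, \dee u\right\| \le \|R\|\tau.
\]
Thus $G(\tau) \le \min(1, \|R\|\tau)$; substituting back yields a ``first-order'' bound $\|F_t\| \le \|R\|t$ (from $G\le 1$) and a ``second-order'' bound $\|F_t\| \le (\|R\|t)^2/2$ (from $G(\tau)\le\|R\|\tau$), so
\[
\|F_t\| \le \min\bigl(\|R\|t,\ \tfrac{(\|R\|t)^2}{2}\bigr).
\]

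The final ingredient is the elementary inequality $\min(x, x^2/2) \le \tfrac{2\sqrt{2}}{3}\, x^{3/2}$ for all $x \ge 0$, taken with $x = \|R\|t$. Split at the crossover $x = 2$: for $x \le 2$ the minimum equals $x^2/2$ and the inequality reduces to $\sqrt{x} \le 4\sqrt{2}/3$, which holds since $\sqrt{x}\le\sqrt{2}<4\sqrt{2}/3$; for $x \ge 2$ the minimum equals $x$ and the inequality reduces to $\sqrt{x} \ge 3\sqrt{2}/4$, which holds since $\sqrt{x}\ge\sqrt{2}>3\sqrt{2}/4$. I do not anticipate a serious obstacle: every step is a bounded-operator identity that makes sense even when $\mathcal{S}^\perp$ is infinite-dimensional, and no spectral-gap or discrete-spectrum assumption on $B$ enters. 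The one point to watch is that both projections and exponentials are genuine bounded operators on the full Hilbert space, so the Duhamel manipulations can be justified rigorously despite $H$ itself being unbounded in the intended application; this is precisely the ``gap-free substitute for Schrieffer--Wolff'' the paper advertises.
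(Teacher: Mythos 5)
Your proof is correct, but it takes a genuinely different route from the paper's, and the comparison is instructive. The paper sets $H_0 = P_{\mathcal{S}} H P_{\mathcal{S}} = \diag(A,0)$ and, after a single Duhamel step, reduces the problem to bounding the ``leakage'' $\|P_{\mathcal{S}^\perp} e^{-iH\tau} P_{\mathcal{S}}\|$; it obtains $\|P_{\mathcal{S}^\perp} e^{-iH\tau} P_{\mathcal{S}}\| \le \sqrt{2\|R\|\tau}$ by differentiating $\|P_{\mathcal{S}}\psi(\tau)\|^2$ and observing that the rate of escape from $\mathcal{S}$ is controlled by $\|R\|$, a norm-conservation argument. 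Integrating $\|R\|\sqrt{2\|R\|\tau}$ then produces the constant $\frac{2\sqrt{2}}{3}$ directly. You instead take $H_0 = \diag(A,B)$, which makes $V = H - H_0$ a \emph{bounded} off-diagonal perturbation, run Duhamel a second time to get the complementary pointwise bound $G(\tau) \le \min(1, \|R\|\tau)$ (which is in fact sharper than the paper's $\sqrt{2\|R\|\tau}$), integrate, and recover the stated constant only at the last step via the elementary inequality $\min(x, x^2/2) \le \frac{2\sqrt{2}}{3} x^{3/2}$. Both handle the infinite-dimensional setting without any gap hypothesis, which is the whole point of the lemma; the trade-off is that the paper's leakage argument gives a physically transparent reason for the $\sqrt{\tau}$ scaling, while your double-Duhamel argument is more mechanical and keeps every intermediate operator manifestly bounded (since $V$ is bounded), which some readers may find easier to verify rigorously. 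One small cosmetic remark: when you write ``the minimum equals $x^2/2$ for $x\le 2$,'' the inequality $x^2/2 \le \frac{2\sqrt{2}}{3} x^{3/2}$ actually holds on the larger range $x \le 32/9$, so the case split at the crossover $x=2$ is not tight but is of course sufficient.
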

\begin{proof}
    The proof largely follows that of Theorem 10 in~\cite{leng2024expanding} with the exception that we cannot use the vanilla Schrieffer--Wolff theory to prove \clm{trunc-leak} because our case involves unbounded (and hence infinite-dimensional) operators.
    
    Let $H_0 = P_\S H P_\S = \diag(A,\vect{0})$, and we define:
\begin{align*}
    \mathscr{A}(t) = P_\S e^{-iHt}P_\S,\quad \mathscr{B}(t) = P_\S e^{-iH_0 t}P_\S.
\end{align*}
We also define the $\mathscr{E}(t) \coloneqq \mathscr{A}(t) - \mathscr{B}(t)$. Note that $\mathscr{E}(0) = 0$. We find that
\begin{align*}
    \frac{\d}{\d t}\mathscr{A}(t) &= P_\S \left(-i H\right)\left(P_\S + P_{\S^\perp}\right) e^{-iHt} P_\S= -i H_0\mathscr{A}(t) -i R e^{-iHt} P_\S,\\
    \frac{\d}{\d t}\mathscr{B}(t) &= P_\S \left(-iH_0\right)e^{-iH_0 t}P_\S=-i H_0 \mathscr{B}(t).
\end{align*}
Therefore, we have $\dot{\mathscr{E}}(t) =\dot{\mathscr{A}}(t)-\dot{\mathscr{B}}(t) = -i H_0 \mathscr{E}(t) - i Re^{-iHt} P_\S$. By Duhamel's principle we represent the function $\mathscr{E}(t)$ as the time integral:
\begin{align*}
    \mathscr{E}(t) = -i\int^t_0 \exp(-iH_0 (t-\tau))R \exp(-iH\tau) P_{\S} ~\d\tau,
\end{align*}
so we can estimate the error (recall that $R = P_\S H P_{\S^\perp}$):
\begin{align*}
    \|\mathscr{E}(t)\| &\le \int^t_0 \|P_\S H P_{\S^\perp} \exp(-iH\tau) P_{\S}\| ~\d \tau \le \int^t_0 \|RP_{\S^\perp} \exp(-iH\tau) P_{\S}\|~\d \tau \\
    &\le \int^t_0 \|R\|\left(\sqrt{2 \norm{R} \tau}\right)~\d \tau = \frac{2\sqrt{2}}{3}\left(\norm{R}t\right)^{3/2}.
\end{align*}
In the second step, we use $P_{\S^\perp} = P^2_{\S^\perp}$. In the second to last step, we invoke \clm{trunc-leak}.
    \begin{claim} \label{clm:trunc-leak}
        $\norm{P_{\mathcal{S}^\perp} \exp(-iHt) P_{\mathcal{S}}} \leq \sqrt{2 \norm{R} t}$.
    \end{claim}
    \clm{trunc-leak} is equivalent to a $2 \norm{R} t$ upper bound of $\norm{P_{\mathcal{S}^\perp} \exp(-iHt) \ket{\psi_0}}^2$ for every normalized state $\ket{\psi_0} \in \mathcal{S}$.
    We prove such a bound below.
    Define $\ket{\psi} \coloneqq \ket{\psi(t)} \coloneqq \exp(-iH t) \ket{\psi_0}$.
    We have
    $$
        \frac{\dee}{\dee t} P_{\mathcal{S}}\ket{\psi} = -i P_{\mathcal{S}} H \ket{\psi}
    $$
    since $\frac{\dee}{\dee t} \ket{\psi} = -i H \ket{\psi}$.
    Therefore,
    \begin{align*}
        \frac{\dee}{\dee t} \norm{P_{\mathcal{S}} \ket{\psi}}^2 &= 2 \Re{(-iP_{\mathcal{S}} H \ket{\psi})^\dagger (P_{\mathcal{S}} \ket{\psi})} \\
        &= - 2 \Im{\braket{\psi | H P_{\mathcal{S}} | \psi }} \\
        &= - 2 \Im{ \braket{\psi | P_{\mathcal{S}} H P_{\mathcal{S}} | \psi } + \braket{\psi | P_{\mathcal{S}^\perp} H P_{\mathcal{S}} | \psi } } \\
        & = - 2 \Im{  \braket{\psi | R^\dagger | \psi } }.
    \end{align*}
    Consequently,
    \begin{align*}
        \norm{P_{\mathcal{S}^\perp} \exp(-iHt) \ket{\psi_0}}^2 &= \norm{P_{\mathcal{S}^\perp} \ket{\psi(t)}}^2  \\
        &= 1 - \norm{P_{\mathcal{S}} \ket{\psi(t)}}^2 \\
        &=  \norm{P_{\mathcal{S}} \ket{\psi(0)}}^2 - \norm{P_{\mathcal{S}} \ket{\psi(t)}}^2 \\
        &= \int_{t}^0 \left( \frac{\dee}{\dee \tau} \norm{P_{\mathcal{S}} \ket{\psi(\tau)}}^2 \right)\dee \tau \\
        &= \int_{t}^0 -2 \Im{  \braket{\psi(\tau) | R^\dagger | \psi(\tau) } } \dee \tau \\
        & \leq 2 \norm{R} t.
    \end{align*}
\end{proof}

\subsection{Proof of the Main Lemma} \label{sec:proof-of-lem-main}
This section is dedicated to the proof of \lem{main}.

\begingroup
    \def\thetheorem{\ref{lem:main}}
    \begin{lemma}[restated]
    
    Fix $M_1,M_2,\epsilon_1 > 0$ and $t \geq 0$.
    For any TIM Hamiltonian 
    $$
        H = \sum_{1 \leq u \leq n} (a_u X_u + b_u Z_u) + \sum_{1 \leq u,v \leq n} b_{u,v} Z_u Z_v
    $$
    such that $|a_u|,|b_u|,|b_{u,v}| \leq M_1$ and $a_u > 1/M_2$, 
    there exists a \sdg\ operator
    $$
        \widehat{H} = - G^\star \Delta + V
    $$
    and a polynomial $P \leq \poly(n,M_1,M_2,1/\epsilon_1,t)$ satisfying the following:
    \begin{enumerate}[label=(\roman*)]
        \item $1 \leq G^\star \leq P$ is $\poly(n)$-time computable and the potential $V: [-1,1]^n \to \mathbb{R}$ is $2$-body, smooth, $\poly(n)$-time computable and has $C^1$-norm bounded within $P$.
        \item $\abs{\eigen_k(\widehat{H}) - \eigen_k(H)} \leq \epsilon_1$ for every $k \in \{0,1,\dots,2^n-1\}$.
        \item $\norm{ W^\dagger P_{\mathcal{S}} e^{-i\widehat{H}t}W - e^{-iHt}} \leq \epsilon_1$, where $W \coloneqq \bigotimes_{1 \leq k \leq n} (\ket{\psi_0^{k}}\bra{-} + \ket{\psi_1^{k}}\bra{+})$ shifts basis and $\mathcal{S}$ is the image of $W$.  $\ket{\psi_0^k}$ and $\ket{\psi_1^k}$ are $1$-dimensional.
        Moreover, wave functions $\psi_j^k(\cdot)$ are smooth, $\poly(n)$-time computable and have $C^1$-norm bounded within $\polylog(P)$.
    \end{enumerate}

    \end{lemma}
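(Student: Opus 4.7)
The plan is to make concrete the encoding sketched in \sec{proof-outline}, choosing parameters so that every quantity remains polynomial in $(n,M_1,M_2,1/\epsilon_1,t)$. For each qubit $k\in\{1,\ldots,n\}$ I would pick a smooth symmetric double-well potential $\dwfunc^{(k)}\colon[-1,1]\to\mathbb{R}$ whose shape depends on the coefficient $a_k$, let $\widehat{X}_k \coloneqq -\frac{\dee^2}{\dee x_k^2}+\dwfunc^{(k)}(x_k)$, and take $\ket{\psi_0^{k}},\ket{\psi_1^{k}}$ to be its two lowest eigenstates. With $W \coloneqq \bigotimes_k \bigl(\ket{\psi_0^{k}}\bra{-}+\ket{\psi_1^{k}}\bra{+}\bigr)$ and $\mathcal{S}\coloneqq\range(W)$, I would then build
$$
\widehat{H}\coloneqq -G^\star\Delta + \sum_{k}G^\star\dwfunc^{(k)}(x_k) + \sum_k b_k\,\sigma(x_k) + \sum_{u,v} b_{u,v}\,\sigma(x_u)\sigma(x_v),
$$
where $\sigma\colon[-1,1]\to[-1,1]$ is a fixed smooth $\poly(n)$-Lipschitz approximation of $\sgn$ that equals $\pm1$ outside a shrinking neighborhood of $0$. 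The $k$th double-well term only depends on $x_k$, so $V$ is $2$-body, smooth, and explicit.

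Next I would invoke \thm{double-well-main} on each $\widehat{X}_k$, with an effective semiclassical parameter $h_k$ built into $\dwfunc^{(k)}$. Part (i) yields a ground-state gap $\delta_k = e^{-S_0^{(k)}/h_k+o(1/h_k)}$; part (ii) gives an $\Omega(h_k)$ gap to the second excited state; and part (iii) produces logical wave functions $\logiczero^{k},\logicone^{k}$ with exponentially decaying tails on the opposite well. In the logical basis $\{\ket{\logiczero^{k}},\ket{\logicone^{k}}\}$, the block $G^\star\widehat{X}_k$ then acts exactly as $\tfrac{G^\star\delta_k}{2}X_k + c_k I$, while $\sigma(x_k)$ and $\sigma(x_u)\sigma(x_v)$ act up to exponentially small error as $Z_k$ and $Z_uZ_v$, respectively. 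I would choose the $\dwfunc^{(k)}$ together with a single common $G^\star$ so that $G^\star\delta_k/2=a_k$ for all $k$; since $1/M_2\le a_k\le M_1$, a choice of $1/h_k = \Theta(\log P)$ with $1\le G^\star\le\poly(P)$ works, and simultaneously delivers the required $\polylog(P)$ bound on the $C^1$-norms of the $\psi_j^k$.

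To obtain (ii) and (iii), I decompose $\widehat{H} = \bigl(\begin{smallmatrix}A & R^\dagger \\ R & B\end{smallmatrix}\bigr)$ with respect to $\mathcal{S}$ and $\mathcal{S}^\perp$. The concentration from \thm{double-well-main}(iii) together with $\sigma\approx\sgn$ yields $W^\dagger A W = H + E$ in operator norm with $\norm{E}$ exponentially small in $\min_k 1/h_k$; the off-diagonal leakage $\norm{R}$ is controlled by the same tail estimates and is therefore also exponentially small in $\min_k 1/h_k$. The $\Omega(h_k)$ gap from \thm{double-well-main}(ii) guarantees $\mineigen(B)-\maxeigen(A)>0$, so claim (ii) follows from \lem{pert-spec} combined with \lem{weyl} applied to $E$. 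For claim (iii), \lem{pert-sim} converts $\norm{R}$ into the bound $\norm{P_{\mathcal{S}}e^{-i\widehat{H}t}P_{\mathcal{S}} - e^{-iAt}}\le \bigO{(\norm{R}t)^{3/2}}$, after which \lem{pert-sim-diff} turns the residual $\norm{E}$ into the matching bound on $\norm{e^{-iAt}-We^{-iHt}W^\dagger}$.

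The hard part is balancing three competing scales. The constraint $G^\star\delta_k/2 = a_k\geq 1/M_2$ forces $G^\star$ to grow like $e^{\Omega(1/h_k)}$, which would inflate $V$ and its derivatives; meanwhile $\norm{R}t\leq \epsilon_1^{2/3}$ (from \lem{pert-sim}) requires the logical-state tails to decay rapidly, which is also controlled by $1/h_k$ but with a different exponential constant. The delicate point is to verify that a common choice $1/h_k = \Theta(\log P)$ lets the exponential gain in $\norm{R}$ and $\norm{E}$ beat the polynomial blow-up of $G^\star$, while the smoothing width of $\sigma$ stays small enough that $\sigma(x_k)\approx Z_k$ on the logical subspace and the overall $C^1$-norm of $V$ remains polynomial. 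Establishing this balance, and in particular confirming that the hypothesis $a_u\geq 1/M_2$ of \lem{main} is exactly what prevents $\delta_k$ from collapsing below the leakage threshold, is the technical heart of the proof.
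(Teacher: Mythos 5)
Your proposal is correct and follows essentially the same route as the paper: encode each qubit via the two lowest eigenstates of a 1D double-well \sdg\ operator, tune a per-qubit semiclassical parameter so that (i) the kinetic terms share a common coefficient $G^\star$ and (ii) the logical splitting matches $a_k$, represent the $Z$ terms by a smoothed sign function, then bound the diagonal error via Weyl/\lem{pert-spec} and the leakage via \lem{pert-sim} and \lem{pert-sim-diff}. The only cosmetic difference is in bookkeeping: you absorb the $a_k$-dependence into the shape of $\dwfunc^{(k)}$ directly, whereas the paper keeps a single fixed quartic $\dwfunc$, normalizes $\widehat{X}(h)$ to have gap exactly $2$ via a constant $C_h$, and then sets $G_u = G^\star/a_u$ so that $a_u\widehat{X}(G_u)$ has the desired kinetic coefficient; these are equivalent parameterizations.
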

    \addtocounter{theorem}{-1}
\endgroup

\paragraph{1D Building Blocks.}
Fix $\dwfunc(x) \coloneqq (x-1/2)^2(x+1/2)^2$ to be a quartic function with minima $x=\pm x^\star\coloneqq\pm1/2$.\footnote{Other choices of $\dwfunc(x)$ should work in principle but it is trickier to assert the \emph{quantitative} $C^1$-boundedness of its eigenfunctions, which will be mentioned soon after.}
It is easy to verify that $\dwfunc$ is a double-well function that satisfies the requirement of $V$ in \thm{double-well-main}.
Define the operator $\widehat{X}(h)$, which depends on a parameter $h > 0$, to be
$$
    \widehat{X}(h) \coloneqq C_h \left( -h^2 \frac{\dee^2}{{\dee x}^2} + \dwfunc \right),
$$
where $C_h$ is chosen such that the spectral gap of $\widehat{X}(h)$ is $2$.
Let $G_\circ \coloneqq e^{\frac{S_0}{h}}$ where $S_0 \coloneqq \agmonfull{\dwfunc}{0}{x^\star}{-x^\star}$.
Let $E_0 \leq E_1 \leq E_2 \leq \cdots$ be eigenvalues of $\widehat{X}(h)$ and $\ket{\psi_0},\ket{\psi_1},\ket{\psi_2},\dots$ be corresponding eigenstates given $h$.
It is well known that eigenfunctions of an elliptic operator, of which \sdg\ operator is a special case, is smooth if the operator itself is smooth (See Theorem 3 and 6 of~\cite[Sec. 6.3]{evans2022partial}).
Therefore $\psi_j(\cdot)$ are smooth since $\frac{\dee^2}{{\dee x}^2}$ and $\dwfunc$ are smooth.
By analytic methods such as WKB approximation it is straightforward that $\psi_j(\cdot)$ has $C^1$-norm $h^{-\bigO{j}} \leq \log^{\bigO{j}}(G_{\circ})$ for any $j$.
\begin{claim} \label{clm:x-property-pre}
    $C_h \in \bigO{G_\circ^{1+o(1)}} \cap \bigOmega{G_\circ^{1-o(1)}}$, $E_2 - E_1 \geq \bigOmega{G_\circ^{1-o(1)}}$ and $E_1 - E_0 = 2$.
\end{claim}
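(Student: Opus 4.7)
The plan is to deduce all three assertions from spectral estimates for the unscaled operator $\widetilde{X}(h) \coloneqq -h^2 \frac{\dee^2}{\dee x^2} + \dwfunc$ supplied by \thm{double-well-main}, and then to transfer them to $\widehat{X}(h) = C_h \widetilde{X}(h)$ by the homogeneity identity $E_k = C_h \widetilde{E}_k$, where $\widetilde{E}_0 \leq \widetilde{E}_1 \leq \cdots$ denote the eigenvalues of $\widetilde{X}(h)$. As a preliminary check, one verifies that the quartic $\dwfunc(x) = (x-1/2)^2(x+1/2)^2$ meets the hypotheses of \defn{dw}: it is even, smooth, nonnegative on $[-1,1]$, with exactly two minima at $\pm 1/2$ and $\dwfunc''(\pm 1/2) = 2 > 0$; in particular $S_0 = \agmonfull{\dwfunc}{0}{1/2}{-1/2}$ is a fixed positive constant.

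First I would handle the bound on $C_h$. \thm{double-well-main}(i) gives
$$
\widetilde{E}_1 - \widetilde{E}_0 \in \bigOmega{e^{-(S_0+o(1))/h}} \cap \bigO{e^{-(S_0-o(1))/h}} = \bigOmega{G_\circ^{-1-o(1)}} \cap \bigO{G_\circ^{-1+o(1)}},
$$
using $G_\circ = e^{S_0/h}$. Since $C_h$ is chosen so that $C_h(\widetilde{E}_1 - \widetilde{E}_0) = 2$, inverting this bound yields $C_h \in \bigO{G_\circ^{1+o(1)}} \cap \bigOmega{G_\circ^{1-o(1)}}$, and $E_1 - E_0 = 2$ holds by construction.

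For the second gap, \thm{double-well-main}(ii) provides $\widetilde{E}_2 - \widetilde{E}_1 \geq \bigOmega{h}$, and multiplying by the lower bound on $C_h$ gives
$$
E_2 - E_1 = C_h(\widetilde{E}_2 - \widetilde{E}_1) \geq \bigOmega{h \cdot G_\circ^{1-o(1)}}.
$$
Because $h = S_0/\ln G_\circ$ is inverse polylogarithmic in $G_\circ$, one has $h = G_\circ^{-o(1)}$ as $h \to 0$, so the factor $h$ is absorbed into the $o(1)$ slack in the exponent and the bound becomes $E_2 - E_1 \geq \bigOmega{G_\circ^{1-o(1)}}$, as desired.

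There is no real obstacle beyond careful bookkeeping: the heavy analytic lifting is already packaged in \thm{double-well-main}, and the only thing that needs verification is that all polylogarithmic factors in $G_\circ$ (such as $h$ or $1/h$) can legitimately be written as $G_\circ^{\pm o(1)}$, which is immediate from $\ln(1/h)/\ln G_\circ \to 0$.
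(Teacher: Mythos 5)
Your proof is correct and takes essentially the same approach as the paper: apply \thm{double-well-main}(i),(ii) to the unnormalized operator, use the normalization $E_1-E_0=2$ to bound $C_h$ in terms of $G_\circ = e^{S_0/h}$, then multiply the $\Omega(h)$ gap bound by $C_h$ and absorb the polylogarithmic factor $h$ into the $o(1)$ exponent slack. The paper compresses the last two steps into ``we conclude with the definition of $G_\circ$,'' while you spell them out (and also explicitly verify that the quartic satisfies \defn{dw}), but the argument is identical.
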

\begin{proof}
    \thm{double-well-main}(i) and (ii) give $E_1 - E_0 \in \bigOmega{e^{-\frac{S_0 + o(1)}{h}}} \cap \bigO{e^{-\frac{S_0 - o(1)}{h}}}$ and $E_2 - E_1 \geq \Omega(h)$ if $C_h = 1$ in the definition of $\widehat{X}$.
    Now that $C_h$ is chosen such that $E_1 - E_0 = 2$, therefore $C_h \in \bigO{e^{\frac{S_0 + o(1)}{h}}} \cap \bigOmega{e^{\frac{S_0 - o(1)}{h}}}$.
    Hence $E_2 - E_1 \geq \bigOmega{h e^{\frac{S_0 - o(1)}{h}}}$.
    We conclude with the definition of $G_\circ$.
\end{proof}
Now, if we write $G = C_h h^2$, it is immediate that $G \in \bigO{G_\circ^{1+o(1)}} \cap \bigOmega{G_\circ^{1-o(1)}}$ by \clm{x-property-pre}.
This means $h^2 C_h \to +\infty$ when $h \to 0$.
Note that the implicit map $h \mapsto C_h$ is continuous.
Therefore there always exists an $h$ satisfying $G = C_h h^2$ given any sufficiently large $G$.
We can define $\hat{X}(G) \coloneqq \hat{X}(h)$ by choosing such an $h$.
We also write the $(j+1)$th eigenpair of $\widehat{X}(G)$ as $(E_j^G, \ket{\psi_j^G})$ if we want to highlight the value of $G$.
Note that
$$
    \widehat{X}(G) =  -G \frac{\dee^2}{{\dee x}^2} + C_h \dwfunc.
$$
Let us rewrite \clm{x-property-pre} in terms of $G$:
\begin{claim}[Property of $\widehat{X}$] \label{clm:x-property}
    $C_h \in \bigO{G^{1+o(1)}} \cap \bigOmega{G^{1-o(1)}}$, $E_2 - E_1 \geq \bigOmega{G^{1-o(1)}}$ and $E_1 - E_0 = 2$.
\end{claim}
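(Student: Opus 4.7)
The plan is to deduce Claim \ref{clm:x-property} as a direct translation of Claim \ref{clm:x-property-pre} from the semiclassical parameter $h$ to the kinetic coefficient $G = C_h h^2$. Since Claim \ref{clm:x-property-pre} has already done the semiclassical heavy lifting, all that remains is to verify that the two parameterizations are equivalent up to $o(1)$ factors in the exponent.

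First I would observe that $G_\circ = e^{S_0/h}$ yields $h = S_0/\log G_\circ$, and consequently $h^{2} = G_\circ^{o(1)}$ as $h \to 0$ (equivalently, as $G_\circ \to \infty$). Combining this with the bound $C_h \in \bigO{G_\circ^{1+o(1)}} \cap \bigOmega{G_\circ^{1-o(1)}}$ from Claim \ref{clm:x-property-pre}, I get
\[
G \;=\; C_h \, h^{2} \;\in\; \bigO{G_\circ^{1+o(1)}} \cap \bigOmega{G_\circ^{1-o(1)}},
\]
since a $G_\circ^{o(1)}$ factor gets absorbed into the $o(1)$ in the exponent. Taking reciprocals of both inclusions then gives the inverse relation $G_\circ \in \bigO{G^{1+o(1)}} \cap \bigOmega{G^{1-o(1)}}$.

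Next, I would substitute this two-sided equivalence back into the bounds from Claim \ref{clm:x-property-pre}. The bound on $C_h$ becomes $C_h \in \bigO{G^{1+o(1)}} \cap \bigOmega{G^{1-o(1)}}$ directly, again using the absorption of $o(1)$ factors in exponents. For the second excited gap, $E_2 - E_1 \geq \bigOmega{G_\circ^{1-o(1)}} \geq \bigOmega{G^{1-o(1)}}$. The normalization $E_1 - E_0 = 2$ is preserved by construction, since $C_h$ was chosen precisely to enforce this gap regardless of which parameter ($h$ or $G$) we use to index the operator.

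Finally, to make the reparameterization $G \mapsto h$ well-defined, I would record that the map $h \mapsto G(h) = C_h h^{2}$ is continuous (this follows from continuity of eigenvalues of a smooth self-adjoint family under the $h^2 \partial_x^2$ perturbation) and that $G(h) \to \infty$ as $h \to 0$ (by $C_h \in \bigOmega{G_\circ^{1-o(1)}}$ and the super-polynomial growth of $G_\circ$ in $1/h$), so by the intermediate value theorem every sufficiently large $G$ is attained, justifying the definition $\widehat{X}(G) \coloneqq \widehat{X}(h)$. The only step requiring any care is the absorption of $G_\circ^{o(1)}$ into $G_\circ^{\pm o(1)}$ when solving $G = C_h h^2$ for $G_\circ$ in terms of $G$; this is the main (and essentially only) bookkeeping point, and it works because $o(1) \pm o(1) = o(1)$ on the multiplicative scale $\log G$.
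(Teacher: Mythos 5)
Your proposal is correct and takes essentially the same route as the paper, which itself only sketches the argument: the paper notes that $G = C_h h^2 \in G_\circ^{1\pm o(1)}$, uses continuity of $h \mapsto C_h$ plus $G(h)\to\infty$ to justify the reparameterization, and then says "let us rewrite Claim~\ref{clm:x-property-pre} in terms of $G$" without further detail. You supply exactly that detail—inverting the two-sided bound to get $G_\circ \in G^{1\pm o(1)}$ and substituting back—so the only stylistic quibble is that "taking reciprocals" is loose phrasing for solving $G = G_\circ^{1\pm o(1)}$ for $G_\circ$, but the underlying exponent arithmetic $1/(1\pm o(1)) = 1 \mp o(1)$ is right.
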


Next, define the diagonal operator (in the position basis) $\widehat{Z} \coloneqq \widehat{Z}(w)$, which depends on a parameter $w >0$, to be
$$
    \widehat{Z}(w) \coloneqq \sgn_{w},\text{\quad i.e.,\quad }(\widehat{Z}\cdot f)(x) \coloneqq \sgn_{w}(x) f(x). 
$$
where $\sgn_{w}: \mathbb{R} \to [-1,1]$ is a fixed smooth $(2/w)$-Lipschitz function such that $\sgn_{w}(x) = \sgn(x)$ for $\abs{x} \geq w$:
\begin{equation*}
    \begin{gathered}
        \sgn_{w}(x) \coloneqq \frac{r\left(\frac{1+x/w}{2}\right) - r\left(\frac{1-x/w}{2}\right)}{r\left(\frac{1+x/w}{2}\right) + r\left(\frac{1-x/w}{2}\right)}, \quad \text{where} \quad r(x) \coloneqq \left\{
            \begin{array}{ll}
                e^{-1/x} & x > 0, \\
                0 & x \leq 0.
            \end{array}
        \right.
    \end{gathered}
\end{equation*}
We know that the Pauli-$Z$ written as a matrix in the eigenbasis of Pauli-$X$ is $\ket{+}\bra{-} + \ket{-}\bra{+}$.
Let us show the similar property for $\widehat{Z} \coloneqq \widehat{Z}(w)$ and $\widehat{X}$:
Denote $P_{-}$ as the projector onto the ``low-energy'' space $\mathcal{T}_{-} \coloneqq \spn(\ket{\psi_0},\ket{\psi_1})$ and $P_{+}$ as the projector onto the ``high-energy'' space $\mathcal{T}_{+} \coloneqq \mathcal{T}_{-}^{\perp} = \spn(\ket{\psi_2},\ket{\psi_3},\dots)$.
We will prove that
$$
    P_{-}\widehat{Z}P_{-} \approx \ket{\psi_0}\bra{\psi_1} + \ket{\psi_1}\bra{\psi_0}, \quad P_{+}\widehat{Z}P_{-} \approx \vect{0}.
$$
\begin{claim}[Property of $\widehat{Z}$] \label{clm:z-property}
    There exists a function $\epsilon(w)$ satisfying $\lim_{w\to 0} \epsilon(w) = 0$ such that the following holds:
    \begin{enumerate}[label=(\roman*)]
        \item $P_{-} \widehat{Z} P_{-} = (1 \pm \bigO{r}) (\ket{\psi_0}\bra{\psi_1} + \ket{\psi_1}\bra{\psi_0})  \pm \bigO{r} \ket{\psi_0}\bra{\psi_0} \pm \bigO{r} \ket{\psi_1}\bra{\psi_1}$,
        \item $\norm{P_{+} \widehat{Z} P_{-}} \leq \bigO{r}$,
        \item $\norm{\widehat{Z}} = 1$,
    \end{enumerate}
    where $r \coloneqq \frac{1}{\sqrt{G^{1-\epsilon(w)}}}$ and $\ket{\psi_0},\ket{\psi_1}$ are eigenstates of $\hat{X}(G)$.
\end{claim}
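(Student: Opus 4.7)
The plan is to verify the three parts of \clm{z-property} by directly computing matrix elements of $\widehat{Z}$ in the basis $\{\ket{\psi_0},\ket{\psi_1}\}$ and invoking the concentration bound in \thm{double-well-main}(iii) to control the errors. Part (iii) is immediate: since $\widehat{Z}$ is multiplication by the real function $\sgn_w$ with $\abs{\sgn_w(x)}\le 1$ everywhere and $\abs{\sgn_w(x)}=1$ for $\abs{x}\ge w$, the operator norm equals $1$.

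For (i), I would exploit the reflection symmetry of the double-well potential: $\psi_0(x)$ may be taken even and $\psi_1(x)$ odd, while $\sgn_w(-x)=-\sgn_w(x)$, so the diagonal entries $\braket{\psi_j|\widehat{Z}|\psi_j}=\int\abs{\psi_j(x)}^2\sgn_w(x)\,\d x$ vanish identically by parity, which is well within the $\pm\bigO{r}$ budget. For the off-diagonal, after fixing the sign of $\psi_1$ so that the constant in \thm{double-well-main}(iii) is $c=+1$, I write $\psi_0\psi_1=(\logiczero^2-\logicone^2)/2$. Since $\logicone(x)=\logiczero(-x)$ and $\sgn_w$ is odd, the substitution $y=-x$ gives $\int\logicone(x)^2\sgn_w(x)\,\d x=-\int\logiczero(x)^2\sgn_w(x)\,\d x$, so $\braket{\psi_0|\widehat{Z}|\psi_1}=\int\logiczero(x)^2\sgn_w(x)\,\d x$. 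I would split this into the regions $\{x>w\}$, $\{x<-w\}$, and $\{\abs{x}\le w\}$: the first equals $1-\int_{-1}^{w}\logiczero^2$, and the other two are bounded in absolute value by $\int_{-1}^{w}\logiczero^2$. Applying \thm{double-well-main}(iii) at $x_0=w$ controls this by $\bigO{e^{-(2-o(1))\agmon{a}{w}/h}}$, which I will identify with $\bigO{r^2}$ for a suitable $\epsilon(w)$. Hence $\braket{\psi_0|\widehat{Z}|\psi_1}=1\pm\bigO{r^2}\subseteq 1\pm\bigO{r}$.

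For (ii), I would use the orthogonal decomposition
\[
\norm{P_{+}\widehat{Z}\ket{\psi_k}}^2=\braket{\psi_k|\widehat{Z}^2|\psi_k}-\sum_{k'\in\{0,1\}}\abs{\braket{\psi_k|\widehat{Z}|\psi_{k'}}}^2.
\]
The first term equals $1-\int\abs{\psi_k}^2(1-\sgn_w^2)\,\d x$; since $1-\sgn_w^2$ is supported on $\{\abs{x}\le w\}$ and lies in $[0,1]$, this term sits in $[1-\int_{\abs{x}\le w}\abs{\psi_k}^2\,\d x,\,1]$. Writing $\psi_k=(\logiczero\pm\logicone)/\sqrt{2}$ and combining \thm{double-well-main}(iii) with Cauchy--Schwarz bounds the deficit by $\bigO{r^2}$, so the first term lies in $[1-\bigO{r^2},1]$. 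The second term evaluates to $\abs{\braket{\psi_k|\widehat{Z}|\psi_{1-k}}}^2=1\pm\bigO{r^2}$ using the computation for (i) together with the vanishing diagonal entries. Their difference is therefore $\bigO{r^2}$ in absolute value, which gives $\norm{P_{+}\widehat{Z}\ket{\psi_k}}\le\bigO{r}$ for $k\in\{0,1\}$, and hence $\norm{P_{+}\widehat{Z}P_{-}}\le\bigO{r}$ by applying the triangle inequality over the two columns.

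The main obstacle will be the precise choice of $\epsilon(w)$. The concentration bound in \thm{double-well-main}(iii) carries an $o(1)$ that tends to zero as $h\to 0$, and \clm{x-property} identifies $G$ with $e^{S_0/h}$ only up to a $1\pm o(1)$ factor in the exponent. The leading deviation to absorb is $S_0/2-\agmon{a}{w}=\int_0^w\sqrt{\dwfunc(z)}\,\d z=\Theta(w)$, suggesting $\epsilon(w)=(2/S_0)\int_0^w\sqrt{\dwfunc(z)}\,\d z$ plus a vanishing slack to soak up the $h$-dependent $o(1)$ terms uniformly. Since both $o(1)$ contributions are independent of $w$ and vanish as $G\to\infty$, such an $\epsilon(w)=\Theta(w)$ satisfies $\lim_{w\to 0}\epsilon(w)=0$ as required, completing the argument.
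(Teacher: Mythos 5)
Your proof is correct, and it reaches the claim by a somewhat different route than the paper. The paper works entirely in the $\{\ket{\logiczero},\ket{\logicone}\}$ basis: it applies \thm{double-well-main}(iii) at $x_0=w$ to show $\braket{\logiczero|\widehat{Z}|\logiczero}\ge 1-\bigO{r^2}$ (and symmetrically $\braket{\logicone|\widehat{Z}|\logicone}\le -1+\bigO{r^2}$), then uses the fact that $\norm{\widehat{Z}\ket{\logiczero}}\le 1$ together with a Pythagorean decomposition of $\widehat{Z}\ket{\logiczero}$ to bound both the cross term $\abs{\braket{\logicone|\widehat{Z}|\logiczero}}$ and the leakage $\norm{P_+\widehat{Z}\ket{\logiczero}}$ by $\sqrt{1-(1-\bigO{r^2})^2}=\bigO{r}$, after which (i) and (ii) follow by a change of basis. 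You instead compute matrix elements directly in the $\{\ket{\psi_0},\ket{\psi_1}\}$ basis and exploit the parity of the eigenstates (even $\psi_0$, odd $\psi_1$) against the odd function $\sgn_w$, which makes both diagonal entries vanish exactly rather than merely being $\bigO{r}$; the identity $\psi_0\psi_1=(\logiczero^2-\logicone^2)/2$ together with the reflection $\logicone(x)=\logiczero(-x)$ then reduces the off-diagonal entry to the same integral $\int\logiczero^2\sgn_w$ the paper estimates. For (ii), where the paper relies on the Pythagorean argument, you use the orthogonal decomposition $\norm{P_+\widehat{Z}\ket{\psi_k}}^2=\braket{\psi_k|\widehat{Z}^2|\psi_k}-\sum_{k'}\abs{\braket{\psi_k|\widehat{Z}|\psi_{k'}}}^2$; this also works, since the first term is pinned to $1-\bigO{r^2}$ because $1-\sgn_w^2$ is supported on $\abs{x}\le w$. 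The net effect is the same bound $\bigO{r}$, but your parity argument gives a slightly sharper intermediate statement (zero diagonal, off-diagonal $1\pm\bigO{r^2}$) at no extra cost, and your closing discussion of $\epsilon(w)$ correctly identifies the dominant $\Theta(w)$ contribution from $S_0/2-\agmon{x^\star}{w}$ that drives the $\lim_{w\to 0}\epsilon(w)=0$ requirement.
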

\begin{proof}
We focus on (i) and (ii) since (iii) is obvious from the definition of $\widehat{Z}$.
To begin, recall from \thm{double-well-main}(iii) that there exists a $\ket{\logiczero} \coloneqq \frac{1}{\sqrt2} \left( \ket{\psi_0} + c \ket{\psi_1} \right)$ where $c \in \{\pm 1\}$ such that it is bounded away from $x < x^\star$.
To be more specific, plug $x_0 \coloneqq w$ into \thm{double-well-main}(iii):
$$
    \int_{-1}^{w} \abs{\logiczero(x)}^2 \dee x \leq \bigO{e^{-(2-o(1))\frac{\agmon{x^\star}{w}}{h}}} = \bigO{\frac{1}{G^{1-\epsilon(w)}}},
$$
where $\agmonsymbol \coloneqq \agmonsymbolandparams{\dwfunc}{0}$.
The last equality above comes from the choice of $h$ in the definition of $\widehat{X}$ such that $e^{\frac{S_0}{h}}=G_\circ$ and $G \in \bigO{G_\circ^{1+o(1)}} \cap \bigOmega{G_\circ^{1-o(1)}}$, and the fact that i) $\agmon{x^\star}{-x^\star} = 2\agmon{x^\star}{0}$ from the symmetry of $\dwfunc$ and ii) $\agmon{x^\star}{w} \to \agmon{x^\star}{0}$ when $w \to 0$ since $\dwfunc(x) > 0$ for any $x > 0$.
Therefore,
\begin{align*}
    \braket{\logiczero|\widehat{Z}|\logiczero} &= \int_{-1}^{1} \sgn_w(x) \logiczero(x)^2 \\
    &= \int_{-1}^{w} \sgn_w(x) \logiczero(x)^2 + \int_{w}^{1} \sgn_w(x) \logiczero(x)^2 \\
    &\geq  \int_{-1}^{w} (-1)\cdot \logiczero(x)^2 + \int_{w}^{1} \sgn_w(x) \logiczero(x)^2 \\
    &= -\bigO{\frac{1}{G^{1-\epsilon(w)}}} + \int_{w}^{1} \logiczero(x)^2 \\
    &\geq 1 - \bigO{\frac{1}{G^{1-\epsilon(w)}}}.
\end{align*}
A similar argument gives us 
$$
    \braket{\logicone|\widehat{Z}|\logicone} \leq -1 + \bigO{\frac{1}{G^{1-\epsilon(w)}}},
$$
where $\ket{\logicone} \coloneqq \frac{1}{\sqrt2} \left( \ket{\psi_0} - c \ket{\psi_1} \right)$.
On the other hand, note that $\norm{\widehat{Z}\ket{\logiczero}} \leq 1$ since $\norm{\widehat{Z}}=1$.
We have
$$
    \braket{\logicone | \widehat{Z} | \logiczero} = \norm{\ket{\logicone}\bra{\logicone}\widehat{Z}\ket{\logiczero}} \leq \norm{(I - \ket{\logiczero}\bra{\logiczero})\widehat{Z}\ket{\logiczero}} \leq \sqrt{1 - \left( 1 - \bigO{\frac{1}{G^{1-\epsilon(w)}}} \right)^2 } = \bigO{\frac{1}{\sqrt{G^{1-\epsilon(w)}}}},
$$
where the last ``$\leq$'' follows from the Pythagorean theorem on the triangle consisting of $\widehat{Z}\ket{\logiczero}$ and its projections $\ket{\logiczero}\bra{\logiczero}\widehat{Z}\ket{\logiczero}$ and $(I-\ket{\logiczero}\bra{\logiczero})\widehat{Z}\ket{\logiczero}$.
The same argument can be used again to obtain the $\bigO{\frac{1}{\sqrt{G^{1-\epsilon(w)}}}}$ upper bound  for $\norm{P_{+}\widehat{Z}\ket{\logiczero}}$ and $\norm{P_{+}\widehat{Z}\ket{\logicone}}$.
Using the estimate above we conclude the proof.
\end{proof}

\paragraph{Construction of $\widehat{H}$.}
Let $G^\star$ be a parameter to be determined at the very end.
Define $G_u \coloneqq G^\star / a_u$ for $u=1,\dots,n$.
For notational convenience also define $G_{\rm min} = \min_{1\leq u\leq n} G_u$.
We have  $G_{\rm min} \geq G_{\star} / M_1$ by the assumption of coefficients $\{a_u\}$.
Recall that
$$
    H = \sum_{1 \leq u \leq n} (a_u X_u + b_u Z_u) + \sum_{1 \leq u,v \leq n} b_{u,v} Z_u Z_v.
$$
We construct the corresponding \sdg\ operator $\widehat{H}$: $H^2(\mathbb{R}^n) \to H^2(\mathbb{R}^n)$ as
$$
    \widehat{H} \coloneqq \widehat{H_X} + \widehat{H_Z},
$$
where
\begin{align*}
    \widehat{H_X} &\coloneqq \sum_{1 \leq u \leq n} a_u \widehat{X_u}(G_u), \\
    \widehat{H_Z} &\coloneqq \sum_{1 \leq u \leq n} b_u \widehat{Z_u} + \sum_{1 \leq u,v \leq n} b_{u,v} \widehat{Z_u}\widehat{Z_v},
\end{align*}
in which $\widehat{X_k}(G_k)\coloneqq I^{\otimes k-1} \otimes \widehat{X}(G_k) \otimes I^{\otimes n-k}$ is $\widehat{X}(G_k)$ acting on the $k$th coordinate;
similarly, $\widehat{Z_k}$ is $\widehat{Z}$ acting on the $k$th coordinate and $\widehat{Z_j}\widehat{Z_k}$ is two $\widehat{Z}$ acting on the $j$th and the $k$th coordinates simultaneously.
Let $P_{-}^G$ be the projector onto the space $\mathcal{T}_{-}^G \coloneqq \spn(\ket{\psi_0^G},\ket{\psi_1^G})$ and $P_{+}^G$ be the projector onto the space $\mathcal{T}_{+}^G \coloneqq (\mathcal{T}_{-}^G)^{\perp} = \spn(\ket{\psi_2^G},\ket{\psi_3^G},\dots)$, where $\{\ket{\psi_k^G}\}_{k \in \mathbb{N}}$ are eigenstates of $\widehat{X}(G)$.
We omit the superscript ``$G$'' when it is clear from the context.
The reason why we can relate $H$ and $\widehat{H}$ is that the ``low-energy'' part of $\widehat{H}$ is basically $H$.
To be more specific, define $\mathcal{S} \coloneqq \bigotimes_{1 \leq k \leq n} \mathcal{T}_{-}^{G_k}$ as the ``low-energy'' subspace:
\begin{claim} \label{clm:main-reduction.hvsh}
    Define unitary $W \coloneqq \bigotimes_{1 \leq k \leq n} (\ket{\psi_0^{G_k}}\bra{-} + \ket{\psi_1^{G_k}}\bra{+})$ that shifts $\{\ket{-},\ket{+}\}$ basis to $\{\ket{\psi_0^{G_k}},\ket{\psi_1^{G_k}}\}$ basis.
    Then,
    $$
        \norm{\widehat{H}_{\mathcal{S}} - W (H + cI) W^\dagger} \leq \bigO{n^2 M_1 r},
    $$
    where $\widehat{H}_{\mathcal{S}}$ is $P_{\mathcal{S}} \widehat{H} P_{\mathcal{S}}$ restricted on $\mathcal{S}$, and $c \coloneqq (E_0+1) \sum_{1 \leq u \leq n} a_u$ and $r \coloneqq \frac{1}{\sqrt{G_{\rm min}^{1-\epsilon(w)}}}$ with $\epsilon(\cdot)$ defined in \clm{z-property}.
\end{claim}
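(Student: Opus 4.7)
\textbf{Proof proposal for \clm{main-reduction.hvsh}.}
The plan is to split the analysis into two parts, one for $\widehat{H_X}$ and one for $\widehat{H_Z}$, and argue the first part is exact while the error of the second part is controlled by \clm{z-property}.

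\emph{Step 1 (exact treatment of $\widehat{H_X}$).} For each $u$, the subspace $\mathcal{T}_-^{G_u} = \spn(\ket{\psi_0^{G_u}},\ket{\psi_1^{G_u}})$ is exactly the span of the two lowest eigenstates of $\widehat{X}(G_u)$. Since $E_1^{G_u} - E_0^{G_u} = 2$ by construction, the spectral decomposition gives
\begin{equation*}
P_-^{G_u}\widehat{X}(G_u)P_-^{G_u} = (E_0^{G_u}+1)P_-^{G_u} + \bigl(\ket{\psi_1^{G_u}}\bra{\psi_1^{G_u}} - \ket{\psi_0^{G_u}}\bra{\psi_0^{G_u}}\bigr) = (E_0^{G_u}+1)P_-^{G_u} + W_u X_u W_u^\dagger,
\end{equation*}
using $W_u\ket{+}=\ket{\psi_1^{G_u}}$ (eigenvalue $+1$) and $W_u\ket{-}=\ket{\psi_0^{G_u}}$ (eigenvalue $-1$). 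Tensoring with $P_-^{G_k}=W_k I W_k^\dagger$ on all other factors and summing,
\begin{equation*}
P_{\mathcal{S}}\widehat{H_X}P_{\mathcal{S}} \;=\; \sum_{u}a_u\!\left((E_0^{G_u}+1)P_{\mathcal{S}} + W X_u W^\dagger\right) \;=\; W\!\left(cI + \sum_u a_u X_u\right)\!W^\dagger,
\end{equation*}
with $c = \sum_u a_u (E_0^{G_u}+1)$ (this is the $c$ claimed; the statement of the lemma suppresses the $u$-dependence of $E_0$ for brevity). Thus $\widehat{H_X}$ contributes no error.

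\emph{Step 2 (perturbative treatment of $\widehat{H_Z}$).} For each single-site term $\widehat{Z_u}$, since it acts as identity on the factors $k\neq u$,
\begin{equation*}
P_{\mathcal{S}}\widehat{Z_u}P_{\mathcal{S}} = \Bigl(\bigotimes_{k\neq u}P_-^{G_k}\Bigr)\otimes\bigl(P_-^{G_u}\widehat{Z}P_-^{G_u}\bigr).
\end{equation*}
By \clm{z-property}(i), $P_-^{G_u}\widehat{Z}P_-^{G_u} = W_u Z_u W_u^\dagger + E_u$ with $\|E_u\|\le\bigO{r}$ (here I use $r = 1/\sqrt{G_{\min}^{1-\epsilon(w)}}$ dominating each $r_u$). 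Thus $P_{\mathcal{S}}\widehat{Z_u}P_{\mathcal{S}} = WZ_u W^\dagger + \widetilde{E}_u$ with $\|\widetilde{E}_u\|\le\bigO{r}$. Summing the $n$ single-site terms gives error $\bigO{nM_1 r}$.

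\emph{Step 3 (two-site terms).} For $u\neq v$, the same tensor factorisation yields
\begin{equation*}
P_{\mathcal{S}}\widehat{Z_u}\widehat{Z_v}P_{\mathcal{S}} = \Bigl(\bigotimes_{k\notin\{u,v\}}P_-^{G_k}\Bigr) \otimes \bigl(W_u Z_u W_u^\dagger + E_u\bigr) \otimes \bigl(W_v Z_v W_v^\dagger + E_v\bigr).
\end{equation*}
Expanding, the leading term is $W(Z_u Z_v)W^\dagger$, and the cross terms $W_u Z_u W_u^\dagger\otimes E_v$, $E_u\otimes W_v Z_v W_v^\dagger$, $E_u\otimes E_v$ each have norm $\bigO{r}$ (using $\|W_k Z_k W_k^\dagger\|\le 1$). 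For the $u=v$ diagonal contributions (if present), \clm{z-property} combined with $\|P_+\widehat{Z}P_-\|\le\bigO{r}$ shows $P_-\widehat{Z}^2 P_- = (P_-\widehat{Z}P_-)^2+\bigO{r^2} = WZ^2 W^\dagger+\bigO{r} = P_- + \bigO{r}$, consistent with $Z^2=I$. Summing all $\bigO{n^2}$ two-body terms, each with coefficient at most $M_1$, contributes error $\bigO{n^2 M_1 r}$.

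Combining Steps 1--3 and using the triangle inequality gives $\|\widehat{H}_{\mathcal{S}} - W(H+cI)W^\dagger\|\le\bigO{n^2 M_1 r}$, as desired. The only nontrivial bookkeeping is the tensor-product expansion in Step 3 and verifying that the cross-term norms behave additively in the number of terms rather than multiplicatively, but this is a direct consequence of $\|W_k Z_k W_k^\dagger\|\le 1$ together with the operator-norm submultiplicativity of tensor products. Everything else reduces to elementary spectral calculus and direct application of \clm{z-property}.
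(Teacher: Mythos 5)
Your proof is correct and follows essentially the same route as the paper: split $\widehat{H}$ into $\widehat{H_X}$ and $\widehat{H_Z}$, show the projected $\widehat{H_X}$-part agrees \emph{exactly} with $W(H_X + cI)W^\dagger$ using $E_1^{G_u}-E_0^{G_u}=2$, and then bound the projected $\widehat{H_Z}$-part via \clm{z-property}(i). Your explicit tensor-factorisation of the two-body terms in Step~3 is more detailed bookkeeping than the paper gives (the paper merely cites \clm{z-property}(i) and sums over the $\bigO{n^2}$ coefficients), but the underlying mechanism is identical. One genuine improvement worth noting: you correctly observe that the constant should be $c=\sum_u a_u(E_0^{G_u}+1)$ with a $u$-dependent $E_0^{G_u}$, since each $\widehat{X}(G_u)$ has its own parameter $h_u$ and thus its own ground energy; the paper's displayed formula $c=(E_0+1)\sum_u a_u$ elides this dependence. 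This changes nothing downstream—$c$ merely needs to be a $\poly(n)$-time-computable shift—but your version is the technically accurate one.
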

\begin{proof}
    We decompose $H$ as $H = H_X + H_Z$ the same way for the decomposition $\widehat{H} \coloneqq \widehat{H_X} + \widehat{H_Z}$.
    By definition we have $P_{-} \widehat{X} P_{-} = E_0 \ket{\psi_0} \bra{\psi_0} + E_1 \ket{\psi_1} \bra{\psi_1}$.
    Note that $E_1 - E_0 = 2$ by \clm{x-property}.
    Therefore,
    \begin{equation} \label{eq:main-reduction.phxp}
        P_{\mathcal{S}} \widehat{H_X} P_{\mathcal{S}} = W \left(H_X + I (E_0+1) \sum_{1 \leq u \leq n} a_u\right) W^\dagger.
    \end{equation}
    Similarly, by \clm{z-property}(i), we have
    \begin{equation} \label{eq:main-reduction.phzp}
        \norm{P_{\mathcal{S}} \widehat{H_Z} P_{\mathcal{S}} - W H_Z W^\dagger} \leq \bigO{r} \left( \sum_{1 \leq u \leq n} b_u + \sum_{1 \leq u,v \leq n} b_{u,v} \right) \leq \bigO{n^2 M_1 r}.
    \end{equation}
    Combining \eqn{main-reduction.phxp} and \eqn{main-reduction.phzp} we obtain the bound.
\end{proof}

Now our goal is to apply \lem{pert-spec} on $\widehat{H}$ to argue that the ``high-energy'' part of $\widehat{H}$ does not matter (much).
Before that, we need to estimate relevant parameters $R \coloneqq P_{\mathcal{S}^\perp} \widehat{H} P_{\mathcal{S}}$ and $\Delta \coloneqq \mineigen(\widehat{H}_{\mathcal{S}^\perp}) - \maxeigen(\widehat{H}_{\mathcal{S}})$, where $\widehat{H}_{\mathcal{S}^\perp}$ is $P_{\mathcal{S}^\perp} \widehat{H} P_{\mathcal{S}^\perp}$ restricted on $\mathcal{S}^\perp$.
\begin{claim}[Estimate of $\norm{R}$] \label{clm:main-reduction.r-estimate}
    Inherit $r$ from \clm{main-reduction.hvsh}.
    We have $\norm{P_{\mathcal{S}^\perp} \widehat{X}_k P_{\mathcal{S}}}$, $\norm{P_{\mathcal{S}^\perp} \widehat{Z}_k P_{\mathcal{S}}}$ and $\norm{P_{\mathcal{S}^\perp} (\widehat{Z}_j \widehat{Z}_k) P_{\mathcal{S}}}$ are bounded by $\bigO{r}$.
    Consequently, $\norm{R} = \norm{P_{\mathcal{S}^\perp} \widehat{H} P_{\mathcal{S}}} \leq \bigO{n^2 M_1 r}$.
\end{claim}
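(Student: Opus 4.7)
The plan is to bound each of the three building-block operator norms separately and then combine them via the triangle inequality on $\widehat{H}$. For $\widehat{X}_k$ the bound is immediate: $\widehat{X}(G_k)$ is diagonal in its own eigenbasis $\{\ket{\psi_j^{G_k}}\}$, so it preserves $\mathcal{T}_-^{G_k} = \spn(\ket{\psi_0^{G_k}}, \ket{\psi_1^{G_k}})$, and since $\widehat{X}_k$ is the identity on every coordinate $\neq k$, it preserves $\mathcal{S} = \bigotimes_\ell \mathcal{T}_-^{G_\ell}$. Hence $P_{\mathcal{S}^\perp}\widehat{X}_k P_{\mathcal{S}} = 0$, which is trivially $\leq \bigO{r}$.

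For $\widehat{Z}_k$, which also acts as the identity on coordinates other than $k$, the norm collapses to the single-coordinate norm $\norm{P_+^{G_k}\widehat{Z}P_-^{G_k}}$. By \clm{z-property}(ii) this is $\bigO{1/\sqrt{G_k^{1-\epsilon(w)}}}$, and since $G_{\min} \leq G_k$ we have $1/\sqrt{G_k^{1-\epsilon(w)}} \leq r$, giving $\bigO{r}$.

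For the two-body term $\widehat{Z}_j\widehat{Z}_k$ with $j\neq k$, I would decompose each factor, restricted to inputs in $\mathcal{T}_-$ on its active coordinate, as $\widehat{Z}|_{\mathcal{T}_-} = (P_-\widehat{Z}P_-) + (P_+\widehat{Z}P_-)$. Expanding $\widehat{Z}_j\widehat{Z}_k P_{\mathcal{S}}$ yields four tensor-product terms; the $(P_-, P_-)$ term stays inside $\mathcal{S}$ and is killed by $P_{\mathcal{S}^\perp}$, while each of the remaining three contains at least one ``leakage'' factor $P_+\widehat{Z}P_-$ of norm $\bigO{r}$ by \clm{z-property}(ii), paired with a bounded factor of norm at most $\norm{\widehat{Z}} = 1$ from \clm{z-property}(iii) (the doubly-leaking term even contributes $\bigO{r^2}$, which is absorbed). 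Sub-multiplicativity of the operator norm over the tensor factorization then gives $\norm{P_{\mathcal{S}^\perp}\widehat{Z}_j\widehat{Z}_k P_{\mathcal{S}}} \leq \bigO{r}$.

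Finally, I would apply the triangle inequality to $\widehat{H} = \sum_u a_u\widehat{X}_u + \sum_u b_u\widehat{Z}_u + \sum_{u,v} b_{u,v}\widehat{Z}_u\widehat{Z}_v$ with $|a_u|,|b_u|,|b_{u,v}| \leq M_1$: the $\widehat{X}_u$ contributions vanish, the $n$ single-body $\widehat{Z}_u$ contributions sum to $\bigO{nM_1 r}$, and the $\bigO{n^2}$ two-body $\widehat{Z}_u\widehat{Z}_v$ contributions sum to $\bigO{n^2 M_1 r}$, which dominates. This is essentially a bookkeeping lemma that plugs \clm{z-property} into a tensor-product expansion; there is no genuine obstacle, with the only mildly delicate point being to verify that the ``both-leak'' cross term in the two-body expansion is at worst $\bigO{r^2} = o(r)$ so the overall $\bigO{r}$ bound survives.
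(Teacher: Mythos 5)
Your proposal is correct and matches the paper's argument essentially term for term: $\widehat{X}_k$ vanishes because $\mathcal{S}$ is built from eigenstates of $\widehat{X}$, the $\widehat{Z}_k$ bound reduces to the single-coordinate leakage $\norm{P_+\widehat{Z}P_-} = \bigO{r}$ from \clm{z-property}(ii), and the two-body term expands into four tensor pieces of which only the ``stay in $P_-\otimes P_-$'' piece is killed by $P_{\mathcal{S}^\perp}$ while the other three contribute $\bigO{r}$, $\bigO{r}$, $\bigO{r^2}$. One small remark: you correctly attribute the $\bigO{r^2}$ bound to the \emph{both-leak} piece $(P_+\otimes P_+)$, whereas the paper's displayed equation labels that piece $(P_-\otimes P_-\otimes I^{\otimes n-2})$, which appears to be a typo (that piece has norm $\bigO{1}$ and is in fact annihilated once combined with the $P_-^{\otimes n-2}$ on the remaining coordinates).
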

\begin{proof}
    Without loss of generality assume $k=1$ and $j=2$.
    First, $P_{\mathcal{S}^\perp} \widehat{X}_1 P_{\mathcal{S}} = 0$ because $\mathcal{S}$ is spanned by the tensor of eigenstates of $\widehat{X}$.
    Next we examine $P_{\mathcal{S}^\perp} \widehat{Z}_1 P_{\mathcal{S}}$: \clm{z-property}(ii) implies 
    \begin{equation} \label{eq:main-reduction.pzp1}
        \norm{(P_{+} \otimes I^{\otimes n-1})\widehat{Z_1}P_{\mathcal{S}}} \leq \norm{(P_{+} \otimes I^{\otimes n-1})\widehat{Z_1}(P_{-} \otimes I^{\otimes n-1})} = \bigO{r}. 
    \end{equation}
    Write
    $
        P_{\mathcal{S}^\perp} = P_{+} \otimes I^{\otimes n-1} + P_{-} \otimes (I^{\otimes n-1} - P_{-}^{\otimes n-1})
    $
    and the only thing left to estimate is $(P_{-} \otimes (I^{\otimes n-1} - P_{-}^{\otimes n-1})) \widehat{Z_1} P_{\mathcal{S}}$:
    \begin{align} \label{eq:main-reduction.pzp2}
        (P_{-} \otimes (I^{\otimes n-1} - P_{-}^{\otimes n-1})) \widehat{Z_1} P_{\mathcal{S}} &= (P_{-} \otimes (I^{\otimes n-1} - P_{-}^{\otimes n-1})) (\widehat{Z} \otimes I^{\otimes n-1}) (P_{-} \otimes P_{-}^{\otimes n-1}) \nonumber \\
        &= (P_{-} \widehat{Z} P_{-}) \otimes ((I^{\otimes n-1} - P_{-}^{\otimes n-1})I^{\otimes n-1}P_{-}^{\otimes n-1} ) \nonumber \\
        &=  (P_{-} \widehat{Z} P_{-}) \otimes 0 = 0. 
    \end{align}
    Therefore $\norm{P_{\mathcal{S}^\perp} \widehat{Z}_1 P_{\mathcal{S}}} \leq \bigO{r}$ from \eqn{main-reduction.pzp1} and \ref{eq:main-reduction.pzp2}.
    Last we estimate $P_{\mathcal{S}^\perp} (\widehat{Z}_1 \widehat{Z}_2) P_{\mathcal{S}}$:
    Use \clm{z-property}(ii) again we have
    \begin{equation*}
        \begin{gathered}
            \norm{(P_{-} \otimes P_{-} \otimes I^{\otimes n-2}) (\widehat{Z_1} \widehat{Z_2}) P_{\mathcal{S}}} = \bigO{r^2} \\
            \norm{(P_{-} \otimes P_{+} \otimes I^{\otimes n-2}) (\widehat{Z_1} \widehat{Z_2}) P_{\mathcal{S}}} = \bigO{r} \\
            \norm{(P_{+} \otimes P_{-} \otimes I^{\otimes n-2}) (\widehat{Z_1} \widehat{Z_2}) P_{\mathcal{S}}} = \bigO{r}
        \end{gathered}
    \end{equation*}
    Write $P_{\mathcal{S}^\perp} = (P_{-} \otimes P_{-} + P_{-} \otimes P_{+} + (P_{+} \otimes P_{-}) \otimes I^{\otimes n-2} + P_{-} \otimes P_{-} \otimes  (I^{\otimes n-2} - P_{-}^{\otimes n-2})$.
    Similar to the argument in \eqn{main-reduction.pzp2} we have
    $$
        (P_{-} \otimes P_{-} \otimes  (I^{\otimes n-2} - P_{-}^{\otimes n-2}))(\widehat{Z_1}\widehat{Z_2}) P_{\mathcal{S}} = 0.
    $$
\end{proof}
\begin{claim}[Estimate of $\Delta$] \label{clm:main-reduction.delta-estimate}
    $\Delta \coloneqq \mineigen(\widehat{H}_{\mathcal{S}^\perp}) - \maxeigen(\widehat{H}_{\mathcal{S}}) \geq \bigOmega{\frac{G_{\rm min}^{1-o(1)}}{M_2}}$ if there exists a $\delta > 0$ such that $G_{\rm min} \geq (n^2 M_1 M_2)^{1+\delta}$ always hold.
\end{claim}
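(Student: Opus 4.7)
The plan is to exploit the tensor-product block structure that $\widehat{H_X}$ inherits from the modewise operators $\widehat{X}(G_u)$, and then treat $\widehat{H_Z}$ as a bounded perturbation of norm $\bigO{n^2 M_1}$. For each $A \subseteq \{1,\dots,n\}$, I will define $\mathcal{S}_A \coloneqq \bigotimes_u \mathcal{V}_u^A$ with $\mathcal{V}_u^A \coloneqq \mathcal{T}_+^{G_u}$ if $u \in A$ and $\mathcal{V}_u^A \coloneqq \mathcal{T}_-^{G_u}$ otherwise, so that $\mathcal{S} = \mathcal{S}_\emptyset$ and $\mathcal{S}^\perp = \bigoplus_{A \neq \emptyset} \mathcal{S}_A$. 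Because each $\widehat{X}(G_u)$ is diagonal in the eigenbasis $\{\ket{\psi_k^{G_u}}\}_k$, the operator $\widehat{H_X}$ preserves every $\mathcal{S}_A$, and one reads off
$$
\maxeigen(\widehat{H_X}|_{\mathcal{S}}) = \sum_u a_u E_1^{G_u}, \qquad \mineigen(\widehat{H_X}|_{\mathcal{S}_A}) \geq \sum_{u \in A} a_u E_2^{G_u} + \sum_{u \notin A} a_u E_0^{G_u}.
$$

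For nonempty $A$ the gap is $\sum_{u \in A} a_u(E_2^{G_u} - E_1^{G_u}) - \sum_{u \notin A} a_u (E_1^{G_u} - E_0^{G_u})$. Since $E_1^{G_u} - E_0^{G_u} = 2$ by \clm{x-property}, the second sum is at most $2 n M_1$. For the first sum, picking any single $u_0 \in A$ and invoking $a_{u_0} \geq 1/M_2$ together with $E_2^{G_{u_0}} - E_1^{G_{u_0}} \geq \Omega(G_{u_0}^{1-o(1)}) \geq \Omega(G_{\min}^{1-o(1)})$ from \clm{x-property} yields
$$
\mineigen(\widehat{H_X}|_{\mathcal{S}^\perp}) - \maxeigen(\widehat{H_X}|_{\mathcal{S}}) \geq \Omega(G_{\min}^{1-o(1)}/M_2) - 2nM_1.
$$

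Next I will absorb $\widehat{H_Z}$. Since $\norm{\widehat{Z}} = 1$ by \clm{z-property}(iii), each summand of $\widehat{H_Z}$ has operator norm at most $M_1$ and there are $\bigO{n^2}$ of them, giving $\norm{\widehat{H_Z}} \leq \bigO{n^2 M_1}$. The elementary variational bounds $\maxeigen(\widehat{H}_{\mathcal{S}}) \leq \maxeigen(\widehat{H_X}|_{\mathcal{S}}) + \norm{\widehat{H_Z}}$ and $\mineigen(\widehat{H}_{\mathcal{S}^\perp}) \geq \mineigen(\widehat{H_X}|_{\mathcal{S}^\perp}) - \norm{\widehat{H_Z}}$ then yield
$$
\Delta \geq \Omega(G_{\min}^{1-o(1)}/M_2) - \bigO{n^2 M_1}.
$$

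The last step is to verify that the first term dominates under the hypothesis $G_{\min} \geq (n^2 M_1 M_2)^{1+\delta}$. The semiclassical $o(1)$ in $E_2^{G_u} - E_1^{G_u} \geq \Omega(G_u^{1-o(1)})$ arises from the small-$h$ asymptotics of $\widehat{X}(G_u)$ and tends to $0$ as $G_u \to \infty$; hence for $G_{\min}$ large enough one may assume $1 - o(1) \geq 1/(1+\delta/2)$, so that $G_{\min}^{1-o(1)}/M_2 \geq (n^2 M_1)^{1+\delta/2} M_2^{\delta/2}$, which dwarfs the $\bigO{n^2 M_1}$ correction by an unbounded factor. The main anticipated subtlety is this interplay between the semiclassical $o(1)$ exponent and the polynomial slack $\delta$, but since the $o(1)$ decays monotonically as $G_{\min} \to \infty$ while $\delta > 0$ is fixed, the absorption goes through for all sufficiently large inputs, yielding $\Delta \geq \Omega(G_{\min}^{1-o(1)}/M_2)$ as required.
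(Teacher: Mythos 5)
Your proof is correct and takes essentially the same approach as the paper: both exploit the tensor-product block structure of $\widehat{H_X}$ on $\mathcal{S}\oplus\mathcal{S}^\perp$ to get a gap $\Omega(G_{\min}^{1-o(1)}/M_2) - O(nM_1)$, absorb $\widehat{H_Z}$ as a bounded perturbation of norm $O(n^2 M_1)$ via Weyl's inequality, and then check that the hypothesis $G_{\min}\geq (n^2M_1M_2)^{1+\delta}$ makes the leading term dominate. Bounding each nonempty block $\mathcal{S}_A$ from below via an arbitrary $u_0\in A$ with $a_{u_0}\geq 1/M_2$ is in fact marginally cleaner than the paper's choice of the index minimizing $a_u$, since it sidesteps any implicit monotonicity claim about which excitation is cheapest.
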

\begin{proof}
    Recall that $\mathcal{S} \coloneqq \bigotimes_{1 \leq k \leq n} \mathcal{T}_{-}^{G_k}$ and $\mathcal{T}_{-}^{G} = \spn(\ket{\psi_0^{G}},\ket{\psi_1^G})$.
    Without loss of generality let $a_1$ be the smallest among $a_u$.
    We immediately get 
    $$
        \begin{gathered}
            \mineigen(P_{\mathcal{S}^\perp}\widehat{H_X} P_{\mathcal{S}^\perp}) = a_1 E_2 + (a_2 + \cdots + a_n)E_0,\\
            \maxeigen(P_{\mathcal{S}}\widehat{H_X} P_{\mathcal{S}}) = (a_1 + \cdots + a_n)E_1.
        \end{gathered}
    $$
    Note that $\widehat{H_Z}$ is bounded since $\norm{\widehat{Z}_u} = \norm{\widehat{Z}_u \widehat{Z}_v} = 1$:
    $$
        \norm{\widehat{H_Z}} \leq \sum_{1 \leq u \leq n} \abs{b_u} + \sum_{1 \leq u,v \leq n} \abs{b_{u,v}}.
    $$
    Therefore by Weyl's inequality (\lem{weyl}), 
    \begin{align*}
        &\phantom{=} \ \  \mineigen(P_{\mathcal{S}^\perp} \widehat{H} P_{\mathcal{S}^\perp}) - \maxeigen(P_{\mathcal{S}} \widehat{H} P_{\mathcal{S}})  \\
        & \geq \left( \mineigen(P_{\mathcal{S}^\perp} \widehat{H_X} P_{\mathcal{S}^\perp})- \norm{P_{\mathcal{S}^\perp} \widehat{H_Z} P_{\mathcal{S}^\perp}} \right) - \left(\maxeigen(P_{\mathcal{S}} \widehat{H} P_{\mathcal{S}}) +\norm{P_{\mathcal{S}} \widehat{H_Z} P_{\mathcal{S}}}\right) \\
        & \geq a_1(E_2 - E_1) - (a_2 + \dots + a_n)(E_1 - E_0) - 2 \left( \sum_{1 \leq u \leq n} \abs{b_u} + \sum_{1 \leq u,v \leq n} \abs{b_{u,v}} \right) \\
        & \geq \frac{1}{M_2} (E_2 - E_1) - (n-1)M_1(E_1 - E_0) - 2n(n+1)M_1.
    \end{align*}
    We conclude the proof with $E_1 - E_0 = 2$ and $E_2 - E_1 = \bigOmega{G^{1-o(1)}}$ from \clm{x-property}.
\end{proof}
We are ready to apply \lem{pert-spec} on $\widehat{H}$ to obtain the following claim.

\begin{claim} \label{clm:main-res-1}
    \begin{equation} \label{eq.main-reduction.final}
        \abs{\eigen_k(H) - \eigen_k(\widehat{H}-cI)} \leq \bigO{n^2M_1 r},
    \end{equation}
    where $c \coloneqq (E_0+1) \sum_{1 \leq u \leq n} a_u$ and $r$ is defined in \clm{main-reduction.hvsh}.
\end{claim}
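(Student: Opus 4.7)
The plan is to chain together the three structural estimates already established—Claim~\ref{clm:main-reduction.hvsh} (that the $\mathcal{S}$-block $\widehat{H}_{\mathcal{S}}$ is close in operator norm to the unitary image $W(H+cI)W^\dagger$), Claim~\ref{clm:main-reduction.r-estimate} (that the off-diagonal block $R = P_{\mathcal{S}^\perp}\widehat{H}P_{\mathcal{S}}$ has norm $\bigO{n^2 M_1 r}$), and Claim~\ref{clm:main-reduction.delta-estimate} (that $\Delta > 0$)—through the two abstract perturbation tools in \sec{pert}, namely the block perturbation bound (\lem{pert-spec}) and Weyl's inequality (\lem{weyl}).

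First I would check that the hypotheses of \lem{pert-spec} are satisfied. With respect to the decomposition $\mathcal{H} = \mathcal{S} \oplus \mathcal{S}^\perp$ we may write $\widehat{H} = \bigl(\begin{smallmatrix} \widehat{H}_{\mathcal{S}} & R^\dagger \\ R & \widehat{H}_{\mathcal{S}^\perp} \end{smallmatrix}\bigr)$, with $\dim\mathcal{S} = 2^n < \infty$ (since each factor $\mathcal{T}_{-}^{G_k}$ is two-dimensional). \clm{main-reduction.delta-estimate} gives $\Delta > 0$ provided $G^\star$ is chosen so that $G_{\rm min} \geq (n^2 M_1 M_2)^{1+\delta}$ for some $\delta > 0$; I will make this choice now and postpone the final tuning of $G^\star$ to the end of the overall proof of \lem{main}. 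Applying \lem{pert-spec} with $\norm{R} \leq \bigO{n^2 M_1 r}$ then yields
\begin{equation}
    \abs{\eigen_k(\widehat{H}) - \eigen_k(\widehat{H}_{\mathcal{S}})} \leq \norm{R} \leq \bigO{n^2 M_1 r}
\end{equation}
for every $k \in \{0, 1, \dots, 2^n - 1\}$.

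Second, since both $\widehat{H}_{\mathcal{S}}$ and $W(H+cI)W^\dagger$ act self-adjointly on the same finite-dimensional space $\mathcal{S}$, \clm{main-reduction.hvsh} combined with Weyl's inequality (\lem{weyl}) gives
\begin{equation}
    \abs{\eigen_k(\widehat{H}_{\mathcal{S}}) - \eigen_k\bigl(W(H+cI)W^\dagger\bigr)} \leq \bigO{n^2 M_1 r}.
\end{equation}
Because $W$ is a unitary from the $n$-qubit Hilbert space onto $\mathcal{S}$, the conjugation $W(H+cI)W^\dagger$ has exactly the same spectrum as $H+cI$, so $\eigen_k\bigl(W(H+cI)W^\dagger\bigr) = \eigen_k(H) + c$. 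Chaining these bounds by the triangle inequality and rearranging $c$ yields $|\eigen_k(\widehat{H} - cI) - \eigen_k(H)| \leq \bigO{n^2 M_1 r}$, which is the claim.

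The only real obstacle is checking the applicability of \lem{pert-spec} all the way up to $k = 2^n - 1$: the block bound holds for every eigenvalue indexed by the dimension of the ``good'' block $\mathcal{S}$, which here is $2^n$, so the range of $k$ in \eqn{main-reduction.final} is exactly what the lemma delivers. Everything else is routine composition of error bounds already proved.
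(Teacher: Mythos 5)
Your proposal matches the paper's proof step for step: apply \lem{pert-spec} (using the bounds on $\norm{R}$ and $\Delta$ from \clm{main-reduction.r-estimate} and \clm{main-reduction.delta-estimate}) to compare $\eigen_k(\widehat{H})$ with $\eigen_k(\widehat{H}_{\mathcal{S}})$, then apply Weyl's inequality together with \clm{main-reduction.hvsh} and the unitarity of $W$ to compare $\eigen_k(\widehat{H}_{\mathcal{S}})$ with $\eigen_k(H)+c$, and chain the two bounds. No substantive differences.
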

\begin{proof}
\clm{main-reduction.r-estimate} and \ref{clm:main-reduction.delta-estimate} give $\norm{R} \leq \bigO{n^2M_1 r}$ and $\Delta \geq \bigOmega{G_{\rm min}^{1-o(1)}/M_2} > 0$.
Therefore by \lem{pert-spec}, the low-energy spectrum of $\widehat{H}$ matches that of $\widehat{H}_{\mathcal{S}}$ modulo some negligible error.
To be specific,
\begin{equation} \label{eq:main-reduction.eigen-htohs}
    \abs{\eigen_k(\widehat{H}) - \eigen_k(\widehat{H}_{\mathcal{S}})} \leq \norm{R} \leq \bigO{n^2M_1 r}
\end{equation}
for every $k \in \{0,1,\dots,2^n-1\}$.
Recall that \clm{main-reduction.hvsh} says $\norm{\widehat{H}_{\mathcal{S}} - W (H + cI) W^\dagger} \leq \bigO{n^2 M_1 r}$.
Applying Weyl's inequality (\lem{weyl}) it converts to
\begin{equation} \label{eq:main-reduction.eigen-hstoqubit}
    \abs{\eigen_k(\widehat{H}_{\mathcal{S}}) - (\eigen_k(H) + c)} \leq \bigO{n^2M_1 r}.
\end{equation}
Here we used the fact that the spectrum of $W (H + cI) W^\dagger$ is that of $H$ shifted by $c$ since $W$ is unitary.
The claim follows from \eqn{main-reduction.eigen-htohs} and \ref{eq:main-reduction.eigen-hstoqubit}.
\end{proof}

\begin{claim} \label{clm:main-res-2}
    \begin{equation*}
        \norm{e^{-iHt} - W^\dagger P_{\mathcal{S}} e^{-i(\widehat{H}-cI)t}W} \leq \bigO{n^3 (M_1 rt)^{3/2}}.
    \end{equation*}
\end{claim}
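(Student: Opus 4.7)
The plan is to insert an intermediate quantity $e^{-i\widehat{H}'_{\mathcal{S}}t}$, where $\widehat{H}' \coloneqq \widehat{H} - cI$, between $W^\dagger P_{\mathcal{S}} e^{-i\widehat{H}'t}W$ and $e^{-iHt}$, and bound each of the two resulting pieces via the perturbation tools of \sec{pert}. The bookkeeping is simplified by the fact that $W$ is a partial isometry with $W^\dagger W = I$ on $(\mathbb{C}^2)^{\otimes n}$ and $W W^\dagger = P_{\mathcal{S}}$ on $L^2$; in particular $W^\dagger P_{\mathcal{S}} = W^\dagger$, $P_{\mathcal{S}} W = W$, and conjugation $W^\dagger(\cdot)W$ preserves the operator norm of quantities supported on $\mathcal{S}$.

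For the first piece, note that the scalar shift $-cI$ does not alter the off-diagonal blocks, so $\norm{P_{\mathcal{S}^\perp}\widehat{H}' P_{\mathcal{S}}} = \norm{R} \leq \bigO{n^2 M_1 r}$ by \clm{main-reduction.r-estimate}. Applying \lem{pert-sim} to $\widehat{H}'$ with respect to the decomposition $\mathcal{S}\oplus\mathcal{S}^\perp$ gives
$$\norm{P_{\mathcal{S}} e^{-i\widehat{H}'t}P_{\mathcal{S}} - e^{-i\widehat{H}'_{\mathcal{S}}t}} \leq \tfrac{2\sqrt{2}}{3}(\norm{R}t)^{3/2} = \bigO{n^3(M_1 rt)^{3/2}},$$
and sandwiching by $W^\dagger(\cdot)W$ bounds the corresponding chunk of the decomposition by the same quantity. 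For the second piece, \clm{main-reduction.hvsh} gives $\norm{\widehat{H}'_{\mathcal{S}} - WHW^\dagger} \leq \bigO{n^2 M_1 r}$, which transfers under $W^\dagger(\cdot)W$ to $\norm{W^\dagger \widehat{H}'_{\mathcal{S}} W - H} \leq \bigO{n^2 M_1 r}$; \lem{pert-sim-diff} then yields $\norm{W^\dagger e^{-i\widehat{H}'_{\mathcal{S}}t}W - e^{-iHt}} = \norm{e^{-i(W^\dagger \widehat{H}'_{\mathcal{S}} W)t} - e^{-iHt}} \leq \bigO{n^2 M_1 rt}$. Combining these via the triangle inequality produces the claimed bound up to an additive $\bigO{n^2 M_1 rt}$, which is dominated by $\bigO{n^3(M_1 rt)^{3/2}}$ whenever $M_1 rt \geq 1/n^2$ (and otherwise only contributes a routine lower-order correction, absorbed by enlarging $P$ in the final parameter choice).

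The only nontrivial step is the $3/2$-scaling of the first piece, which requires \lem{pert-sim} rather than the naive linear Duhamel estimate of \lem{pert-sim-diff}. That in turn rests on \clm{trunc-leak}, i.e., the leakage bound $\norm{P_{\mathcal{S}^\perp}e^{-iHt}P_{\mathcal{S}}} \leq \sqrt{2\norm{R}t}$, which is the infinite-dimensional substitute for Schrieffer--Wolff highlighted in the introduction and the only ingredient that beats the naive linear-in-$t$ scaling. The remainder of the argument is routine triangle-inequality bookkeeping; the only care needed is that the on-diagonal block-approximation error from \clm{main-reduction.hvsh} is of the same $\bigO{n^2 M_1 r}$ order as $\norm{R}$, so the two pieces of error are comparable and can be absorbed together in the final parameter estimate.
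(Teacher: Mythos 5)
Your proof takes essentially the same route as the paper: split off $\widehat{H}_{\mathcal{S}}$ (or your $\widehat{H}'_{\mathcal{S}}$) as an intermediate, bound the block-truncation error by \lem{pert-sim}, bound the block-content error by \clm{main-reduction.hvsh} together with \lem{pert-sim-diff}, and combine. Your bookkeeping of the scalar shift $-cI$ (noting that $R$ is unchanged and that $\widehat{H}'_{\mathcal{S}} - WHW^\dagger = \widehat{H}_{\mathcal{S}} - W(H+cI)W^\dagger$) is correct and in fact a little cleaner than the paper's, which keeps the $e^{-ict}$ phase floating around.

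One thing you noticed that is worth stressing: the triangle inequality actually yields $\bigO{n^3(M_1rt)^{3/2} + n^2 M_1 rt}$, and in the regime where the lemma is eventually applied ($G^\star$ large, hence $r$ small) the \emph{linear} term $n^2 M_1 rt$ dominates, not the $3/2$-power term. So the bound as literally stated in the claim is only the full bound when $M_1 rt \gtrsim 1/n^2$, which is precisely not the interesting regime. The paper's own proof also produces both terms but never explicitly adds them, and its final parameter choice (``we need $n^3(M_1rt)^{3/2} \ll \epsilon_1$'') only constrains $G_{\rm min}$ using the $3/2$-power term; to cover the linear term one actually needs the slightly stronger constraint $G_{\rm min} \geq (n^2 M_1 t/\epsilon_1)^{2+\delta}$, which is still $\poly(n,M_1,M_2,1/\epsilon_1,t)$, so the final conclusion is unharmed. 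Your parenthetical ``absorbed by enlarging $P$'' gestures at exactly this fix but is a little vague; it would be cleaner to simply state the bound as $\bigO{n^3(M_1rt)^{3/2} + n^2 M_1 rt}$ and carry both terms into the parameter choice.
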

\begin{proof}
    \clm{main-reduction.r-estimate} gives $\norm{R} \leq \bigO{n^2M_1 r}$.
    Therefore by \lem{pert-sim} we have
    \begin{equation*}
        \norm{P_{\mathcal{S}}e^{-i\widehat{H}t}P_{\mathcal{S}} - e^{-i\widehat{H}_{\mathcal{S}}t}} \leq \bigO{\left(\norm{R}t\right)^{3/2}} = \bigO{n^3 (M_1 rt)^{3/2}}.
    \end{equation*}
    Plug the bound $\norm{\widehat{H}_{\mathcal{S}} - W (H + cI) W^\dagger} \leq \bigO{n^2 M_1 r}$ from \clm{main-reduction.hvsh} into \lem{pert-sim-diff}:
    \begin{equation*}
        \norm{e^{-i\widehat{H}_{\mathcal{S}}t} - e^{-ict} We^{-iHt}W^\dagger} = \norm{e^{-i\widehat{H}_{\mathcal{S}}t} - e^{-i(W(H+cI)W^\dagger)t}} \leq \bigO{n^2 M_1 r t}.
    \end{equation*}
    
\end{proof}

Note that we can eliminate $c$ in Eq.~\ref{eq.main-reduction.final} by replacing $\widehat{H}$ with $\widehat{H}-c$.
Finally, let us determine $G^\star$:
\begin{itemize}
    \item \clm{main-reduction.delta-estimate} requires $G_{\rm min} \geq (n^2 M_1 M_2)^{1+\delta}$ for some $\delta > 0$.
    \item To conclude the proof of (ii) from Eq.~\ref{eq.main-reduction.final}, we straightforwardly need $n^2M_1 r \ll \epsilon_1$.
    Recall that $r \coloneqq \frac{1}{\sqrt{G_{\rm min}^{1-\epsilon(w)}}}$ from \clm{z-property},
    which can be satisfied if $G_{\rm min} \geq \left( \frac{n^2 M_1}{\epsilon_1} \right)^{2+\delta}$ for some $\delta>0$ depending on $w$.
    \item Similar to the above, we need $n^3 (M_1 r t)^{3/2} \ll \epsilon_1$ to prove (iii), which can be satisfied if $G_{\rm min} \geq \left( \frac{n^4 M_1^2 t^2}{\epsilon_1^{4/3}} \right)^{1+\delta}$.
\end{itemize}
Therefore we can always choose $G^\star$ such that $G^\star \leq \poly(n,M_1,M_2,1/\epsilon_1,t)$ since $G^\star \leq G_{\rm min}M_1$.

\section*{Acknowledgment} 
We thank Mahathi Vempati and Kewen Wu for insightful feedbacks.
We thank Andrew Childs for helpful discussions on several related works and the computational complexity of simulating TIM.
This work was partially supported by the U.S. Department of Energy, Office of Science, Office of Advanced Scientific Computing Research, National Quantum Information Science Research Centers, Quantum Systems Accelerator, Accelerated Research in Quantum Computing under Award Number DE-SC002027, Air Force Office of Scientific Research under award number FA9550-21-1-0209, the U.S. National Science Foundation grant CCF-1955206 and CCF-1942837 (CAREER), a Simons Investigator award (Grant No. 825053), the Simons Quantum Postdoctoral Fellowship, and a Sloan research fellowship.

\bibliographystyle{alpha}
\bibliography{ref}

\end{document}